  \pgfplotsset{compat=1.16}
\providecommand{\clone}{\overline{N}} 
\providecommand{\cord}{\mathcal{CO}} 
\providecommand{\nord}{\mathcal{NO}} 
\providecommand{\figfont}{\footnotesize} 
  \providecommand{\ourscan}{ParIndexSCAN} %
  \providecommand{\ourscan}{GBBSIndexSCAN} %
\newenvironment{alglinebreaks}
  {\begin{varwidth}[t]{\linewidth}}
  {\end{varwidth}}
\providecommand{\algind}{\hskip\algorithmicindent} 
\algnewcommand{\InlineComment}[1]{\Statex \hskip\ALG@thistlm \null\hfill \(\triangleright\) #1}
\providecommand{\cvar}[1]{\mathit{#1}}  
\providecommand{\Pfor}[1]{\For{#1} \textbf{in parallel}} 
\newcommand*{\whp}{%
    \@ifnextchar{.}%
        {w.h.p}%
        {w.h.p.\@\xspace}%
}
\definecolor{PineGreen}{rgb}{0.0, 0.47, 0.44}
\definecolor{MidnightGreen}{rgb}{0.0, 0.29, 0.33}
\definecolor{TyrianPurple}{rgb}{0.4, 0.01, 0.24}
\titlespacing*{\section}{0pt}{3pt}{3pt}
\titlespacing*{\subsection}{0pt}{2pt}{2pt}
\begin{document}
\fancyhead{}

\title{Parallel Index-Based Structural Graph Clustering and Its Approximation}
\titlenote{This document is a non-commercial version of the paper appearing in
the 2021 ACM SIGMOD International Conference on Management of Data.}

\author{Tom Tseng}
\affiliation{%
  \institution{MIT CSAIL}}
\email{tomtseng@csail.mit.edu}

\author{Laxman Dhulipala}
\affiliation{%
  \institution{MIT CSAIL}}
\email{laxman@mit.edu}

\author{Julian Shun}
\affiliation{%
  \institution{MIT CSAIL}}
\email{jshun@mit.edu}


\begin{abstract}
  SCAN (Structural Clustering Algorithm for Networks) is a well-studied,
widely used graph clustering algorithm. For large graphs, however,
sequential SCAN variants are prohibitively slow, and parallel SCAN
variants do not effectively share work among queries with different
SCAN parameter settings.  Since users of SCAN often explore many
parameter settings to find good clusterings, it is worthwhile to
precompute an index that speeds up queries.

This paper presents a practical and provably efficient parallel
index-based SCAN algorithm based on GS*-Index, a recent sequential
algorithm. Our parallel algorithm improves upon the asymptotic work of
the sequential algorithm by using integer sorting.  It is also highly
parallel, achieving logarithmic span (parallel time) for both index
construction and clustering queries.  Furthermore, we apply
locality-sensitive hashing (LSH) to design a novel approximate SCAN
algorithm and prove guarantees for its clustering behavior.

We present an experimental evaluation of our algorithms on
large real-world graphs. On a 48-core machine with two-way
hyper-threading, our parallel index construction achieves
50--151$\times$ speedup over the construction of GS*-Index. In fact,
even on a single thread, our index construction algorithm is faster
than GS*-Index.  Our parallel index query implementation achieves
5--32$\times$ speedup over GS*-Index queries across a range of SCAN
parameter values, and our implementation is always faster than ppSCAN,
a state-of-the-art parallel SCAN algorithm. Moreover, our experiments
show that applying LSH results in faster index construction
while maintaining good clustering quality.

\end{abstract}

\maketitle

\section{Introduction}\label{chapter:intro}

In data mining and unsupervised learning, clustering is a fundamental technique
that organizes data into meaningful
groups.
Because much real-world data can be represented as graphs,
there is significant practical and theoretical interest in
\emph{graph clustering}, in which the goal is to partition the
vertices of a graph into clusters such that ``similar'' vertices fall
into the same cluster~\cite{bansal2004correlation, demaine2006correlation, schaeffer2007graph, pan2015parallel, shun2016parallel, yin2017local, bateni2017affinity, azad2018hipmcl}.
In particular, a good clustering usually has many edges that
fall within clusters and few edges that connect different
clusters.
Graph clustering is a popular problem with a wide range of
applications, including social and biological network
analysis~\cite{girvan2002community}, load balancing in
distributed systems~\cite{aydin2019distributed}, image
segmentation~\cite{tolliver2006graph}, natural language
processing~\cite{biemann2006chinese}, and recommendation
systems~\cite{bellogin2012using}.

One well-known approach to graph clustering is \emph{structural clustering}, which
Xu et al.\ first introduced via the Structural Clustering Algorithm for Networks
(SCAN)~\cite{xu2007scan}.
Structural clustering exploits the idea that vertices whose neighbor sets
resemble each other are ``similar,'' a type of homophily that is often satisfied in practice.
The approach is unique in that it also finds
\emph{hub} vertices that connect different clusters, as well as \emph{outlier}
vertices that lack strong ties to any cluster. Researchers have used SCAN
to find meaningful clusters in biological
data~\cite{mete2008structural, martha2011constructing, liu2011translating,
ding2012atbionet} and web data~\cite{papadopoulos2009leveraging,
papadopoulos2010graph, papadopoulos2010image, lin2010hierarchical,
schinas2015visual, schinas2015multimodal}.

SCAN as Xu et al.\ originally described it suffers from two issues: (1)
the costliness of sequentially computing similarities among
all adjacent vertices, and (2) the costliness of tuning the parameters of the
algorithm to achieve good clustering quality. Many researchers have developed
variants of SCAN to address these issues. To alleviate issue (1),
some variants exploit parallelism~\cite{chen2013pscan,
zhao2013pscan, stovall2014gpuscan, zhou2015sparkscan, takahashi2017scan,
che2018parallelizing, mai2018scalable} or introduce
algorithmic optimizations like pruning unnecessary similarity
computations~\cite{shiokawa2015scan++, chang2017pscan, che2018parallelizing}. To
alleviate issue (2), some variants precompute an index from which
computing the clusterings for different parameter values is
fast~\cite{bortner2010progressive,huang2012revealing,wen2017efficient}.
To be efficient on large graphs, SCAN-based algorithms should address
both issues, which existing algorithms fail to do.

This paper addresses the aforementioned issues by presenting a new parallel
index-based SCAN algorithm based on the sequential GS*-Index SCAN
algorithm~\cite{wen2017efficient}. Our algorithm achieves the same work bounds
as GS*-Index and is highly parallel, achieving logarithmic span (parallel time)
with high probability (\whp).\footnote{The \emph{work} of an
algorithm is the number of operations it performs. The \emph{span}
(parallel time) of an algorithm is the length of its longest
sequential dependence. We use \emph{with high probability} (\whp) to describe events
    that occur with probability at least $1 - 1/n^c$ where $n$ is the input size
and $c$ is some positive real number.} The key ingredients to achieve
our strong time bounds are the careful use of doubling search, as well as
parallel algorithms for graph connectivity and hash tables.
We also show how using matrix multiplication on dense graphs and using integer sort
improve the index construction work bound compared to GS*-Index's bound.

To further improve performance, we present new approximate algorithms
that use locality-sensitive hashing (LSH) to speed up similarity computation.
We provide a non-trivial theoretical analysis of the accuracy of our algorithms.
Our experiments show that using LSH speeds up index construction without
sacrificing clustering quality. We summarize the asymptotic running time bounds
for index construction in \cref{tab:running-times}.

We present optimized implementations of our algorithms.
The most important optimizations are a merge-based parallel triangle
counting algorithm described by Shun and Tangwongsan~\cite{shun2015multicore}
to compute similarities; concurrent union-find to compute connectivity for queries; and, for our approximate algorithms,
a heuristic to avoid using LSH
on low-degree vertices that would not benefit from approximation.
In our
experiments, our index construction algorithm achieves 50--151$\times$
speedup over the construction of GS*-Index for several large
real-world graphs on a machine with 48 cores and two-way
hyper-threading. In fact, our index construction algorithm is faster
than GS*-Index even when we run our algorithm on a
single thread. Furthermore, our parallel index query implementation,
which extracts a clustering for a specific set of parameters
from the index, achieves 5--32$\times$ speedup over
GS*-Index queries across a range of SCAN parameter values. Our implementation
also achieves faster query times on all tested parameter values
compared to ppSCAN~\cite{che2018parallelizing}, a state-of-the-art
parallel SCAN algorithm.

\begin{table}
  \vspace{-1pt}
  {\figfont
  \begin{tabular}{lll}
  \toprule
  {\bf Description} & {\bf Work} & {\bf Span} \\
  \midrule
  Exact index, weighted graph & $O((\alpha + \log n)m)$ \whp & $O(\log n)$ \whp \\
  \midrule[0.1pt]
  \multirow{2}{*}{Exact index, unweighted graph}
    & $O((\alpha + \log \log n) m )$ \whp & $O(\log n)$ \whp \\
    & $O(\alpha m)$ \whp & $O(n^\beta)$ \whp \\
  \midrule[0.1pt]
  \multirow{2}{*}{Approximate index}
    & $O((k + \log \log n) m )$ \whp & $O(\log n)$ \whp \\
    & $O(k m)$ & $O(n^\beta)$ \\
  \bottomrule
  \end{tabular}
  }
    \caption{Summary of asymptotic running time bounds for index construction.
    The arboricity of the input graph is $\alpha$, the number of samples
    used for approximation is $k$, and $0<\beta\leq 1$.
    For the exact indices on dense graphs, the $\alpha m$ work term may be replaced
    by $n^{\omega_p}$, where $n^{\omega_p} \le n^{2.373}$ is the asymptotic work to multiply
    two $n$-by-$n$ matrices in logarithmic span.
  }
  \label{tab:running-times}
\end{table}

The contributions of this paper are as follows:
\begin{enumerate}[label=(\textbf{\arabic*}),topsep=1pt,itemsep=0pt,parsep=0pt,leftmargin=15pt]

  \item We present a new parallel index-based SCAN algorithm that matches the work bounds of the sequential GS*-Index algorithm and
     has logarithmic span \whp. We also show how matrix multiplication and
     integer sorting improve the work bounds.
  \item We introduce the use of locality-sensitive hashing as an approximation
    technique for SCAN that is
    provably efficient and has behavior guarantees
    relative to exact SCAN.
  \item We implement and evaluate our algorithm on large real-world graphs. Our experiments demonstrate that our implementation
  outperforms other existing SCAN algorithms and confirm that
  locality-sensitive hashing provides running time
  improvements.
  \item
    \makeatletter%
    \if@ACM@review
      We release the implementation of our algorithm.\footnote{Anonymized code
        which will be available within two weeks of paper submission:
        \url{https://anonymous.4open.science/r/99415ef1-ecf2-4d62-945b-9c4a7a3230bc/}.
        We will provide a non-anonymous GitHub link for the final version
      of this paper.}
    \else
      We release the implementation of our
      algorithm.\footnote{Code: \url{https://github.com/ParAlg/gbbs/tree/master/benchmarks/SCAN/IndexBased}}
    \fi
    \makeatother%
\end{enumerate}

\section{Preliminaries}\label{chapter:prelim}

This section provides background definitions and concepts that
subsequent sections use.

\subsection{Set similarity}

\subsubsection{Similarity measures}\label{sec:sim-measures}

Two common measures for the similarity of two sets $A$ and $B$ with elements
from a finite universe $U$ are the Jaccard
similarity and the cosine similarity:
\begin{align*}
  \on{JaccardSim}(A, B) = \frac{\abs{A \cap B}}{\abs{A \cup B}}, \quad
  \on{CosineSim}(A, B) = \frac{\abs{A \cap B}}{\sqrt{\abs{A}}\sqrt{\abs{B}}}
.\end{align*}
If the sets are weighted and have
weight functions $w_A, w_B : U \mapsto \R$, then
there is a weighted form of cosine similarity:
\[
  \on{WeightedCosineSim}(A, B) =
  \frac{\sum_{x \in A \cap B}w_A(x)w_B(x)}{\sqrt{\sum_{x \in
  A}w_A(x)^2}\sqrt{\sum_{x \in B}w_B(x)^2}}
.\]
(There is also a weighted version of Jaccard similarity, which we do not consider
in this work.)

The cosine similarity is really a similarity measure between non-zero vectors;
given vectors $u$ and $v$ with an angle of $\theta$ between the two vectors, the
cosine similarity is defined as
\begin{align*}
  \on{CosineSim}(u, v) = \cos(\theta) = \frac{u \cdot v}{\norm{u}\norm{v}}
.\end{align*}
The definition of cosine similarity for sets with elements from $U$
follows by representing sets as vectors in $\R^U$ (namely, as a bit
vector for unweighted sets and as a vector of weights for weighted
sets).

\subsubsection{Locality-sensitive hashing}\label{sec:lsh}

Suppose that there is a collection of large sets with elements from a finite universe
$U$.
\emph{Locality-sensitive hashing} (LSH) is a technique to quickly approximate the
similarity between pairs of these sets. The idea is to devise a hash function
family that maps similar sets to similar, smaller
\emph{sketches}. We estimate similarities by precomputing all sketches
and operating on the sketches rather than on the large original sets.

A well-known LSH scheme for estimating Jaccard similarity,
for instance, is MinHash~\cite{broder1997resemblance}.  MinHash works by drawing
a uniformly random permutation $\pi$ on $U$ and considering the sketch of a non-empty
set $S$ to be $\min_{x \in S} \pi(x)$. For any two non-empty sets $A$ and $B$, the probability
that the sketches of $A$ and $B$ are equal is $\on{JaccardSim}(A, B)$.
To increase the precision at the cost of extra work, we fix a number of samples $k \in \N$ and perform this
process $k$ times independently to get $k$-length sketches.
The proportion of matching coordinates between two sketches is an estimate
of the Jaccard similarity between the two corresponding sets.
There are variants of
MinHash that are more computationally efficient such as
$k$-partition MinHash~\cite{li2012one}. There are also variants
for weighted Jaccard similarity~\cite{wu2020review}. Since the weighted variants are
more complicated and less practical, we do not use weighted
Jaccard similarity in this work.

SimHash~\cite{charikar2002similarity} is a well-known LSH
scheme for estimating the angle between two vectors. The idea
behind SimHash is to consider drawing a vector $v$ in $\R^U$ with uniformly
random direction by drawing each
coordinate independently from the standard normal distribution.
We take the sketch of a vector $u$ to be
$\on{sign}(u \cdot v)$. For a pair of non-zero vectors $a$ and $b$ with angle $\theta \in [0, \pi]$
in radians between them, the probability that the sketches of $a$ and $b$
differ is exactly $\theta / \pi$; because $v$ has uniformly random direction,
the orthogonal hyperplane to $v$ separates $a$ and $b$ with probability $\theta
/ \pi$, which exactly corresponds to the event that $\on{sign}(a \cdot v) \not=
\on{sign}(b \cdot v)$. Like with MinHash, to tune the precision, we repeat this
process $k \in \N$ times to get $k$-length sketches.
The number of differing
entries between the sketches of $a$ and $b$ multiplied by $\pi/k$ is an estimate
  $\hat{\theta} \sim
\on{Binomial}(k, \theta/\pi) \cdot \pi/k$,
  which in turn provides an estimate
  $\cos(\hat{\theta})$ for $\cos(\theta) = \on{CosineSim}(a, b)$.

\subsection{Graphs and graph notation}

We denote an unweighted, undirected graph $G$ by $G = (V, E)$, where $V$ is the
set of vertices and $E \subseteq \set{\set{u, v} : u, v \in V}$ is the set
of edges. We denote a weighted graph $G$ by $G = (V, E, w)$, where the weight
function $w : E
\to \R$ maps edges to weights. Following common
convention, we use $n$ to denote the number of vertices $\abs{V}$ and $m$
to denote the number of edges $\abs{E}$. The neighborhood $N(v)$ of a vertex
$v$ is the set of vertices connected to $v$ by an edge.  The \emph{closed
neighborhood} of $v$ is $\clone(v) = N(v) \cup \set{v}$. The degree of a vertex
is the size of its neighborhood, $N(v)$.

For directed graphs, each edge in $E$ becomes an ordered pair rather
than an unordered pair. The out-neighborhood of a vertex $v$ is the
set of all vertices $u$ such that $(v,u) \in E$.

The \emph{arboricity} $\alpha$ of an undirected graph $G$ is the minimum
number of spanning forests that covers all edges of the graph. The
arboricity is bounded below by $\ceil{m/(n-1)}$ since each spanning
forest covers at most $n-1$ edges and is bounded above by
$O\left(\sqrt{m + n}\right)$. A $\emph{triangle}$ is a triplet of
edges $\set{u,v}, \set{v,x}, \set{x, u}$ between distinct vertices $u,
v, x$ in $V$. There are triangle counting algorithms that find all
triangles in a graph in $O(\alpha m)$ time~\cite{chiba1985arboricity}.

We represent graphs as adjacency lists, in which each vertex
has a list of its neighbors.
We only consider simple graphs, i.e., graphs with at most one
edge between any pair of vertices and no self-loop
edges. We index vertices with the integers in the range $[1, n]$.

\subsection{Parallelism}

\subsubsection{Parallel programming model}

We design our algorithms for multicore shared-memory machines.
Readily available shared-memory machines are able to operate on
the largest publicly available real-world graphs, which have hundreds of billions of edges~\cite{dhulipala2018theoretically}.
Shared-memory systems are fast due to low communication costs and are
easier to program for than distributed systems are.

We use a fork-join programming model with arbitrary forking; a process can ``fork'' into an
arbitrary number of parallel processes in unit time and can ``join'' to synchronize among
forked processes. Most notably, a fork and a join suffice to implement a
parallel for-loop. We further assume that processes can concurrently read,
write, atomically add, and compare-and-swap at memory locations.
\emph{Compare-and-swap} (CAS) takes three arguments: a memory
location \emph{x}, an old value \emph{old\_V}, and a new value
\emph{new\_V}. If the value stored at \emph{x} is equal to \emph{old\_V},
the CAS atomically updates the value at \emph{x} to be \emph{new\_V} and
returns \emph{true}. Otherwise, the CAS returns \emph{false}. Almost all modern
processors support CAS.

We analyze  the complexity of algorithms with the work-span
model, a standard model for analyzing shared-memory parallel algorithms~\cite{jaja1992introduction, cormen2009introduction}. The
\emph{work} of a
program execution
is the total number of instructions executed, and the \emph{span}
is the length of the longest sequential critical path of instructions.
For a program with $W$ work and $S$ span, a work-stealing scheduler,
such as the one in Cilk~\cite{blumofe1999scheduling}, can execute the program in $W/P + O(S)$
expected time with $P$ processors. A parallel algorithm whose work
asymptotically matches the work of the most efficient known sequential algorithm
is \emph{work-efficient}, which is an important characteristic since $W/P$ is
often much higher than $S$ in well-designed parallel algorithms when run on large data sets.

\subsubsection{Parallel primitives}\label{sec:par-primitives}

This paper makes use of many existing parallel algorithms, which we
describe below.

\emph{Hash tables:}
Gil et al.\ present a hash table which supports inserting $k$
elements in $O(k)$ work and $O(\log^* k)$ span \whp.
Looking up an element takes $O(1)$
work~\cite{gil1991towards}.

\emph{Primitives on arrays:} The \emph{reduce} operation computes the
    sum of all elements in an array.
    (The sum operation is often the numerical addition operation but more
    generally may be any associative binary operation. For instance,
    \emph{reduce} can compute the maximum element in an array.)
    The \emph{filter} operation returns a subsequence of the original
    sequence consisting of all elements matching a user-specified predicate.
    For an array of $n$ elements, both operations run in $O(n)$ work and
    $O(\log n)$ span~\cite{jaja1992introduction, blelloch2010parallel}.
    The \emph{remove duplicates} operation returns an array that has the same
    set of elements as the original input array has, but without any duplicate
    elements. Removing duplicates using a parallel hash table takes $O(n)$ work
    and $O(\log^* n)$ span \whp.

\emph{Comparison sort:}
Cole presents a parallel merge sort that sorts $n$ elements in $O(n \log n)$ work and $O(\log n)$ span~\cite{cole1988parallel}.

\emph{Integer sort:}
    Suppose that we have $n$ non-negative integers in the range $[0, \on{poly}(n)]$.\footnote{$\on{poly}(n)$ means $O(n^c)$ for some constant $c$.}
    For any positive integer $q$, we can sort these integers in $O(qn)$ work and
    $O(qn^{1/q})$
    span~\cite{vishkin2010thinking}. Raman provides another integer
    sorting algorithm that runs in $O(n\log \log n)$ work and $O(\log n / \log
    \log n)$ span
    \whp~\cite{raman1990power}.

    Also, we can sort $n$ non-negative rational numbers
    whose numerators and denominators are bounded by $r \in \on{poly}(n)$ with the
    same asymptotic running times. Consider two distinct rational numbers $a/b$
    and $c/d$ that meet this criterion. Their absolute difference is
    $\abs{ad-bc}/\abs{bd} \ge 1/\abs{bd} \ge 1/r^2$.
    Therefore, if we multiply each rational number by $r^2$ and round them down to the
    nearest integer, we get $n$ integers bounded by $r^3 \in \on{poly}(n)$, whose
    sorted order matches the sorted order of the original rational numbers.

\emph{Graph connectivity:} Gazit gives an algorithm for graph
connectivity with $O(m+n)$ expected work and $O(\log n)$ span
\whp~\cite{gazit1991optimal}.

\emph{Matrix multiplication:}
Two $n$-by-$n$ matrices can be multiplied in $O(n^{\omega_p})$ work and $O(\log
n)$ span with parallel matrix multiplication constant $\omega_p \le 2.373$~\cite{dhulipala2021parallel}.

\section{Review of SCAN algorithms}

In this section, we provide an overview of the
SCAN~\cite{xu2007scan} and GS*-Index~\cite{wen2017efficient}
clustering algorithms.

\subsection{SCAN definitions}

The typical problem formulation for graph clustering is to
output a partition (or \emph{clustering}) of the vertices of the input graph
such that each cluster in the partition has many edges within the cluster and
there are few edges between clusters. How exactly to quantify the
quality of a clustering depends on the application domain.
\Cref{sec:clustering-quality-measures} lists two clustering quality measures.

SCAN~\cite{xu2007scan} is a graph clustering algorithm on undirected graphs.
The output of SCAN diverges slightly from this description of clustering in that SCAN
may leave some vertices unclustered. Unclustered vertices are further separated
into \emph{hubs} and \emph{outliers}. Hubs are unclustered vertices that neighbor multiple
clusters, and outliers are unclustered vertices that
neighbor at most one cluster.

\begin{figure}
  \includegraphics[scale=0.45]{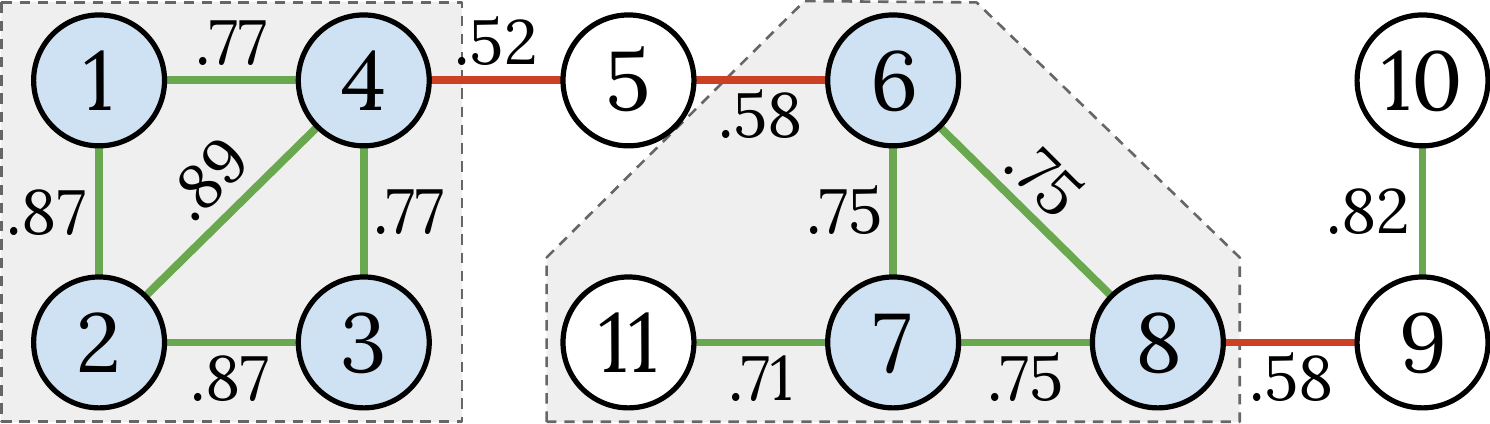}
  \caption{Example SCAN clustering with $\mu = 3$ and $\ep=.6$.
    The labels on the edges are cosine similarities. Core vertices are blue,
    whereas non-core vertices are white. Edges with similarity greater than $\ep$ are green,
    whereas other edges are red. There are two clusters (vertices $\set{1, 2, 3,
  4}$ and vertices $\set{6,7,8,11}$) as well as three unclustered vertices (hub
  vertex $5$ and outlier vertices $9$ and $10$).}
  \Description{Small example graph and the result of running SCAN on the graph.}
  \label{fig:scan-ex}
\end{figure}

For each pair of adjacent vertices $\set{u,v} \in E$, SCAN computes a similarity
score $\sigma(u,v)$. The original paper~\cite{xu2007scan} assumes that edges are unweighted and defines
the similarity score to be the cosine similarity of the closed neighborhoods of
the two vertices:
\[
  \sigma(u, v) = \on{CosineSim}(
    \clone(u),
    \clone(v)) =
  \frac{\smallabs{\clone(u) \cap
  \clone(v)}}{\sqrt{\smallabs{\clone(u)}}\sqrt{\smallabs{\clone(v)}}}
.\]
For instance, in the graph in \cref{fig:scan-ex}, the cosine similarity between
vertices $5$ and $6$ is
\begin{align*}
  \frac{\abs{\set{4,5,6} \cap
  \set{5,6,7,8}}}{\sqrt{\abs{\set{4,5,6}}}\sqrt{\abs{\set{5,6,7,8}}}}
  = \frac{2}{\sqrt{12}} \approx .58
.\end{align*}
This is just one possible choice for the similarity score, however. Other
papers consider using Jaccard similarity, Dice similarity, or weighted cosine
similarity for the similarity function~\cite{huang2010shrink,
huang2012revealing, chang2017pscan, mai2018scalable}.

SCAN takes two parameters as input, an integer $\mu \ge 2$ and a similarity
threshold $\ep \in [0, 1]$. Call vertices $u$ and $v$ \emph{$\ep$-similar} if
their similarity $\sigma(u, v)$ is at least $\ep$. The
\emph{$\ep$-neighborhood} $N_\eps(v)$ of a vertex $v$ is the set of its $\ep$-similar
neighbors,
  $\smallset{u \in \clone(v) \mid \sigma(u, v) \ge \ep}$.
The \emph{core} vertices are the vertices whose $\ep$-neighborhood contains at
least $\mu$ neighbors, i.e., vertices $v$ such that
  $\abs{N_\eps(v)} \ge \mu$.
A vertex $u$ is \emph{structurally reachable} from core vertex $v$ if there is a path of
vertices $v_1, v_2, \ldots, v_k$ for some $k \ge 2$ where $v_1 = v$, where $v_k
= u$, and where $v_i$ is a core and is $\ep$-similar to $v_{i+1}$ for each
integer $i$ from $1$ to $k-1$.

The two following properties define each cluster in the clustering that SCAN
finds:
\begin{itemize}[topsep=1pt,itemsep=0pt,parsep=0pt,leftmargin=15pt]
  \item The cluster is ``connected'': for any two vertices $u$ and
    $x$ in the cluster $C$, there is a vertex $v$ such that both $u$ and $x$ are
    structurally reachable from $v$.
  \item The cluster is ``maximal'': for every core vertex $v$ in
    the cluster, all vertices structurally reachable from $v$ are also
    in the cluster.
\end{itemize}
\Cref{fig:scan-ex} shows the clusters that result from running SCAN
on a small graph.

The $\emph{border}$
vertices, which are the clustered non-core vertices (e.g., vertex 11 in
\cref{fig:scan-ex}), may belong to several
distinct clusters according to the definition of SCAN clusters. The original SCAN algorithm
assigns each of these ambiguous border vertices to any of its possible
clusters arbitrarily.

Computing similarity scores takes $O(\alpha m)$ time with an appropriate
triangle counting algorithm;
to calculate a similarity score $\sigma(u, v)$, it suffices to count
the number of shared neighbors in $N(u) \cap N(v)$, which is precisely the
number of triangles in which edge $\set{u, v}$ appears.
After computing similarities, SCAN finds clusters by performing a modified
breadth-first search, which takes $O(n + m)$ time.

\subsection{Index-based SCAN: GS*-Index}\label{sec:prelims-gs}

GS*-Index~\cite{wen2017efficient} improves on SCAN by precomputing an index
from which finding cores and $\ep$-similar neighbors is fast for any setting
of $\mu$ and $\ep$. It takes $O((\alpha + \log n) m)$ time to compute the index,
and the index takes $O(m)$ space.  After computing the index, the time it takes
to compute the clustering for arbitrary query parameters $(\mu, \ep)$ depends on the size
of the resulting clusters rather than on the size of the full graph.
Specifically, for a subset of vertices $U \subseteq V$, define $E_{U, \ep}$ to be
the set of $\ep$-similar edges in the subgraph induced by $U$. Then the time to
compute the clustering $\mathcal{C}$ for parameters $\mu$ and $\ep$ is
$O\left(\abs{\bigcup_{U \in \mathcal{C}} E_{U, \ep}}\right)$. Determining whether
unclustered vertices are hubs or outliers is not considered in this time bound.

\begin{figure}
  \includegraphics[scale=0.38]{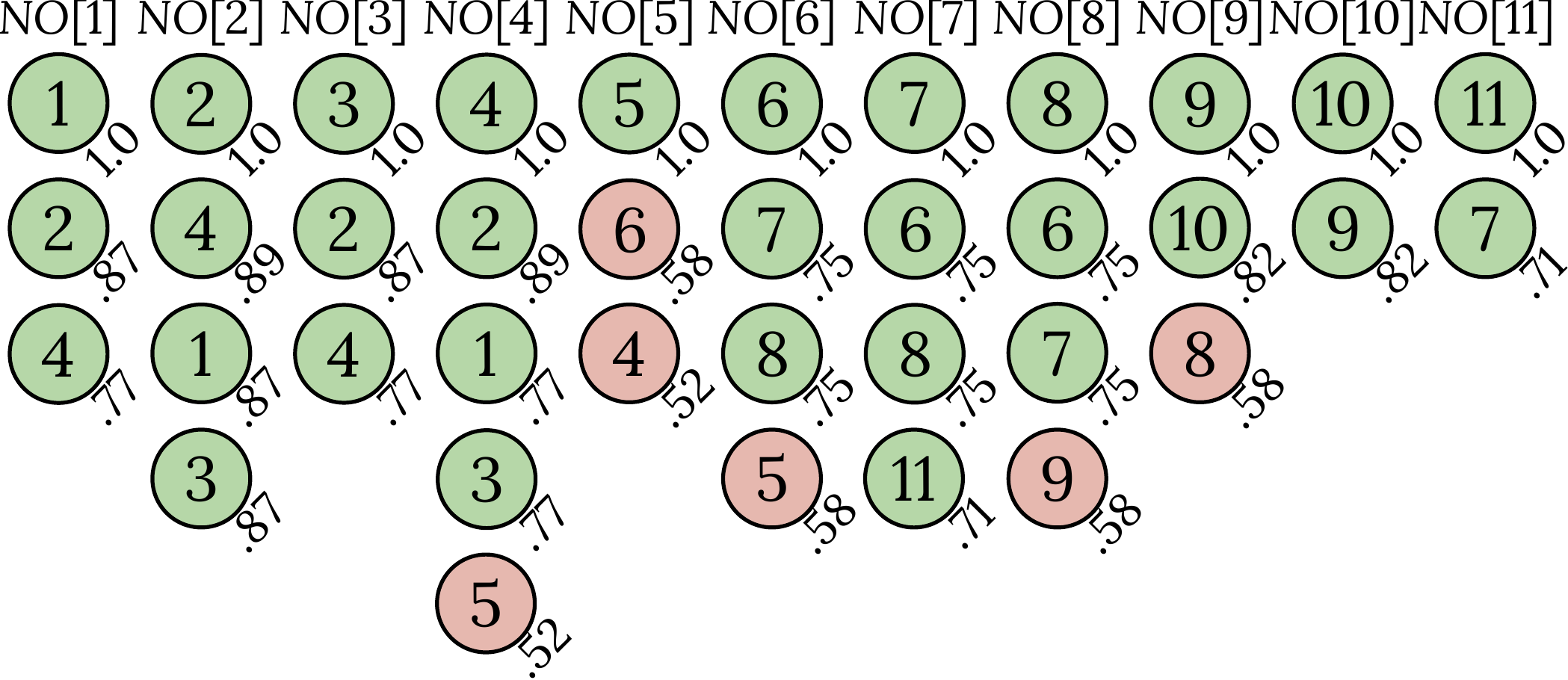}
  \caption{Neighbor order for the graph from \cref{fig:scan-ex}. In this figure, for each $v \in
    V$, we display $\nord[v]$
    as a column. The numbers beside each vertex are similarity scores. For example, in
    $\nord[3]$, the $.87$ label beside vertex $2$ represents the cosine similarity of
    $.87$ between vertices $3$ and $2$.
    Like in \cref{fig:scan-ex}, we consider the specific case where
  $\ep=0.6$ and color all $\ep$-similar neighbors green and all
    other neighbors red.}
  \Description{Neighbor order of a small example graph.}
  \label{fig:norder-ex}
\end{figure}
\begin{figure}
  \includegraphics[scale=0.36]{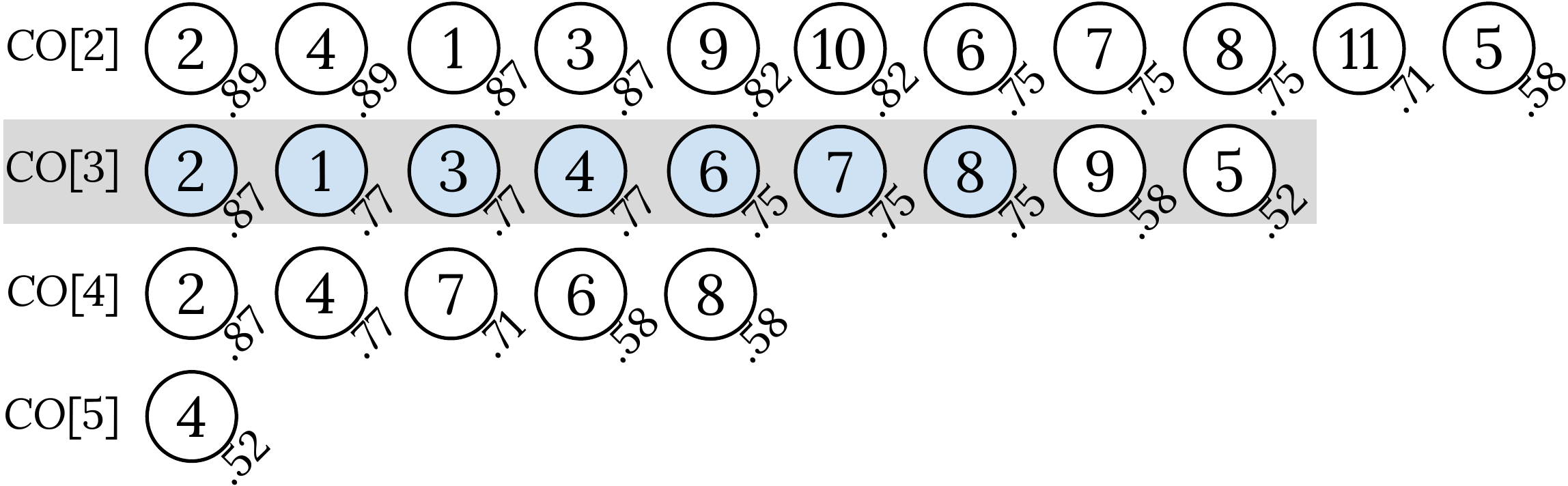}
  \caption{Core order for the graph from \cref{fig:scan-ex}. Each
    entry of $\cord$ is displayed horizontally. We omit $\cord[1]$ since
    we assume $\mu \ge 2$.
    The number beside each vertex in a row
    $\cord[\mu]$ is the core threshold for that vertex for that $\mu$. For
    example, in
    $\cord[2]$, the number $.75$ beside vertex 6 means that when $\mu = 2$ and $\ep \le .75$,
    vertex $6$ is a core vertex.
    Like in \cref{fig:scan-ex}, we consider the
    specific case where $(\mu, \ep) = (3, .6)$; we highlight $\cord[\mu]$ in gray and
    color the core vertices (i.e., vertices with core threshold at least
    $\ep$) blue.
  }
  \Description{Core order of a small example graph.}
  \label{fig:corder-ex}
\end{figure}

The index consists of two data structures, the \emph{neighbor order}
$\nord$ and the \emph{core order} $\cord$. To compute the index,
we first compute the similarity scores between every pair of adjacent vertices.
The neighbor order is the adjacency list of the graph with each neighbor
list sorted by non-increasing similarity. \Cref{fig:norder-ex} shows the neighbor
order for the graph in \cref{fig:scan-ex}. The core order is an array where the
$\mu$-th entry, $\cord[\mu]$, for any $\mu$ is a list of vertices with 
$\smallabs{\clone(\cdot)}$
at least $\mu$, i.e., all possible core vertices for this $\mu$ value. The vertices in $\cord[\mu]$ are sorted by
non-increasing similarity with vertex $\nord[\cdot][\mu]$. This similarity of a
vertex $v$ with vertex $\nord[v][\mu]$
is $v$'s \emph{core threshold} value. For any
$\ep$ no greater than the threshold, the
vertex is a core vertex under parameters $\mu$ and $\ep$.
\Cref{fig:corder-ex} shows the core order for the graph in \cref{fig:scan-ex}.
For example, to compute $\cord[3]$ in \cref{fig:corder-ex}, we consider the nine
vertices $\set{1, 2, 3, \ldots, 9}$ with $\smallabs{\clone(\cdot)} \ge 3$, determine their core
thresholds by looking at the similarities in the third row of
\cref{fig:norder-ex}, and sort the vertices by non-increasing core threshold.

To find the clustering resulting from SCAN parameters $\mu$ and $\ep$,
we perform a breadth-first search on the core vertices, considering only $\ep$-similar
edges in the graph and not searching further from any non-core vertices. The core
vertices and $\ep$-similar edges are easy to find from the index since the core
vertices are a prefix of $\cord[\mu]$ and the $\ep$-similar edges are prefixes
of each list in $\nord$ due to the sorting. The breadth-first search reveals all the SCAN clusters in
the graph.

\section{Parallel algorithm}\label{sec:alg-basic-description}

This section presents our new work-efficient, logarithmic-span parallel algorithms for
constructing the same SCAN index that GS*-Index constructs, and for retrieving
clusters from the index.

For the algorithm descriptions in this section, we assume the existence of basic utility
functions and of functions implementing the primitives listed in
\cref{sec:par-primitives}. The $\textsc{AllocateArray}(s)$ function
allocates an array that holds $s$ elements. The $\textsc{MakeHashMap}(\cdot)$ function
makes a hash table with the input argument specifying the key-value elements in the table.
The $\textsc{MakeHashSet}(\cdot)$ function makes a hash table
containing only keys rather than key-value pairs.
The $\textsc{Sum}(\cdot)$ function
returns the sum of the elements in an array via the reduce operation. The
$\textsc{RemoveDuplicates}(\cdot)$ function returns an array that has the same
set of elements that the input array has, but without any duplicate values.

\subsection{Index construction}\label{sec:index-alg}

\subsubsection{Computing similarities}\label{sec:compute-sims}

To shorten exposition, this section will only focus on one similarity function
$\sigma(\cdot, \cdot)$: cosine similarity for weighted graphs. Given a weighted
undirected graph $G = (V, E, w)$, the similarity score between two adjacent
vertices $\set{u, v}$ in $E$ is
\begin{align*}
  \sigma(u, v) &=
  \on{WeightedCosineSim}(\clone(u), \clone(v)) \\ &=
  \frac{\sum_{x \in \clone(u) \cap \clone(v)}
    w(u, x) w(v, x)
    }{\sqrt{\sum_{x \in \clone(u)} w(u, x)^2}\sqrt{\sum_{x \in \clone(v)} w(v, x)^2}}
\end{align*}
where we set $w(x, x) = 1$ for each vertex $x$. This weighted cosine similarity
measure is the natural generalization to the cosine similarity measure for
unweighted graphs that the original SCAN and GS*-Index algorithms consider.
Modifying the algorithm described in this section to instead compute the
unweighted cosine similarity or Jaccard similarity is straightforward.

\begin{algorithm}
  \caption{Algorithm for computing the cosine similarity of each edge in a
  weighted graph.}
  \label{alg:compute-sims}
\begin{algorithmic}[1]
  \figfont
  \Ensure An  array of length $m$ containing the similarity score of each edge.
  \Procedure{ComputeSimilarities}{$G=(V, E, w)$}
    \State $\cvar{norms} \gets \set{\sqrt{\sum_{u \in \clone(v)} w(u, v)} : v \in V}$
    \InlineComment
    Compute each entry of the array $\cvar{norms}$ with
    $\Call{Sum}{\cdot}$.
    \State $\cvar{similarities} \gets \Call{AllocateArray}{m}$
    \InlineComment For clarity, we index into $\cvar{similarities}$ with
      edges from $E$.
    \State $\cvar{neighbor\_tables} \gets \Call{AllocateArray}{n}$
        \label{algline:neighbor-tables}
    \Pfor{$v \in V$}       \label{algline:make-hash-loop}
      \State $\cvar{neighbor\_tables}[v] \gets \Call{MakeHashSet}{\clone(v)}$
      \label{algline:make-hash}
    \EndFor
    \Pfor{$\set{u, v} \in E$}
      \State (Without loss of generality, let $\smallabs{\clone(u)} \le \smallabs{\clone(v)}$.)
      \State $\cvar{shared\_neighbor\_weights} \gets \Call{AllocateArray}{\smallabs{\clone(u)}}$
      \For{$i \in \smallset{1, 2, 3, \ldots, \smallabs{\clone(u)}}$} \textbf{in parallel}
        \label{alg:compute-sims:lookup-begin}
        \State $x \gets $ $i$-th element in $\clone(u)$
        \State \begin{alglinebreaks}
            $\cvar{shared\_neighbor\_weights}[i] \gets$ \par
             \algind $w(u, x) \cdot w(v, x)$ \textbf{if} $x \in \cvar{neighbor\_tables}[v]$ \textbf{else} $0$
          \end{alglinebreaks}
        \label{alg:compute-sims:lookup-end}
      \EndFor
      \State \begin{alglinebreaks}
        $\cvar{similarities}[\set{u, v}] \gets$ \par
           \algind $\Call{Sum}{\cvar{shared\_neighbor\_weights}}
         / (\cvar{norms}[u] \cdot \cvar{norms}[v])$
      \end{alglinebreaks}
      \label{alg:compute-sims:compute-sim}
    \EndFor
    \State \Return $\cvar{similarities}$
  \EndProcedure
\end{algorithmic}
\end{algorithm}

\Cref{alg:compute-sims} gives pseudocode for computing similarities. The
logic follows that of a known parallel algorithm for triangle
counting~\cite{shun2015multicore}. The algorithm creates a hash set for each
vertex's neighborhood (\crefrange{algline:make-hash-loop}{algline:make-hash}). Then, for each pair of adjacent
vertices $u$ and $v$, looking up the neighbors of $u$ in the hash set for $v$'s
neighborhood (\crefrange{alg:compute-sims:lookup-begin}{alg:compute-sims:lookup-end})
gives all of the shared neighbors between $u$ and $v$, which allows
the algorithm to compute $\on{WeightedCosineSim}(\clone(u),
\clone(v))$ (\cref{alg:compute-sims:compute-sim}).

If the algorithm always searches for neighbors of the lower-degree vertex in the
hash set of the higher-degree vertex's neighborhood, the work is
$O\left(\sum_{\set{u,v}\in E} \min\smallset{\smallabs{\clone(u)},\smallabs{\clone(v)}}\right)$ in
expectation, which
is bounded by $O(\alpha m)$~\cite{chiba1985arboricity}. The span is $O(\log n)$
\whp.

For dense graphs, we can use matrix multiplication to obtain a work
bound of $O(n^{\omega_p})$. Let $W$ be an $n$-by-$n$ matrix with
$W_{u, v} = w(u,v)$ for arbitrary vertices $u$ and $v$. Then
$(W^2)_{u,v}$ is the numerator of $\on{WeightedCosineSim}(\clone(u),
\clone(v))$, so we can skip
\crefrange{algline:neighbor-tables}{alg:compute-sims:lookup-end} and
substitute $(W^2)_{u,v}$ for
$\Call{Sum}{\cvar{shared\_neighbor\_weights}}$ on
\cref{alg:compute-sims:compute-sim}.

\subsubsection{Neighbor order and core order}

\begin{algorithm}
  \caption{Algorithms for computing the neighbor order and core order.}
  \label{alg:compute-orders}
\begin{algorithmic}[1]
  \figfont
  \Procedure{MakeNeighborOrder}{$G=(V, E, w), \cvar{similarities}$}
    \State $\nord \gets \Call{AllocateArray}{n}$
    \Pfor{$v \in V$}
      \State $\nord[v] \gets \clone(v)$
        \label{alg:compute-orders:sort-adj-begin}
      \State
        Sort $u$ in $\nord[v]$ by
        non-increasing $\cvar{similarities}[\set{u, v}]$ value.
        \label{alg:compute-orders:sort-adj-end}
    \EndFor
    \State \Return $\nord$
    \EndProcedure
    \medskip
  \Procedure{MakeCoreOrder}{$G=(V, E, w), \nord$}
  \State $\cvar{sorted\_V} \gets$ $V$ sorted by non-increasing degree.
      \label{alg:compute-cores:sort-v}
    \State $\cvar{max\_degree} \gets \max_{v \in V}\smallabs{\clone(v)}$
    \State $\cord \gets \Call{AllocateArray}{\cvar{max\_degree}}$
    \Pfor{$\mu = \set{2, 3, 4, \ldots, \cvar{max\_degree}}$}
      \State $\cord[\mu] \gets \smallset{v \in V \mid \smallabs{\clone(v)} \ge \mu}$
      \InlineComment
        Find $\smallset{v \in V \mid \smallabs{\clone(v)} \ge \mu}$ by doubling
        search on $\cvar{sorted\_V}$.
        \label{alg:compute-cores:find-cores}
      \State Sort $v$ in $\cord[\mu]$ by non-increasing $\cvar{similarities}[\set{v, \nord[v][\mu]}]$ value.
        \label{alg:compute-cores:sort-cores}
    \EndFor
    \State \Return $\cord[\mu]$
  \EndProcedure
\end{algorithmic}
\end{algorithm}

After computing all similarity values, we construct the neighbor order and core
order (\cref{alg:compute-orders}). We form the neighbor order by
sorting each vertex's neighbor list by non-increasing
similarity
(\crefrange{alg:compute-orders:sort-adj-begin}{alg:compute-orders:sort-adj-end}).
Then, we form the core order by, for each $\mu$ value, finding all
possible core vertices under parameter $\mu$ (\cref{alg:compute-cores:find-cores})
and sorting them by non-increasing core threshold
(\cref{alg:compute-cores:sort-cores}).
On \cref{alg:compute-cores:find-cores}, to find all possible core vertices
(i.e., all vertices $v$ such that $\smallabs{\clone(v)} \ge \mu$), we perform a
doubling search on $\cvar{sorted\_V}$, the set of vertices sorted by
non-increasing degree (\cref{alg:compute-cores:sort-v}). This doubling search consists
of sequentially searching for the minimum $i$, such that the $2^i$-th entry of $\cvar{sorted\_V}$
fails to satisfy the predicate $\smallabs{\clone(\cdot)} \ge \mu$, and then performing
binary search on the last interval of the doubling search. Doubling search
is needed for optimal work bounds. Using only binary search would add
$O(n \log n)$ in total to the work since each binary search costs $O(\log n)$
work. Doubling search, on the other hand, costs only $O(\log j)$ work to find an
item located at index $j$. The $O(\log j)$ cost is also better than the $O(j)$
work and span that linear search would incur.

With a work-efficient comparison sort algorithm, the work analysis is
the same as the original analysis for GS*-Index, giving bounds of
$O(m \log n)$ work and $O(\log n)$ span for constructing
the orders.

If the graph is unweighted, each Jaccard similarity is a
rational number, and each unweighted cosine similarity squared is a rational
number. Recall from \cref{sec:par-primitives} that we can sort
rational numbers with an integer sorting algorithm. Therefore, if the graph is
unweighted, we can achieve better work bounds by
using integer sorting rather than comparison sorting.
 In order to apply the integer sort running
time bounds directly when computing the neighbor order, instead of sorting
$\nord[v]$ separately for each $v \in V$ like \cref{alg:compute-orders}
describes, we instead prepend $v$ to every entry in $\nord[v]$ for each $v \in
V$ and sort all elements in $\nord$ with a single integer sort. We perform the same
transformation to compute the core order with one integer sort. By doing this,
the complexity for computing the neighbor order and core
order match the complexity for integer sort on $m$ integers, as described in
\cref{sec:par-primitives}.

Summing the bounds for computing similarities with the bounds for
constructing the neighbor and core order gives the following theorems.
\begin{theorem}
  Fix an undirected, weighted graph and let $\alpha$ be its arboricity.
  Running the parallel SCAN index construction algorithm on the graph using cosine similarity
  as the similarity measure runs in $O((\alpha + \log n)m)$
  work (matching the work bound of GS*-Index) and $O(\log n)$ span \whp.
\end{theorem}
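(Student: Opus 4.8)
The plan is to obtain the stated bounds by summing the costs of the two phases of index construction—computing all edge similarities (\cref{alg:compute-sims}) and building the neighbor and core orders (\cref{alg:compute-orders})—using the per-primitive guarantees collected in \cref{sec:par-primitives}. Since the two phases run one after the other, their works add and their spans add, so it suffices to bound each phase separately and then combine.

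First I would bound the similarity-computation phase. Building a hash set for each closed neighborhood $\clone(v)$ inserts $\sum_{v \in V}\abs{\clone(v)} = O(n + m)$ elements in total, so by the hash-table bound of \cite{gil1991towards} this is $O(m)$ work and $O(\log^* n)$ span \whp, done in parallel across vertices. Computing the $\cvar{norms}$ entries and the per-edge $\textsc{Sum}$ calls are reduce operations contributing $O(m)$ work and $O(\log n)$ span. The dominant term is the shared-neighbor lookups: for each edge $\set{u,v}$ the algorithm performs $O(\min\set{\abs{\clone(u)}, \abs{\clone(v)}})$ constant-work hash lookups, so the total lookup work is $O\!\left(\sum_{\set{u,v}\in E}\min\set{\abs{\clone(u)}, \abs{\clone(v)}}\right)$, which the arboricity bound of \cite{chiba1985arboricity} collapses to $O(\alpha m)$. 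Because all per-edge work runs in parallel, the phase span is $O(\log n)$ \whp, and its work is $O(\alpha m)$.

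Next I would bound the order-construction phase. For the neighbor order, sorting each list $\nord[v]$ by non-increasing similarity with the comparison sort of \cite{cole1988parallel} costs $O(\abs{\clone(v)}\log\abs{\clone(v)})$ work; summing over vertices gives $O(m \log n)$ work, and running the sorts in parallel gives $O(\log n)$ span. For the core order, sorting $V$ by degree and sorting each $\cord[\mu]$ by core threshold again total $O(m\log n)$ work and $O(\log n)$ span, since $\sum_\mu \abs{\cord[\mu]} = \sum_{v}\max\set{0, \abs{\clone(v)}-1} = O(n+m)$; the doubling searches used to extract each $\cord[\mu]$ add at most $O(n \log n)$ work in total, which does not exceed the sorting bound. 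Thus this phase costs $O(m\log n)$ work and $O(\log n)$ span.

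Summing the two phases gives total work $O(\alpha m) + O(m \log n) = O((\alpha + \log n)m)$ and total span $O(\log n) + O(\log n) = O(\log n)$, matching GS*-Index's work bound. The main obstacle—indeed the only nonroutine step—is the similarity phase: I must invoke the arboricity bound of \cite{chiba1985arboricity} to reduce $\sum_{\set{u,v}}\min\set{\abs{\clone(u)},\abs{\clone(v)}}$ to $O(\alpha m)$, and I must verify that the probabilistic guarantees compose. Since the randomized primitives (hash insertions and lookups, and the \whp span bounds of the sorts) are each invoked only polynomially many times, a union bound preserves the \whp qualifier across the entire construction; I would also flag the mild subtlety that the hash-table lookup work is stated in expectation, so strictly speaking the $O((\alpha+\log n)m)$ work holds in expectation while the $O(\log n)$ span holds \whp.
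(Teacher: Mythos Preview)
Your proposal is correct and mirrors the paper's own argument: the paper likewise obtains the theorem by summing the $O(\alpha m)$ work / $O(\log n)$ span bound for similarity computation (via the Chiba--Nishizeki arboricity bound on $\sum_{\{u,v\}\in E}\min\{|\clone(u)|,|\clone(v)|\}$) with the $O(m\log n)$ work / $O(\log n)$ span bound for building the neighbor and core orders using comparison sort. Your closing caveat that the work bound is in expectation while the span is \whp is exactly how the paper states the similarity-phase cost, and your $O(n\log n)$ bound on the doubling searches is looser than necessary (the paper's point in using doubling rather than binary search is to get $\sum_\mu O(\log|\cord[\mu]|)=O(m)$) but still fits under the $O(m\log n)$ sorting term, so it does not affect the conclusion.
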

\begin{theorem}
  Fix an undirected, unweighted graph and let $\alpha$ be the graph's arboricity.
  The parallel SCAN index construction algorithm with cosine similarity
  or Jaccard similarity as the similarity measure can achieve the following
  running time bounds depending on what integer sorting algorithm is used:
  \begin{itemize}[topsep=1pt,itemsep=0pt,parsep=0pt,leftmargin=15pt]
    \item $O((\alpha + \log \log n)m)$ work and $O(\log n)$ span \whp,
    \item $O(\alpha m)$ work and $O(n^\beta)$ span \whp for any
       $0<\beta\leq 1$.
  \end{itemize}
\end{theorem}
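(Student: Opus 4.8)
The plan is to bound the two phases of the construction separately and then sum them. The first phase computes all edge similarities (\cref{alg:compute-sims}); the second phase builds the neighbor and core orders (\cref{alg:compute-orders}), which for an unweighted graph I will show reduces to a single integer sort. First I would invoke the analysis of \cref{sec:compute-sims}: on an unweighted graph, computing similarities costs $O(\alpha m)$ work and $O(\log n)$ span \whp. Since $\alpha \ge 1$ for any graph with an edge, this work term dominates the $O(m)$-style work that the sorting phase contributes in the second bullet, and its $O(\log n)$ span is subsumed by both the target $O(\log n)$ and $O(n^\beta)$ span bounds.

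Second, I would argue that the order-construction phase is exactly an integer sort. The key observation is that on an unweighted graph every similarity is a rational whose numerator and denominator are bounded by $\on{poly}(n)$: Jaccard similarity is $\smallabs{N(u)\cap N(v)}/\smallabs{N(u)\cup N(v)}$ with both parts integers at most $n$, and although cosine similarity involves a square root, sorting by cosine similarity is equivalent to sorting by its square $\smallabs{\clone(u)\cap\clone(v)}^2/(\smallabs{\clone(u)}\smallabs{\clone(v)})$ because cosine similarity is nonnegative, and this square is again a rational with $\on{poly}(n)$-bounded numerator and denominator. By the reduction in \cref{sec:par-primitives} (multiply by $r^2$ and round down, where $r$ bounds the numerators and denominators), sorting these rationals has the same asymptotic cost as integer sorting. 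As \cref{sec:index-alg} describes, prepending each vertex id lets us sort all of $\nord$ (and, separately, all of $\cord$) with a single integer sort over $O(m)$ keys; I would verify that the composite key encoding each (vertex id, rounded-similarity) pair is still a $\on{poly}(n)$-bounded integer, so the integer-sort bounds apply unchanged.

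Third, I would plug in the two integer sorts from \cref{sec:par-primitives}. Using Raman's sort gives $O(m\log\log m)$ work and $O(\log m/\log\log m)$ span \whp; since $m \le n^2$ we have $\log m = O(\log n)$, so this is $O(m\log\log n)$ work and $O(\log n)$ span \whp. Adding the similarity phase yields $O((\alpha+\log\log n)m)$ work and $O(\log n)$ span \whp, the first bullet. Using Vishkin's sort with parameter $q$ gives $O(qm)$ work and $O(qm^{1/q})$ span; choosing the constant $q=\ceil{2/\beta}$ makes the work $O(m)$ and the span $O(m^{1/q})=O(n^{2/q})=O(n^\beta)$. Adding the similarity phase and using $\alpha\ge 1$ together with $\log n = O(n^\beta)$ gives $O(\alpha m)$ work and $O(n^\beta)$ span \whp, the second bullet.

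The main obstacle I expect is the careful justification of the integer reduction rather than any genuinely hard inequality. I must confirm that (i) squaring preserves the sort order for cosine similarity; (ii) the single global integer sort with prepended vertex ids correctly orders each list while keeping all keys $\on{poly}(n)$-bounded, so that both the $r^3$ bound of the rational-to-integer reduction and the range hypothesis of the integer sort hold simultaneously; and (iii) the \whp\ qualifiers compose correctly. For the last point I would note that the randomness enters only through the hash tables in the similarity phase, since Vishkin's sort is deterministic, so the \whp\ bounds carry through directly in both cases.
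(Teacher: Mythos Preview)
Your proposal is correct and follows essentially the same approach as the paper: decompose into the similarity-computation phase (giving $O(\alpha m)$ work and $O(\log n)$ span \whp) and the order-construction phase, reduce the latter to a single integer sort on $O(m)$ keys via the rational-to-integer trick (using cosine similarity squared to avoid the square root), and then plug in Raman's and Vishkin's sorts respectively. Your explicit choice of $q=\lceil 2/\beta\rceil$ for Vishkin's sort and the check that the composite keys remain $\on{poly}(n)$-bounded are details the paper leaves implicit, but the overall argument is the same.
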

In both theorems, we can replace the $\alpha m$ work term with $n^{\omega_p}$
if we use matrix multiplication to compute similarities.

\subsection{Querying for clusters}\label{sec:query-alg}

Next, we describe an efficient parallel algorithm for discovering clusters given
the parameters $\mu$ and $\epsilon$. The algorithm uses the index structure from
\cref{sec:index-alg}.

\begin{algorithm}
  \caption{Helper function for finding core vertices under a particular setting
  of SCAN parameters.}
  \label{alg:get-cores}
  \begin{algorithmic}[1]
  \figfont
    \Ensure An array of core vertices under SCAN parameters $(\mu, \ep)$.
    \Procedure{GetCores}{$\mu, \ep, \nord, \cord, \cvar{similarities}$}
      \State $\cvar{max\_degree} \gets \abs{\cord}$
      \If{$\mu > \cvar{max\_degree}$}
        \ \Return $\set{}$ \Comment No vertices are cores.
      \Else
        \ \Return $\set{v \in \cord[\mu] \mid
        \cvar{similarities}[\set{v, \nord[v][\mu]}] \ge \ep}$
          \label{algline:bin-search-cores}
          \InlineComment  \begin{alglinebreaks}
            Find cores
            using a doubling search on $\cord[\mu]$.
        \end{alglinebreaks}
      \EndIf
    \EndProcedure
  \end{algorithmic}
\end{algorithm}
\begin{algorithm}
  \caption{Helper function for assigning border non-core vertices to clusters after
  clustering all of the core vertices.}
  \label{alg:assign-non-cores}
  \begin{algorithmic}[1]
  \figfont
    \Procedure{AssignNonCores}{$\cvar{similar\_edges}$, $\cvar{cores\_set}$,
      $\cvar{clusters}$}
      \State $\cvar{subgraph\_vertices} \gets
        \Call{RemoveDuplicates}{\set{v \mid \set{u, v} \in \cvar{similar\_edges}}}$\label{algline:assign-non-cores:get-non-cores-begin}

      \State
        $\cvar{subgraph\_non\_cores} \gets
         \set{v \in \cvar{subgraph\_vertices} \mid v \not\in
          \cvar{cores\_set}}$
       \Comment Filter
        \label{algline:assign-non-cores:get-non-cores-end}
      \State $\cvar{non\_cores\_count} \gets \abs{\cvar{subgraph\_non\_cores}}$
      \State  $\cvar{assignments} \gets
        \Call{AllocateArray}{\cvar{non\_cores\_count}}$
       \State   $\cvar{non\_core\_indices} \gets
         \Call{MakeHashMap}{\set{\cvar{subgraph\_non\_cores}[i] \mapsto i}}$
      \Pfor{$i \in \set{1, 2, 3, \ldots, \cvar{non\_cores\_count}}$}
        \State $\cvar{assignments}[i] = \cvar{null}$
        \label{algline:assign-non-cores:preprocess-end}
      \EndFor
      \Pfor{$\set{u, v} \in \cvar{similar\_edges} \land
        (u \not\in \cvar{cores\_set} \lor
          v \not\in \cvar{cores\_set})$}
          \State  (Without loss of generality, let $v
            \not\in \cvar{cores\_set}$;  then, $u \in \cvar{cores\_set}$.)
        \State   $\cvar{address} \gets
          \&(\cvar{assignments}[
              \cvar{non\_core\_indices}[v]])$
        \State $\Call{CompareAndSwap}{\cvar{address}, \cvar{null},
        \cvar{clusters}[u]} $
          \label{algline:assign-non-cores:cas}
          \InlineComment  \begin{alglinebreaks}
            Assign border vertex $v$ to an arbitrary neighboring $\epsilon$-similar cluster. \par
            If the CAS fails, then that means $v$ is already assigned.
        \end{alglinebreaks}
      \EndFor
        \State
        \begin{alglinebreaks}
        For $v$ in $\cvar{subgraph\_non\_cores}$ in parallel, insert
        \par\algind $[v \mapsto \cvar{assignments}[\cvar{non\_core\_indices}[v]]]$
          into $\cvar{clusters}$.
        \end{alglinebreaks}
        \State \Return $\cvar{clusters}$
    \EndProcedure
  \end{algorithmic}
\end{algorithm}
\begin{algorithm}
  \caption{Algorithm for finding the SCAN clustering with parameters $\mu$ and
  $\ep$ from the index.}
  \label{alg:cluster}
  \begin{algorithmic}[1]
  \figfont
    \Procedure{Cluster}{$\mu, \ep, \nord, \cord, \cvar{similarities}$}
      \State $\cvar{cores} \gets \Call{GetCores}{\mu, \ep, \nord, \cord, \cvar{similarities}}$
        \label{algline:cluster:fetch-subgraph-begin}
      \State $\cvar{cores\_set} \gets \Call{MakeHashSet}{\cvar{cores}}$
      \State  $\cvar{similar\_edges} \gets
        \set{ \set{u, v} \mid u \in \cvar{cores\_set} \land \cvar{similarities}[\set{u,
          v}] \ge \ep}$
          \label{algline:bin-search-edges}
          \InlineComment  Get $\cvar{similar\_edges}$ by doubling search
            on $\nord[u]$ for each $u \in \cvar{cores}$.
        \label{algline:cluster:fetch-subgraph-end}
      \State
      \begin{alglinebreaks}
      $\cvar{similar\_core\_edges} \gets$ \par
      \algind $\set{ \set{u, v} \in
        \cvar{similar\_edges} \mid u \in \cvar{cores\_set} \land
        v \in \cvar{cores\_set}}$
        \quad\Comment Filter
      \end{alglinebreaks}
      \label{algline:cluster:filter-non-cores}
      \State
       \begin{alglinebreaks}
      $\cvar{core\_clusters} \gets$ Connected components of subgraph induced by \par
      \algind $\cvar{similar\_core\_edges}$, represented as a hash table mapping \par
      \algind $[v \mapsto \text{component ID}]$ for each $v \in \cvar{cores}$.
        \end{alglinebreaks}
      \label{algline:cluster:conn}
      \State \Return $\Call{AssignNonCores}{\cvar{similar\_edges},
        \cvar{cores\_set}, \cvar{core\_clusters}}$
    \EndProcedure
  \end{algorithmic}
\end{algorithm}

\Cref{alg:cluster} provides pseudocode for extracting a clustering with
arbitrary parameters from the index. \Cref{alg:get-cores,alg:assign-non-cores}
are subroutines for \cref{alg:cluster}.
To retrieve the clustering with parameters $\mu$ and $\ep$,
the algorithm performs a doubling search on $\cord[\mu]$ to find all core vertices
(\cref{algline:bin-search-cores} of \cref{alg:get-cores}) and then
performs doubling searches on $\nord[v]$ for each core vertex $v$
to find all $\ep$-similar edges incident on core vertices (\cref{algline:bin-search-edges} of \cref{alg:cluster}). For instance, for the graph in
\cref{fig:scan-ex} with parameters $(\mu, \ep) = (3, .6)$, the search on
$\cord[\mu]$ finds the blue vertices in \cref{fig:corder-ex}, and the
searches on $\nord[\cdot]$ find the green vertices in \cref{fig:norder-ex}.
This corresponds exactly to the blue core vertices and green edges in
\cref{fig:scan-ex}.

For each of these prefixes of
$\nord[v]$, the algorithm also creates a copy with all border non-core neighbors (e.g.,
vertex 11 in \cref{fig:scan-ex}) filtered away
(\cref{algline:cluster:filter-non-cores} of \cref{alg:cluster}). These filtered
prefixes constitute an adjacency list for the subgraph induced by the
$\ep$-similar edges on the core vertices. Running a parallel connectivity algorithm
on this subgraph assigns all core vertices to a cluster
(\cref{algline:cluster:conn} of \cref{alg:cluster}). Finally, the algorithm
takes all of the border non-core neighbors
(\crefrange{algline:assign-non-cores:get-non-cores-begin}
{algline:assign-non-cores:get-non-cores-end} of \cref{alg:assign-non-cores})
and uses compare-and-swap to assign each of them to
the same cluster as an
arbitrary neighboring $\ep$-similar core (\cref{algline:assign-non-cores:cas} of
\cref{alg:assign-non-cores}). The final output is a hash table mapping vertices
to cluster IDs. The algorithm achieves the bounds stated in the
following theorem.
\begin{theorem}
  Suppose the clustering algorithm, \cref{alg:cluster}, runs and returns a
  collection of clusters $\mathcal{C}$. For a set of vertices $U \in
  \mathcal{C}$, define $E_{U, \ep}$ to be the set of $\ep$-similar edges in the
  subgraph induced by $U$. Define $Z = \abs{\bigcup_{U \in \mathcal{C}} E_{U, \ep}} \in O(m)$.
  Then the clustering algorithm runs in $O(Z)$ expected work (which
  matches the work bound for GS*-Index) and $O(\log
  n)$ span \whp.
\end{theorem}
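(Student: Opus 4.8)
The plan is to charge the work of \cref{alg:cluster} phase by phase—core retrieval (\cref{alg:get-cores}), extraction of the $\ep$-similar edges incident on cores, the connectivity computation, and border assignment (\cref{alg:assign-non-cores})—and to show that each phase costs $O(Z)$ work and $O(\log n)$ span \whp. The structural fact that drives the entire work bound is a containment lemma: letting $S$ denote the set of $\ep$-similar edges with at least one core endpoint (the array $\cvar{similar\_edges}$), I claim $S \subseteq \bigcup_{U \in \mathcal{C}} E_{U,\ep}$, so that $\abs{S} \le Z$. To prove it, take any $\set{u, v} \in S$ with $v$ a core; since $\sigma(u,v) \ge \ep$, the vertex $u$ is structurally reachable from $v$ along the length-two path $v, u$, hence $u$ lies in the same cluster $U$ as $v$, and as $\set{u,v}$ is $\ep$-similar with both endpoints in $U$ it belongs to $E_{U,\ep}$. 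This is the step where the meaning of $\mathcal{C}$ matters: the count uses the SCAN clustering in which a border vertex belongs to every cluster it is reachable from, matching the GS*-Index cost model of \cref{sec:prelims-gs}.

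Granting the lemma, I would bound the phases in turn. \textsc{GetCores} runs a single sequential doubling search on $\cord[\mu]$ that locates the core boundary at index $c = \abs{\cvar{cores}}$ in $O(\log c)$ work and span, then materializes that prefix in $O(c)$ work and $O(\log n)$ span; since $\mu \ge 2$ every core contributes at least one incident edge to $S$, giving $c = O(\abs{S}) = O(Z)$. Extracting $\cvar{similar\_edges}$ runs, in parallel over cores $u$, a doubling search on $\nord[u]$ costing $O(\log(\deg_\ep(u)) + \deg_\ep(u))$ work; summing over cores and flattening the results with a prefix sum yields $O(\abs{S}) = O(Z)$ work (the logarithmic terms being dominated by the linear ones) and $O(\log n)$ span. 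The filter on \cref{algline:cluster:filter-non-cores} and the connectivity call of \cref{sec:par-primitives} both act on the core-induced $\ep$-similar subgraph, which has $O(Z)$ vertices and $O(Z)$ edges, costing $O(Z)$ expected work and $O(\log n)$ span \whp.

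For \textsc{AssignNonCores}, I would note that $\cvar{subgraph\_vertices}$, the hash map $\cvar{non\_core\_indices}$, the two filters, and both parallel loops touch only the $O(Z)$ edges of $\cvar{similar\_edges}$ and their endpoints, so each is $O(Z)$ work; the \textsc{RemoveDuplicates} and hash-table operations supply the randomization behind the expected-work bound and add only $O(\log^* n)$ span \whp, while the compare-and-swap assignment is $O(1)$ work and span per edge. Combining all phases gives $O(Z)$ expected work and $O(\log n)$ span \whp, and $Z \in O(m)$ because the counted edges form a subset of $E$.

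The step I expect to be the main obstacle is the containment lemma together with its reliance on the precise definition of $\mathcal{C}$. Charging the doubling-search cost against the \emph{output partition}, in which each border is assigned to a single cluster, fails: a core may be $\ep$-similar to many border vertices that are all assigned elsewhere, so that $\abs{S}$ exceeds the partition-based edge count by a factor of up to $\mu$. The fix is to use the structural clustering for $E_{U,\ep}$, which charges every $\ep$-similar edge leaving a core to that core's own cluster and restores $\abs{S} \le Z$. A secondary technicality is verifying that doubling search (rather than binary or linear search) keeps each phase proportional to the output size $c$ and $\sum_u \deg_\ep(u)$ rather than to $\abs{\cord[\mu]}$ or to full vertex degrees.
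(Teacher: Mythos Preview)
Your proposal is correct and follows the same approach as the paper, whose entire justification is a single sentence: the doubling searches in \crefrange{algline:cluster:fetch-subgraph-begin}{algline:cluster:fetch-subgraph-end} fetch $\bigcup_{U\in\mathcal{C}} E_{U,\ep}$ work-efficiently, and everything downstream operates only on that subgraph. Your phase-by-phase charging and the containment lemma $S \subseteq \bigcup_U E_{U,\ep}$ are exactly that sentence made rigorous.

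Your closing paragraph is worth flagging: the concern you raise about the meaning of $\mathcal{C}$ is real, and the paper does not address it. The containment lemma holds only under the overlapping-cluster reading of \cref{sec:prelims-gs} (border vertices belonging to every cluster from which they are reachable), not under the single-assignment partition that \textsc{AssignNonCores} actually returns; the paper leans silently on the phrase ``matches the work bound for GS*-Index'' to import that reading. So on this point your treatment is more careful than the paper's own.
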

This theorem holds because the doubling searches in \crefrange{algline:cluster:fetch-subgraph-begin}
{algline:cluster:fetch-subgraph-end} of \cref{alg:cluster} fetch all of the edges
$\bigcup_{U \in \mathcal{C}} E_{U, \ep}$ in the output clustering $\mathcal{C}$
in a work-efficient manner,
and the remainder of the clustering algorithm operates only on the subgraph
given by $\bigcup_{U \in \mathcal{C}} E_{U, \ep}$ in a work-efficient
manner.

\subsection{Determining hubs and outliers}

After finding a clustering, we can determine whether unclustered vertices
are hubs or outliers. For each unclustered vertex $v$, we map each neighbor
in $N(v)$ to its cluster ID and reduce over the neighbors to
determine whether the vertex has neighbors belonging to distinct clusters. It
takes $O(\abs{N(v)})$ work and $O(\log \abs{N(v)})$ span to determine whether
$v$ is a hub or outlier. In total, this takes $O(m+n)$ work and $O(\log n)$ span.

\section{Approximating similarities}\label{sec:alg-approx}

After constructing the index, querying for a clustering is fast. Index
construction itself, though, may be expensive since it takes $\Omega(\on{min}\{\alpha m, n^{\omega}\})$
work. One unexplored technique for
speeding up SCAN is to use LSH
to approximate similarities.

For example, to use SimHash to approximate cosine similarities, we fix a sample
size $k \in \N$. Then, we draw $kn$ random numbers from the standard normal distribution, which is possible via
the Box-Muller transform~\cite{box1958note} given a source of uniform random
numbers. With these normally distributed random numbers, we construct a $k$-sample sketch of $\clone(v)$ for each
vertex $v$. The sketching takes $O(km)$ work and $O(\log n)$ span using
the reduce operation to compute inner products.
Now we can compute the similarity between any adjacent vertices $u$ and
$v$ by comparing their sketches in $O(k)$ work and $O(\log k)$ span.
Computing the sketches and the similarities over all edges takes $O(km)$ work
and $O(\log n + \log k)$ span. The work bound is better than
the work bound for computing exact similarities if $k$ is asymptotically less than the
arboricity $\alpha$. Similarly, we
can use MinHash to approximate Jaccard similarities.

We can then compute a neighbor order and core order based on these
similarities. Again, we can achieve better work bounds using an
integer sort algorithm, and in fact we can use integer sort on both
unweighted and weighted graphs. This is because the approximate
similarities are non-negative integers scaled by a factor of $\pi/k$
for SimHash or $1/k$ for MinHash, and we can postpone scaling the integers until
after sorting. Therefore, we can construct a SCAN index with the following
running time bounds.

\begin{theorem}
  Fix an undirected graph and let $k \le \on{poly}(n)$. The parallel SCAN index
  construction algorithm using $k$-sample MinHash (for unweighted graphs) or
  SimHash (for unweighted or weighted graphs) to compute approximate
  similarities can achieve the following running time bounds depending on what
  integer sorting algorithm is used:
  \begin{itemize}[topsep=1pt,itemsep=0pt,parsep=0pt,leftmargin=15pt]
    \item $O((k + \log \log n)m)$ work and $O(\log n)$ span \whp,
    \item $O(km)$ work and $O(n^{\beta})$ span for  $0<\beta\leq 1$.
  \end{itemize}
\end{theorem}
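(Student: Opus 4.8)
The plan is to bound the three phases of approximate index construction separately and then sum them: (i) computing the $k$-sample sketches of each closed neighborhood, (ii) estimating the similarity of each edge from its endpoints' sketches, and (iii) building the neighbor order $\nord$ and core order $\cord$ from these estimates. Phases (i) and (ii) have running times that are independent of the choice of integer sort, so the two bullets will differ only in the cost of the sorting in phase (iii).

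For phase (i) I would compute each sketch coordinate with a reduce over a closed neighborhood: for SimHash each coordinate is $\on{sign}(u \cdot v)$ for a sampled random vector, and for MinHash each coordinate is a minimum under a sampled random permutation. Since $\sum_{v \in V}\smallabs{\clone(v)} = O(m)$, all $kn$ coordinates cost $O(km)$ work and $O(\log n)$ span, as observed in the discussion above. For phase (ii), comparing the two length-$k$ sketches of an edge and counting the matching (MinHash) or differing (SimHash) coordinates is another reduce, costing $O(k)$ work and $O(\log k)$ span per edge; over all $m$ edges in parallel this is $O(km)$ work and, since $k \le \on{poly}(n)$ implies $\log k = O(\log n)$, $O(\log n)$ span. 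Both phases are deterministic in their running time, the only randomness being the sampled hash functions, which affects accuracy but not time.

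The crux is phase (iii), and the key observation is that the sort can be carried out as an \emph{integer} sort. For a fixed edge the raw comparison result is an integer in $[0,k]$, and the reported similarity is this integer scaled by $1/k$ (MinHash) or, for SimHash, $\cos$ of the count times $\pi/k$. Because $x \mapsto x/k$ is increasing and $x \mapsto \cos(x\pi/k)$ is decreasing on $[0,k]$, sorting each neighbor list by these integer counts (ascending or descending as appropriate) reproduces exactly the non-increasing-by-similarity order, so no scaling is needed until after the sort. Following the single-sort transformation of \cref{sec:index-alg}, I would prepend the vertex (for $\nord$) or the value of $\mu$ (for $\cord$) to each element and sort all elements at once using a composite key $\mathrm{id}\cdot(k+1)$ plus the (possibly complemented) count. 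Since $k \le \on{poly}(n)$, these composite keys remain polynomially bounded, so the integer-sort bounds of \cref{sec:par-primitives} apply directly; that both orders together have $O(m)$ elements follows because $\sum_{v}(\smallabs{\clone(v)}-1) = 2m$.

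Summing the phases then yields both bullets. Using Raman's sort gives $O(m\log\log n)$ work and $O(\log n / \log\log n) = O(\log n)$ span \whp\ for phase (iii), which combined with the $O(km)$ work of phases (i) and (ii) gives $O((k + \log\log n)m)$ work and $O(\log n)$ span \whp. Using the deterministic sort of \cref{sec:par-primitives} with the constant $q = \ceil{2/\beta}$ gives $O(m)$ work and $O(m^{1/q}) = O(n^{2/q}) = O(n^\beta)$ span (using $m \le n^2$ and that $n^\beta$ dominates $\log n$), for a total of $O(km)$ work and $O(n^\beta)$ span. The main obstacle I anticipate is the bookkeeping in phase (iii): verifying the monotonicity that lets the integer counts serve as exact sort keys, and checking that the composite encoding keeps the key range polynomial while still grouping correctly by vertex (or by $\mu$). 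Everything else is a direct application of bounds already established in the excerpt.
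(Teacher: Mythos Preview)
Your proposal is correct and follows essentially the same approach the paper takes: the paper does not give a standalone proof of this theorem but derives it from the $O(km)$ work, $O(\log n + \log k)$ span bound for sketching and pairwise comparison together with the observation that the approximate similarities are integers up to a monotone scaling, so the neighbor and core orders can be built with the integer-sort primitives of \cref{sec:par-primitives}. If anything you are more careful than the paper, which simply asserts that the SimHash estimates are ``non-negative integers scaled by a factor of $\pi/k$''; your explicit monotonicity argument for $x \mapsto \cos(x\pi/k)$ is the right way to justify replacing the cosine values by integer counts as sort keys.
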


We can also theoretically analyze the clusterings that
result from using these approximate similarities. In particular, suppose we fix
some $\ep \in [0, 1]$ and $\delta \in (0, 1)$. Notice that the SCAN clustering with
parameters $\ep$ and arbitrary $\mu$ is only concerned about whether
similarities fall above or below $\ep$, rather than exact similarity values. If the number of samples is
sufficiently high, then \whp, all
edges outside the similarity range $\ep \pm \delta$ will be
``correctly classified'' as above or below the threshold $\ep$ by the approximate similarities.
We present such a result for approximating cosine similarity using SimHash.
\begin{theorem}
  \label{thm:simhash-accuracy-bound}
  Let $G = (V, E, w)$ be an undirected graph with non-negative edge weights, let $\ep \in [0,1]$, and let $\delta \in (0,1)$. Suppose
  $k \ge \pi^2 \ln(nm)/(2\delta^2)$
  and suppose we use SimHash with $k$ samples to
  compute approximate cosine similarity scores for every edge in $G$.
  Then \whp, all edges with exact cosine similarities outside
  the interval $(\ep - \delta, \ep + \sqrt{1 - \ep^2}\delta)$ are correctly
  classified by the approximate similarities as above or below the
  threshold $\ep$.
\end{theorem}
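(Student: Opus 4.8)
The plan is to recast everything in terms of angles, where the SimHash estimator is an unbiased average of bounded independent terms, and then reduce the whole statement to one angular deviation bound. Since the edge weights are non-negative, each vector $\clone(v) \in \R^V$ has non-negative coordinates, so the angle $\theta_{uv}$ between $\clone(u)$ and $\clone(v)$ lies in $[0,\pi/2]$ and the exact similarity is $\sigma(u,v) = \cos\theta_{uv} \in [0,1]$. Writing $\theta_\ep = \arccos\ep$ and recalling from \cref{sec:lsh} that the estimate is $\hat\theta_{uv} = (\pi/k)X_{uv}$ with $X_{uv} \sim \on{Binomial}(k, \theta_{uv}/\pi)$, so that $\hat\sigma(u,v) = \cos\hat\theta_{uv}$ and $E[\hat\theta_{uv}] = \theta_{uv}$, I note that because $\cos$ is strictly decreasing on $[0,\pi]$, the approximate classification is ``above threshold'' (i.e.\ $\hat\sigma(u,v) \ge \ep$) exactly when $\hat\theta_{uv} \le \theta_\ep$, while the exact classification is ``above'' exactly when $\theta_{uv} \le \theta_\ep$. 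Thus ``correctly classified'' is equivalent to $\hat\theta_{uv}$ and $\theta_{uv}$ lying on the same side of the threshold angle $\theta_\ep$.

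The core geometric step, which I expect to be the main obstacle, is to show that the \emph{asymmetric} similarity interval translates into a \emph{symmetric} angular margin of $\delta$ about $\theta_\ep$: every edge outside $(\ep - \delta,\ \ep + \sqrt{1-\ep^2}\,\delta)$ has $\theta_{uv}$ at angular distance at least $\delta$ from $\theta_\ep$ on the correct side. For the upper case, an edge with $\sigma(u,v) \ge \ep + \sqrt{1-\ep^2}\,\delta$, I would use the angle-addition identity $\cos(\theta_\ep - \delta) = \ep\cos\delta + \sqrt{1-\ep^2}\,\sin\delta$ together with $\cos\delta \le 1$ and $\sin\delta \le \delta$ to get $\cos(\theta_\ep - \delta) \le \ep + \sqrt{1-\ep^2}\,\delta \le \cos\theta_{uv}$, whence monotonicity of $\cos$ yields $\theta_{uv} \le \theta_\ep - \delta$ (a quick range check handles the degenerate regime where the right endpoint exceeds $1$, which is vacuous since similarities never exceed $1$). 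For the lower case, an edge with $\sigma(u,v) \le \ep - \delta$, I would instead bound the arccosine gap by an integral, $\arccos(\ep - \delta) - \arccos\ep = \int_{\ep - \delta}^{\ep}(1-x^2)^{-1/2}\,dx \ge \int_{\ep-\delta}^{\ep} 1\,dx = \delta$, which forces $\theta_{uv} \ge \arccos(\ep - \delta) \ge \theta_\ep + \delta$. This is the delicate part because the precise, asymmetric shape of the interval is dictated by the local slope $\lvert\tfrac{d}{d\theta}\cos\theta\rvert = \sqrt{1-\ep^2}$ at the threshold, and the two endpoints are exactly what make both margins equal to $\delta$.

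With the $\delta$ margin established, the remainder is a routine concentration-and-union-bound argument. Since $\hat\theta_{uv}$ is a sum of $k$ independent terms each lying in $[0,\pi/k]$ with mean $\theta_{uv}$, a one-sided Hoeffding bound gives $\Pr[\hat\theta_{uv} - \theta_{uv} \ge \delta] \le \exp(-2\delta^2 k/\pi^2)$, and symmetrically for the lower tail; substituting $k \ge \pi^2\ln(nm)/(2\delta^2)$ makes each such tail at most $1/(nm)$. For each of the $m$ edges I only need the single relevant one-sided event (the upper tail for edges above the interval, the lower tail for edges below it), so a union bound over the edges gives total failure probability at most $m \cdot 1/(nm) = 1/n$; note that the union bound needs no independence, which is convenient because all edges share the same $k$ random hyperplanes. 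On the complementary event, each edge outside the interval satisfies the relevant one-sided bound $\lvert\hat\theta_{uv} - \theta_{uv}\rvert < \delta$, and combining this with the $\delta$ margin from the previous paragraph keeps $\hat\theta_{uv}$ strictly on the same side of $\theta_\ep$ as $\theta_{uv}$. Hence \whp\ every such edge is classified correctly, which is the claim.
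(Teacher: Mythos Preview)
Your proposal is correct and follows the same high-level architecture as the paper: translate to angles, show that the asymmetric similarity interval corresponds to a symmetric angular margin of $\delta$ around $\theta_\ep=\arccos\ep$, apply a one-sided Hoeffding bound with $k\ge \pi^2\ln(nm)/(2\delta^2)$ to each edge, and union-bound over the $m$ edges.

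Where you differ is in the two geometric sub-steps. For the upper case the paper linearizes $\cos$ at $\theta_\ep$ (the tangent line $g_\ep$) and invokes concavity to get $\cos(\theta_\ep-\delta)\le \ep+\sqrt{1-\ep^2}\,\delta$; you obtain the same inequality more directly via the addition formula $\cos(\theta_\ep-\delta)=\ep\cos\delta+\sqrt{1-\ep^2}\,\sin\delta$ together with $\cos\delta\le 1$, $\sin\delta\le\delta$. For the lower case the paper uses the secant line $f_\ep$ from $(\theta_\ep,\ep)$ to $(\pi/2,0)$ and concavity to bound $\cos(\theta_\ep+\delta)\ge \ep-\delta$; your integral argument $\arccos(\ep-\delta)-\arccos\ep=\int_{\ep-\delta}^{\ep}(1-x^2)^{-1/2}\,dx\ge\delta$ is cleaner and avoids the auxiliary line altogether. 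Both routes are elementary; yours is arguably tidier and sidesteps the paper's informal ``plotting shows'' justifications. One small point you gloss over: in the upper case you need $\theta_\ep-\delta\ge 0$ to invoke monotonicity of $\cos$ on $[0,\pi/2]$, and your ``quick range check'' only explicitly handles the vacuous regime where the right endpoint exceeds $1$. You should also note (as the paper does) that when the right endpoint is at most $1$ one automatically has $\theta_\ep\ge\delta$; this follows, for instance, from $\arccos\ep\ge\sqrt{(1-\ep)/(1+\ep)}$ on $[0,1]$.
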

\begin{proof}
  Consider an arbitrary edge $\set{u, v} \in E$ with an exact cosine
  similarity $s \in [0, 1]$ outside the interval $(\ep - \delta, \ep + \sqrt{1 -
  \ep^2}\delta)$. It suffices to prove that the edge is correctly classified by
  the approximate cosine similarity with probability at least $1 - 1/(nm)$. Then,
  applying a union bound over all $m$ edges gives that all edges outside the similarity
  interval are classified correctly \whp.

  Let $\theta = \cos(s)$ be the angle between the vectors corresponding to $\clone(u)$ and
  $\clone(v)$. The angle is in the
  range $[0, \pi/2]$ since all edge weights are non-negative. Recall from
  \cref{sec:lsh} that the SimHash estimate for the
  angle between the two vectors is $\hat{\theta} \sim \on{Binomial}(k, \theta/\pi) \cdot \pi/k$.
  Hoeffding's inequality~\cite{hoeffding1963probability} implies that
  given arbitrary $\ell \in \N$, $p \in [0,1]$, and $t > 0$, for a binomial
  random variable $X \sim \on{Binomial}(\ell, p)$, the probabilities
  $\Pr[X/\ell \ge p + t]$ and $\Pr[X/\ell \le p - t]$ are each
  bounded above by $\exp(-2\ell t^2)$.
  Using this inequality on $\hat{\theta}$
  with $\ell = k$, $p = \theta/\pi$, and $t = \delta/\pi$
  gives that
  both $\Pr[\hat{\theta} \ge \theta + \delta]$ and $\Pr[\hat{\theta} \le \theta - \delta]$ are
  each bounded above by
    $\exp\left(
      -2k \delta^2/\pi^2
    \right)
    \le 1/(nm)$.

  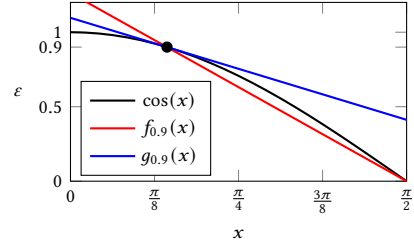
\begin{figure}\centering
  \begin{tikzpicture}
  \begin{axis}[
          width=0.34*\textwidth,
          height=0.18*\textheight,
          xmin=0, xmax={pi/2}, 
          ymin=0, ymax=1.2, 
          domain=0:{pi/2},
          legend pos=south west,
          xlabel={$x$},
          xtick={0, 0.39269908169, 0.78539816339, 1.1780972451, 1.57079632679},
          xticklabels={$0$, $\frac{\pi}{8}$, $\frac{\pi}{4}$, $\frac{3\pi}{8}$,
            $\frac{\pi}{2}$},
          ytick={0, 0.5, 0.9, 1}, 
          ylabel={$\ep$},
          ylabel style={rotate=-90},
          legend style={font=\figfont},
          label style={font=\figfont},
          tick label style={font=\figfont},
  ]
  \addplot[mark=none, thick, black]  {cos(deg(x))};
  \addplot[mark=none, thick, red]  {0.9 - 0.9*(x - 0.451027)/(1.57079632679 - 0.451027)};
  \addplot[mark=none, thick, blue]  {0.9 - sin(deg(0.451027))*(x - 0.451027)};
  \node[circle,fill,inner sep=1.5pt] at (axis cs:0.451027,0.9) {};
  \legend{$\cos(x)$,$f_{0.9}(x)$,$g_{0.9}(x)$}
  \end{axis}
  \end{tikzpicture}
  \caption{Plot of the SimHash approximation lower and upper bound functions on
  cosine for $\ep = 0.9$. The bold point is $(\phi, \eps)$.}
  \Description{
    Plot of the lower and upper bound functions that we use to bound
    $\cos(\phi + \delta)$ and $\cos(\phi - \delta)$.
  }
  \label{fig:simhash-cos}
\end{figure}

  Let $\phi = \arccos(\ep) \in [0,\pi/2]$ be the similarity threshold $\ep$ transformed into an angle
  threshold. First, consider the case where $s \in [0, \ep - \delta]$, which also
  implies that $\ep \ge \delta$. The
  straight line from
  the point $(\phi, \ep)$ to the point $(\pi/2, 0)$ has the equation
  \[
    f_\ep(x) = \ep - \frac{\ep}{\pi/2 - \phi}(x - \phi)
  ,\]
  which \cref{fig:simhash-cos} shows in red for $\ep=.9$.
  By concavity of the cosine function in $[0, \pi/2]$, we have that $\cos(x) \ge f_\ep(x)$ when $x
  \in [\phi, \pi/2]$. We have that $\phi + \delta =
  \arccos(\ep) + \delta \le \arccos(\delta) + \delta \le \pi/2$;
  the first inequality comes from the arccosine function being decreasing
  combined with the constraint that $\ep \ge \delta$, and
  the second inequality comes from taking derivatives to maximize
  $\arccos(\delta) + \delta$ for $\delta \in (0, 1)$. Therefore, we can
  substitute $\phi + \delta$ for $x$ in
  the inequality $\cos(x) \ge f_\ep(x)$ to get that
  $\cos(\phi+\delta) \ge \ep -
  \frac{\ep}{(\pi/2 - \arccos(\ep))}\delta$.  Plotting the multiplicative factor
  $\frac{\ep}{(\pi/2 - \arccos(\ep))}$ with varying $\ep$ shows that the
  factor falls in the range $ [2/\pi, 1]$, giving a looser but clearer
  bound that $\cos(\phi+\delta) \ge \ep - \delta \ge s$. Taking the arccosine of
  the leftmost and rightmost sides of the inequality gives $
  \phi+\delta
  = \arccos(\cos(\phi+\delta))
  \le
  \arccos(s) = \theta$, where the first equality uses the fact that $\phi + \delta \in [0,
  \pi/2]$. Now the upper bound on $\Pr[\hat{\theta} \le \theta - \delta]$
  gives that
  the probability that $\hat{\theta} > \theta - \delta \ge \phi$ is at least $1 - 1/(nm)$. Taking
  the cosine of both sides gives that the
  cosine similarity estimate $\cos(\hat{\theta})$ falls below $\ep$ with
  probability at least $1-1/(nm)$ as desired.

  Next, consider  the case where $s \in [\ep + \sqrt{1 - \ep^2}\delta, 1]$.
  If $\ep = 1$, then $s = 1$ and SimHash will always return the correct estimate
  $\cos(\hat{\theta}) = \cos(0) = 1$ as desired. For $\ep < 1$,
  define $h(\delta) = \left(1-\delta^2\right)/\left(1+\delta^2\right)$ and note
  that
  \begin{align*}
    \ep + \sqrt{1-\ep^2}\delta \le 1 \iff
    \delta \le \frac{1-\ep}{\sqrt{1-\ep^2}} \iff \\
    \delta^2 \le \frac{(1-\ep)^2}{1-\ep^2} =
    \frac{1-\ep}{1+\ep}  \iff
    \delta^2 + \ep\delta^2 \le 1-\ep \iff
    \ep \le h(\delta)
  \end{align*}
  Next, linearize the cosine function at the input point $\phi$ to get the line
  \begin{align*}
    g_\ep(x) &= \ep - \sin(\phi)(x - \phi)
    = \ep - \sin(\arccos(\ep))(x - \phi) \\
             &= \ep - \sqrt{1- \ep^2} (x - \phi)
  ,\end{align*}
  which \cref{fig:simhash-cos} shows in blue for $\ep=.9$.
  By concavity of the cosine function, we have that
  $\cos(x) \le g_\ep(x)$ when $x \in [0, \pi/ 2]$.
  Note that we have that $\phi - \delta = \arccos(\ep) - \delta
  \ge \arccos(h(\delta)) - \delta \ge 0$;
  the first inequality comes from the arccosine function being decreasing
  combined with the constraint that $\ep \le h(\delta)$, and
  the second inequality comes from plotting
  $\arccos(h(\delta)) - \delta$ to see that it is
  non-negative for $\delta \in (0, 1)$ .
  Hence, we can substitute $\phi - \delta$ for $x$ in the inequality $\cos(x) \le
  g_\ep(x)$ to get that
    $\cos(\phi - \delta) \le \ep + \sqrt{1 - \ep^2}\delta \le s$.
  Taking the arccosine of the leftmost and rightmost sides of the inequality
  gives $\phi-\delta
  = \arccos(\cos(\phi-\delta))
  \ge \arccos(s) = \theta$, where the first equality uses the fact that $\phi - \delta \in [0,
  \pi/2]$.
  Now, the upper bound on $\Pr[\hat{\theta} \ge \theta + \delta]$ gives that
  the probability that $\hat{\theta} <  \theta + \delta \le \phi$ is at least $1 - 1/(nm)$. Taking
  the cosine of both sides gives that the
  cosine similarity estimate $\cos(\hat{\theta})$ is above $\ep$ with
  probability at least $1-1/(nm)$ as desired.
\end{proof}
We also present a similar result for approximating the Jaccard similarity using MinHash.
\begin{theorem}
  \label{thm:minhash-accuracy-bound}
  Let $G = (V, E)$ be an undirected graph, let $\ep \in [0,1]$, and let $\delta
  \in (0,1)$. Suppose $k \ge \ln(nm)/\left(2\delta^2\right)$
  and suppose we use standard MinHash with $k$ samples to
  compute approximate Jaccard similarity scores for every edge in $G$ .
  Then \whp, all edges with exact Jaccard similarities outside
  the interval $(\ep - \delta, \ep + \delta)$ are correctly
  classified by the approximate similarities as above or below the
  threshold $\ep$.
\end{theorem}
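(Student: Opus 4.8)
The plan is to mirror the proof of \cref{thm:simhash-accuracy-bound}, but the argument is considerably simpler because the MinHash estimator is an unbiased estimator of the Jaccard similarity \emph{itself}, scaled from a binomial count, rather than an estimator of an angle that must then be passed through the nonlinear cosine function. Consequently, no concavity or linearization argument is required, and the interval $(\ep - \delta, \ep + \delta)$ is symmetric about $\ep$ rather than lopsided as in the SimHash case.

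First I would fix an arbitrary edge $\set{u, v} \in E$ whose exact Jaccard similarity $s = \on{JaccardSim}(\clone(u), \clone(v))$ lies outside $(\ep - \delta, \ep + \delta)$, so that either $s \le \ep - \delta$ or $s \ge \ep + \delta$. As in the SimHash proof, it suffices to show that this single edge is misclassified with probability at most $1/(nm)$, since a union bound over all $m$ edges then gives a total failure probability of at most $1/n$, which is the bound required for \whp.

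Next I would recall from \cref{sec:lsh} that each of the $k$ independent MinHash coordinates agrees on $\clone(u)$ and $\clone(v)$ with probability exactly $s$; here it is worth noting explicitly that both closed neighborhoods are non-empty (each contains its own vertex, and their intersection contains both $u$ and $v$ since $\set{u,v}\in E$), so the MinHash matching guarantee applies verbatim. Thus the number of agreeing coordinates is distributed as $\on{Binomial}(k, s)$, and the estimate $\hat{s}$ is this count divided by $k$. Applying Hoeffding's inequality with $\ell = k$, $p = s$, and $t = \delta$ gives that both $\Pr[\hat{s} \ge s + \delta]$ and $\Pr[\hat{s} \le s - \delta]$ are bounded above by $\exp(-2k\delta^2)$, and the hypothesis $k \ge \ln(nm)/(2\delta^2)$ makes this bound at most $1/(nm)$.

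Finally I would split into the two cases. If $s \le \ep - \delta$, then the edge is misclassified only if $\hat{s} \ge \ep$, which forces $\hat{s} \ge s + \delta$ and therefore occurs with probability at most $\exp(-2k\delta^2) \le 1/(nm)$; symmetrically, if $s \ge \ep + \delta$, misclassification requires $\hat{s} < \ep \le s - \delta$, again bounded by $1/(nm)$. The union bound then completes the argument. I expect this proof to have essentially no hard step: the only point requiring genuine care is confirming that the per-coordinate MinHash match probability is exactly the Jaccard similarity of the (non-empty) closed neighborhoods, after which the result is a direct one-line application of Hoeffding's inequality followed by a union bound.
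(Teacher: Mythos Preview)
Your proposal is correct and takes essentially the same approach as the paper: the paper's own proof simply states that the result follows from applying Hoeffding's inequality as in the proof of \cref{thm:simhash-accuracy-bound} and omits the details, and your write-up supplies exactly those details (per-edge Hoeffding bound of $\exp(-2k\delta^2)\le 1/(nm)$ followed by a union bound over the $m$ edges).
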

\begin{proof}
  The result follows from applying Hoeffding's inequality as in the proof of
  \cref{thm:simhash-accuracy-bound}. We omit full details for brevity.
  %
  %
\end{proof}

Though the bounds in these theorems require a large number of samples $k$ to
achieve high accuracy, our experiments in \cref{chapter:experiments} show that
lower values of $k$ still achieve good clusterings.
This approximation strategy helps for denser graphs with large
arboricity.

\section{Implementation}\label{sec:alg-impl}

We implement the algorithms described in
\cref{sec:alg-basic-description,sec:alg-approx} to determine how they perform
in practice. We write our code in C++ within the Graph Based Benchmark Suite
(GBBS) framework~\cite{dhulipala2018theoretically,dhulipala2020graph}. GBBS
provides libraries useful for implementing parallel graph
algorithms. Our implementations use the concurrent hash table
implementation~\cite{shun2014phase}, parallel sorting algorithms, and various
graph processing helper functions that GBBS provides.

Though the algorithms as described in
\cref{sec:alg-basic-description,sec:alg-approx} achieve good
theoretical bounds, our actual implementations make several changes for better
performance. This section explains the more significant changes.

\subsection{Computing similarities}\label{sec:compute-sims-impl}

We implement similarity computation for both cosine similarity and Jaccard
similarity.
Experiments by Shun and Tangwongsan~\cite{shun2015multicore} suggest that
the hash-based approach to triangle counting or computing similarities in
\cref{alg:compute-sims} incurs many cache misses and that a merge-based
approach is faster in comparison, even though it increases the asymptotic
work bound from $O(m\alpha)$ to $O(m^{3/2})$.
Our implementation uses the merge-based approach of Shun and Tangwongsan.
This approach assumes that each neighbor list in the adjacency list of the
input graph is sorted by vertex number, which is true for graphs converted to
GBBS's graph file format. In order to count each triangle only once and
hence reduce work, we construct a directed version of the input
graph by filtering each neighbor list so as to direct each edge towards its
higher-degree vertex.  Then, for each pair of adjacent vertices $(u, v)$,
we find triangles of the form $\set{(u, v), (v, x), (u, x)}$ for $x$ in $N(u) \cap
N(v)$ by merging the out-neighborhoods of $u$ and $v$ in the directed graph.
To get similarity scores for each pair of adjacent vertices, the implementation
maintains an atomic counter for each edge and increments the counters for
all three edges of any triangle found.

The merge logic between two neighbor lists follows the logic of the parallel
merge implementation in GBBS: if both neighbor lists are small, we iterate through
the sorted neighbor lists sequentially to find shared neighbors; if one neighbor
list is small and the other is large, then we search for each element of the small
neighbor list in the larger list via binary search; and finally, if both neighbor lists are
large, then we split them into smaller sub-lists and recursively merge the
sub-lists in parallel.

To compute similarities using matrix multiplication instead of merge, we use
the Intel Math Kernel Library's \code{cblas\_sgemm} function
for matrix multiplication.\footnote{\url{
https://software.intel.com/content/www/us/en/develop/tools/oneapi/components/onemkl.html}}
Though its documentation does not provide asymptotic running time bounds, it
runs well in practice.

\subsection{Querying for clusters}

When querying
the index for clusters (\cref{alg:cluster}), we find the connected components on
the core vertices (\cref{algline:cluster:conn}) by
using the concurrent union-find implementation from the GBBS
codebase~\cite{dhulipala2021connectit}.
Using union-find
allows us to avoid materializing the subgraph to pass to a work-efficient connectivity
algorithm.  We ``union'' the edges in
$\cvar{similar\_core\_edges}$ (\cref{algline:cluster:filter-non-cores}) and
apply ``find'' to each vertex to
populate an $n$-length array of vertices' cluster IDs rather than a hash table
as described in \cref{algline:cluster:conn}.
Having this array also
simplifies the logic for $\textsc{AssignNonCores}$ (\cref{alg:assign-non-cores})
by changing $\textsc{AssignNonCores}$ to skip the preprocessing in
lines~\ref{algline:assign-non-cores:get-non-cores-begin}--\ref{algline:assign-non-cores:preprocess-end}
and instead compare-and-swap directly into the cluster ID array.

\subsection{Approximate similarities}\label{sec:approx-sims-impl}

We implement similarity approximation logic using both SimHash and MinHash. For
MinHash, we use a variant called $k$-partition MinHash, or one permutation
hashing~\cite{li2012one}.
It is more computationally efficient than the original version of
MinHash; computing a sketch of a vertex $v$ takes only $O(k +
\smallabs{\clone(v)})$ work using $k$-partition MinHash, rather than
$O(k\smallabs{\clone(v)})$ work using standard MinHash, since $k$-partition
MinHash generates a $k$-length sketch using only one permutation rather than $k$
permutations. The $k$-partition
variant still provides reasonable clustering results, but the accuracy bound
in \cref{thm:minhash-accuracy-bound} no longer applies for this variant.

When the number of samples $k$ for the LSH approximation scheme
is high, it becomes more expensive to compute and process sketches
to get approximate similarities. For low-degree vertices, the
merge-based algorithm described in \cref{sec:compute-sims-impl} is cheap and cache-friendly enough that
it is better to compute similarities exactly rather than approximately. As
a simple example, if two adjacent vertices have degree
significantly less than $k$, it is faster and more accurate to process the original
neighbor lists of the vertices instead of their $k$-length sketches.
To avoid sketching low-degree vertices, we add a heuristic to choose which
vertices to sketch and which similarities to approximate. The heuristic is to
only approximate similarities between pairs of vertices that both have sufficiently high
degree and to compute similarities exactly with triangle counting for all other
pairs of vertices. We determine whether a vertex is high degree by checking
whether its degree exceeds a threshold value of $k$ for
approximate cosine similarity and $3k/2$ for approximate Jaccard similarity. No
sketches are needed for vertices that either do not have high degree or do not
have any neighbors with high degree.

\section{Experiments}\label{chapter:experiments}

Our timing experiments show that our implementation achieves good
speedup over sequential baselines and performs
competitively against ppSCAN~\cite{che2018parallelizing}, a state-of-the-art parallel
shared-memory SCAN implementation. Our results for our approximate
SCAN implementation suggest that LSH can speed
up index construction while maintaining good clustering quality.

Non-shared-memory parallel algorithms fail to outperform our implementation as well.
Zhao et al.'s reported timings for their distributed SCAN
algorithm~\cite{zhao2013pscan} are much slower than our times;
they report taking 36 minutes with fifteen eight-core machines to
cluster their largest graph, which has four million vertices and
sixty million edges, whereas our algorithm takes under three seconds to process
the larger, denser Orkut graph using fewer cores.
Chen et al.~\cite{chen2013pscan} and Zhou and Wang~\cite{zhou2015sparkscan} only test
their distributed SCAN algorithms on graphs with fewer than two million edges and do not report
times. Stovall et al.~\cite{stovall2014gpuscan} only test their GPU-based SCAN algorithm on graphs with
fewer than six million edges, whereas we focus on much larger graphs in our
experiments.

\subsection{Benchmarking environment}

We run experiments on an Amazon EC2 \texttt{c5.24xlarge}
instance, which has 192 GiB of RAM and 48 CPU cores with two-way hyper-threading
for a maximum of 96 hyper-threads. We enable hyper-threading in our parallel experiments by default.
We implement our parallel algorithm using the merge-based approach for computing
similarities described in \cref{sec:compute-sims-impl} ({\ourscan} in the experimental plots)
as well as using matrix multiplication ({\ourscan}-MM in the plots).
We compare our parallel algorithm
using all 48 cores with hyper-threading to our algorithm using only 1 thread, to the
original sequential GS*-Index implementation,\footnote{We obtained the GS*-Index code via personal
correspondence with its authors.} and to
ppSCAN with AVX2 instructions\footnote{The ppSCAN code is available at
\url{https://github.com/RapidsAtHKUST/ppSCAN/tree/master/pSCAN-refactor}.} using all 48 cores with hyper-threading.
ppSCAN's authors show that ppSCAN outperforms other existing parallel SCAN
algorithms (pSCAN~\cite{chang2017pscan},
anySCAN~\cite{mai2018scalable}, and SCAN-XP~\cite{takahashi2017scan}).
For fixed parameters $\mu$ and $\ep$, all of these algorithms return
the same output, except that ambiguous border vertices may have
different assignments. All code written is C++ code, and compiled with GCC 7.5.0 using
the \code{-O3} optimization flag.
The {\ourscan}  code uses GBBS's scheduler library~\cite{blelloch2020parlaylib}
written using standard C++ threads. We run the parallel codes with \code{numactl
  -{}-interleave=all}, which interleaves memory allocations across
CPUs and gives better performance for this particular problem on the
EC2 instance.
Each time measurement is the median of five trials, unless specified otherwise.

\begin{table}
  {\figfont
  \begin{tabular}{llll}
  \toprule
  Name & Number of vertices & Number of edges & Type \\
  \midrule
  Orkut & 3,072,441 & 117,185,083 & unweighted \\
  brain & 784,262 & 267,844,669 & unweighted \\
  WebBase & 118,142,155 & 854,809,761 & unweighted \\
  Friendster & 65,608,366 & 1,806,067,135 & unweighted \\
  blood vessel & 25,825 & 70,240,269 & weighted \\
  cochlea & 25,825 & 282,977,319 & weighted \\
  \bottomrule
  \end{tabular}
  }
    \caption[Graph used in experiments]{Summary of the graphs for the experiments.}
  \label{tab:experiment-graphs}
\end{table}

\Cref{tab:experiment-graphs} summarizes the graphs that we use in the experiments.
``Orkut'' and ``Friendster'' are the com-Orkut and com-Friendster graphs
from the Stanford Large Network Dataset
Collection~\cite{snapnets}.\footnote{\url{https://snap.stanford.edu/data/}}
Both are social network graphs in which the vertices are users and the edges
represent friend relationships. ``Brain'' is the bn-human-Jung2015-M87113878
dataset from NeuroData\footnote{\url{https://neurodata.io/}} provided by
Network
Repository~\cite{networkrepository}.\footnote{\url{http://networkrepository.com/bn-human-Jung2015-M87113878.php}}
The graph represents a mapping of human brain connections. ``WebBase'' is the
webbase-2001 graph from the Laboratory for Web
Algorithmics~\cite{lawgraphs1,lawgraphs2}.\footnote{\url{http://law.di.unimi.it/webdata/webbase-2001/}}
The graph represents the links discovered by a web crawler. Although the original WebBase graph
is a directed graph, we change the edges to be
undirected and remove self-loop edges
so that SCAN can operate on the graph. ``Blood vessel'' and
``cochlea'' are weighted graphs from
HumanBase~\cite{greene2015understanding}.\footnote{\url{https://hb.flatironinstitute.org/download}
under the ``top edges'' column}  Vertices represent genes, edges represent pairs
of genes with evidence of a functional relationship in blood vessel tissues or
cochlea tissues, and edge weights represent the probability of there being a
 relationship. For computational convenience, on the brain,
Friendster, blood vessel, and cochlea graphs, we compact vertex IDs so that all IDs
are contiguous with no zero-degree vertices.

Neither GS*-Index and ppSCAN run on weighted graphs, so we do not run
them on the blood vessel and cochlea graphs. We also only test cosine
similarity on the weighted graphs since we did not implement weighted Jaccard
similarity for {\ourscan}.

\subsection{Clustering quality measures}\label{sec:clustering-quality-measures}
We evaluate our clustering results using the modularity and adjusted
Rand index measures. These quality measures are popular and consistent with
existing graph clustering literature.
%
The \emph{modularity} of a clustering is the proportion of edges that
fall within clusters in the clustering minus the expected number of
edges that would fall within clusters in a random graph with the same
degree distribution~\cite{newman2004finding}. Specifically, fix some
clustering, let $A_{u,v}$ for arbitrary vertices $u$ and $v$ be $1$ if $u$ and
$v$ are neighbors and be $0$ otherwise, and let $\delta_{u,v}$
be $1$ if $u$ and $v$ are assigned the same cluster and be $0$
otherwise. The modularity is computed as
\begin{align*}
  \frac{1}{2m}\sum_{u,v \in V}\left( A_{u,v} - \frac{\abs{N(u)}\abs{N(v)}}{2m} \right)\delta_{u,v}
.\end{align*}
The definition of modularity also easily extends to
weighted graphs~\cite{newman2004analysis}.
Higher modularity scores suggest better clusterings.

Another way to measure the quality of a proposed clustering is to check how similar it is
against a known ground-truth clustering. The \emph{adjusted Rand index}
(ARI)~\cite{hubert1985comparing} is one well-known metric for evaluating
this similarity.
ARI counts the number of pairs of vertices, such that the two vertices are assigned to the same
clusters or to different clusters in both the proposed clustering and the
ground-truth clustering. This count is then adjusted for chance.
Let $\mathcal{C}$ be the proposed clustering on the set of $n$
vertices $V$ and let $\mathcal{G}$ be the ground-truth clustering. For
integers $i$ in $\set{1, 2, 3, \ldots, \abs{\mathcal{C}}}$ and $j$ in
$\set{1, 2, 3, \ldots, \abs{\mathcal{G}}}$, let $n_{i, j}$ be the
number of vertices in both cluster $i$ of $\mathcal{C}$ and cluster
$j$ of $\mathcal{G}$. Let $n_{i,*} =
\sum_{j=1}^{\abs{\mathcal{G}}} n_{i, j}$ and let $n_{*,j} =
\sum_{i=1}^{\abs{\mathcal{C}}} n_{i, j}$ for each $i$ and $j$.
Then, the ARI between $\mathcal{C}$ and
$\mathcal{G}$ is
\begin{align*}
  \frac{
    \sum_{i=1}^{\abs{\mathcal{C}}}
    \sum_{j=1}^{\abs{\mathcal{G}}}
    \binom{n_{i,j}}{2}
    -
    \sum_{i=1}^{\abs{\mathcal{C}}} \binom{n_{i,*}}{2}
    \sum_{j=1}^{\abs{\mathcal{G}}} \binom{n_{*,j}}{2}
    / \binom{n}{2}
  }{
    \left(
      \sum_{i=1}^{\abs{\mathcal{C}}} \binom{n_{i,*}}{2}
      +
      \sum_{j=1}^{\abs{\mathcal{G}}} \binom{n_{*,j}}{2}
    \right) / 2
    -
    \sum_{i=1}^{\abs{\mathcal{C}}} \binom{n_{i,*}}{2}
    \sum_{j=1}^{\abs{\mathcal{G}}} \binom{n_{*,j}}{2}
    / \binom{n}{2}
  }
.\end{align*}
Higher ARI scores suggest a better match with the ground-truth clustering.
Neither the modularity nor the ARI can exceed $1$, and they may
be negative if the clustering is ``worse than random.''

\subsection{Results}

\subsubsection{Index construction time comparison}

The first experiment measures the running time to construct the SCAN index with
exact cosine similarity. The time to compute the index using Jaccard
similarity is about the same (at most $9$\% difference for {\ourscan} on 48 cores), so we do not report it
separately.
\Cref{fig:index-construction-times} shows the time measurements.
{\ourscan} achieves a parallel
self-relative speedup of 23--70$\times$ on index construction. Moreover,
{\ourscan} running sequentially is 1.4-2.2$\times$ faster than the original
GS*-Index implementation, likely due to the directed triangle counting
optimization that \cref{sec:compute-sims-impl} describes, so the speedup of
{\ourscan} on 48 cores with hyper-threading is 50--151$\times$ over GS*-Index.
On the two dense graphs with fewer vertices, {\ourscan}-MM outperforms
{\ourscan}, but it takes too much memory to run on the other graphs.

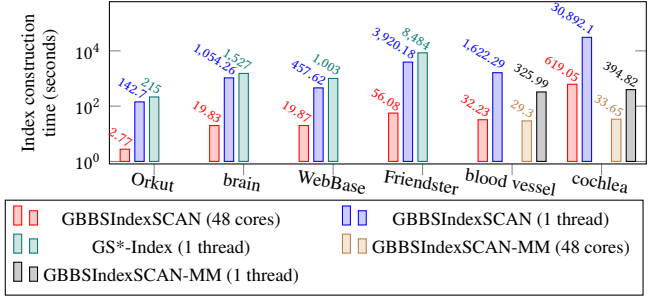
\begin{figure} \centering
\begin{tikzpicture}
\begin{axis}[
    width=0.49*\textwidth,
    bar width=0.007*\textwidth,
    height=0.17*\textheight,
    ybar,
    enlarge x limits=0.1, 
    legend style={
      at={(0.40,-.22)},
      anchor=north,
      font=\figfont,
      /tikz/every even column/.append style={column sep=0.5cm} 
    },
    legend columns=2,
    ylabel style={align=center}, 
    ylabel={Index construction\\ time (seconds)},
    ymode=log,
    ymax=600000, 
    point meta=rawy, 
    every node near coord/.append style={rotate=-32, anchor=east, xshift=4pt,
      yshift=3pt, font=\tiny}, 
    symbolic x coords={Orkut,brain,WebBase,Friendster,blood vessel,cochlea},
    xtick=data,
    xticklabel style={rotate=-10, anchor=north, yshift=3pt}, 
    xtick pos=bottom, 
    nodes near coords,
    nodes near coords align={vertical},
    label style={font=\figfont},
    tick label style={font=\figfont},
    ]
    \addplot[color=red,fill=red!20!white] coordinates {(Orkut, 2.76594) (brain, 19.8275) (WebBase, 19.8692)
        (Friendster, 56.0829) (blood vessel, 32.2283) (cochlea, 619.052)};
    \addplot[color=blue,fill=blue!20!white] coordinates {(Orkut, 142.697) (brain, 1054.26) (WebBase, 457.622)
        (Friendster, 3920.18) (blood vessel, 1622.29) (cochlea, 30892.1)};
    \addplot[color=PineGreen,fill=PineGreen!20!white] coordinates {(Orkut, 215) (brain, 1527) (WebBase, 1003) (Friendster, 8484)};
    \addplot[color=brown,fill=brown!20!white] coordinates {(blood vessel,
      29.2982) (cochlea, 33.6489)};
    \addplot[color=black,fill=black!20!white] coordinates {(blood vessel,
      325.9944) (cochlea, 394.8165)};
    \legend{{\ourscan} (48 cores),{\ourscan} (1 thread), GS*-Index
    (1 thread), {\ourscan-MM} (48 cores), {\ourscan-MM} (1 thread) }
\end{axis}
\end{tikzpicture}
\caption[Exact index construction times]{Index construction times with exact cosine
  similarity as the similarity measure. We only run {\ourscan}-MM on the blood
  vessel and cochlea graphs, whose
  adjacency matrices fit in memory.}
\Description{
  Plot displaying the construction times for {\ourscan} and GS*-Index.
  We only run {\ourscan}-MM on the small dense graphs, whose
  adjacency matrices fit in memory.
}
\label{fig:index-construction-times}
\end{figure}


\subsubsection{Clustering time comparison}

The second experiment is to measure the running time for querying for the
clustering over various settings of parameters $(\mu, \eps)$. The plots
only consider exact cosine similarity since times are about the same
using Jaccard similarity (at most either $10^{-4}$ absolute difference or $4$\%
difference for {\ourscan} on 48 cores). Clustering behavior is the same
between {\ourscan} and {\ourscan}-MM, so we omit times for {\ourscan}-MM.
\Cref{fig:vary-eps-query-times} measures the running times with $\mu = 5$
and $\ep \in \set{.1,.2,.3,\ldots, .9}$, and \cref{fig:vary-mu-query-times} measures the running
times with $\ep = 0.6$ and $\mu \in \set{2, 4, 8, 16, \ldots, \min\set{16384,
2^{\floor{\log_2(\text{max degree})}}}}$.

\begin{figure*} \centering
  \begin{tikzpicture}
  \begin{groupplot}[
    group style =
      {group size=6 by 1,
       x descriptions at=edge bottom,
       ylabels at=edge left,
       horizontal sep=1cm,
     },
    height=0.15*\textheight,
    xlabel={$\ep$},
    xticklabel style={rotate=-45, anchor=west},
    x label style={at={(axis description cs:0.5,-0.2)}}, 
    ylabel={Query time (seconds)},
    ymode=log,
    width=0.18*\textwidth,
    legend columns=-1, 
    legend style={font=\figfont, /tikz/every even column/.append style={column sep=0.5cm}}, 
    label style={font=\figfont},
    tick label style={font=\figfont},
    title style={font=\figfont, yshift=-6pt},
  ]
  \nextgroupplot[title={Orkut},
    legend to name=legend-query-times]

      \addplot[mark options={solid,scale=0.5},mark=*,color=red] coordinates {
          (.1, .094003)
          (.2, .0586429)
          (.3, .028589)
          (.4, .0112741)
          (.5, .00304484)
          (.6, .000813007)
          (.7, .000353813)
          (.8, .000229836)
          (.9, .000138998)
      };
      \addlegendentry[color=black]{{\ourscan} (48 cores)}
      \addplot[mark options={solid,scale=1},mark=x,color=blue, densely dashed] coordinates {
        (.1, 4.65577)
        (.2, 2.42888)
        (.3, 1.11539)
        (.4, 0.398127)
        (.5, 0.097518)
        (.6, 0.0151381)
        (.7, 0.00278211)
        (.8, 0.000995159)
        (.9, 0.000819921)
      };
      \addlegendentry[color=black]{{\ourscan} (1 thread)}
      \addplot[mark options={solid,scale=1},mark=star,color=PineGreen,  densely dotted] coordinates {
          (.1, 1.45216)
          (.2, 0.917822)
          (.3, 0.48386)
          (.4, 0.178069)
          (.5, 0.05292)
          (.6, 0.012546)
          (.7, 0.002131)
          (.8, 0.001236)
          (.9, 0.00114)
      };
    \addlegendentry[color=black]{GS*-Index (1 thread)}
      \addplot[mark options={solid,scale=0.4},mark=square*,color=violet,dashdotted] coordinates {
          (.1, 1.267)
          (.2, 1.707)
          (.3, 1.566)
          (.4, 1.474)
          (.5, 1.129)
          (.6, 0.885)
          (.7, 0.62)
          (.8, 0.383)
          (.9, 0.27)
      };
    \addlegendentry[color=black]{ppSCAN (48 cores)}

  \nextgroupplot[title={brain}]
      \coordinate (c2) at (rel axis cs:1,1);

      \addplot[mark options={solid,scale=0.5},mark=*,color=red] coordinates {
          (.1, 0.234539)
          (.2, 0.203397)
          (.3, 0.142956)
          (.4, 0.096628)
          (.5, 0.0599639)
          (.6, 0.0339301)
          (.7, 0.0183439)
          (.8, 0.00739193)
          (.9, 0.000435829)
      };
      \addplot[mark options={solid,scale=1},mark=x,color=blue, densely dashed] coordinates {
          (.1, 10.2138)
          (.2, 8.54672)
          (.3, 6.09467)
          (.4, 3.9603)
          (.5, 2.401)
          (.6, 1.37229)
          (.7, 0.698837)
          (.8, 0.248032)
          (.9, 0.00476193)
        };
      \addplot[mark options={solid,scale=1},mark=star,color=PineGreen,  densely dotted] coordinates {
          (.1, 2.66912)
          (.2, 2.50678)
          (.3, 2.00977)
          (.4, 1.49183)
          (.5, 0.988608)
          (.6, 0.582534)
          (.7, 0.311483)
          (.8, 0.126319)
          (.9, 0.003197)
      };
      \addplot[mark options={solid,scale=0.4},mark=square*,color=violet,dashdotted] coordinates {
          (.1, 0.305)
          (.2, 0.418)
          (.3, 0.547)
          (.4, 0.64)
          (.5, 0.786)
          (.6, 0.893)
          (.7, 0.994)
          (.8, 0.83)
          (.9, 0.381)
      };

  \nextgroupplot[title={WebBase}]
    \addplot[mark options={solid,scale=0.5},mark=*,color=red] coordinates {
        (.1, 1.81441)
        (.2, 1.23065)
        (.3, 0.9109)
        (.4, 0.74103)
        (.5, 0.562619)
        (.6, 0.419559)
        (.7, 0.336332)
        (.8, 0.238046)
        (.9, 0.171985)
    };
    \addplot[mark options={solid,scale=1},mark=x,color=blue, densely dashed] coordinates {
        (.1, 72.0221)
        (.2, 50.4092)
        (.3, 37.1366)
        (.4, 31.4352)
        (.5, 22.5599)
        (.6, 16.489)
        (.7, 13.4312)
        (.8, 9.28345)
        (.9, 6.3021)
    };
    \addplot[mark options={solid,scale=1},mark=star,color=PineGreen,  densely dotted] coordinates {
        (.1, 16.0125)
        (.2, 12.6492)
        (.3, 10.6763)
        (.4, 8.93198)
        (.5, 7.33843)
        (.6, 5.8731)
        (.7, 4.66403)
        (.8, 3.50799)
        (.9, 2.5007)
    };
    \addplot[mark options={solid,scale=0.4},mark=square*,color=violet,dashdotted] coordinates {
        (.1, 2.291)
        (.2, 2.207)
        (.3, 2.037)
        (.4, 1.859)
        (.5, 1.751)
        (.6, 1.6)
        (.7, 1.496)
        (.8, 1.34)
        (.9, 1.215)
    };
  \nextgroupplot[title={Friendster}]
    \addplot[mark options={solid,scale=0.5},mark=*,color=red] coordinates {
        (.1, 1.42153)
        (.2, 0.407775)
        (.3, 0.092788)
        (.4, 0.0249019)
        (.5, 0.00943184)
        (.6, 0.00520706)
        (.7, 0.00429606)
        (.8, 0.00391889)
        (.9, 0.00378609)
    };
    \addplot[mark options={solid,scale=1},mark=x,color=blue, densely dashed] coordinates {
        (.1, 59.2489)
        (.2, 17.011)
        (.3, 3.83045)
        (.4, 0.911656)
        (.5, 0.286193)
        (.6, 0.103567)
        (.7, 0.0616119)
        (.8, 0.046922)
        (.9, 0.0408812)
    };
    \addplot[mark options={solid,scale=1},mark=star,color=PineGreen,  densely dotted] coordinates {
        (.1, 37.8245)
        (.2, 10.0391)
        (.3, 2.41182)
        (.4, 0.669428)
        (.5, 0.217547)
        (.6, 0.086502)
        (.7, 0.055531)
        (.8, 0.044358)
        (.9, 0.040409)
    };
    \addplot[mark options={solid,scale=0.4},mark=square*,color=violet,dashdotted] coordinates {
        (.1, 19.217)
        (.2, 17.19)
        (.3, 13.817)
        (.4, 11.066)
        (.5, 8.71)
        (.6, 6.845)
        (.7, 5.29)
        (.8, 3.408)
        (.9, 2.381)
    };
  \nextgroupplot[title={blood vessel}]
    \addplot[mark options={solid,scale=0.5},mark=*,color=red] coordinates {
        (.1, 0.0513239)
        (.2, 0.053705)
        (.3, 0.0398822)
        (.4, 0.022121)
        (.5, 0.0134902)
        (.6, 0.00805497)
        (.7, 0.00286102)
        (.8, 0.00137997)
        (.9, 0.000495195)
    };
    \addplot[mark options={solid,scale=1},mark=x,color=blue, densely dashed] coordinates {
        (.1, 1.44023)
        (.2, 1.43673)
        (.3, 1.14266)
        (.4, 0.567973)
        (.5, 0.283482)
        (.6, 0.131377)
        (.7, 0.0448461)
        (.8, 0.0137229)
        (.9, 0.00206804)
    };
  \nextgroupplot[title={cochlea}]
    \addplot[mark options={solid,scale=0.5},mark=*,color=red] coordinates {
        (.1, 0.192167)
        (.2, 0.192404)
        (.3, 0.191734)
        (.4, 0.1922)
        (.5, 0.189405)
        (.6, 0.180294)
        (.7, 0.153685)
        (.8, 0.075783)
        (.9, 0.0135601)
    };
    \addplot[mark options={solid,scale=1},mark=x,color=blue, densely dashed] coordinates {
        (.1, 8.22381)
        (.2, 8.20898)
        (.3, 8.22242)
        (.4, 8.21482)
        (.5, 8.10654)
        (.6, 7.58341)
        (.7, 6.31862)
        (.8, 2.88354)
        (.9, 0.44201)
    };
  \end{groupplot}
  \node at ($(group c3r1)!.5!(group c4r1) + (0, 1.45cm)$) {\ref*{legend-query-times}};
  \end{tikzpicture}
  \caption[Clustering query time with $\mu = 5$]{Clustering time with $\mu=5$
  and varying $\ep$ using exact cosine similarity as the similarity measure.}
  \Description{
    Plot displaying the time to cluster the graph with $\mu$ set to $5$ and with
    varying $\ep$ using {\ourscan} with 48 cores, {\ourscan} with one
    thread, GS*-Index with one thread, and ppSCAN with 48 cores.
  }
  \label{fig:vary-eps-query-times}
\end{figure*}
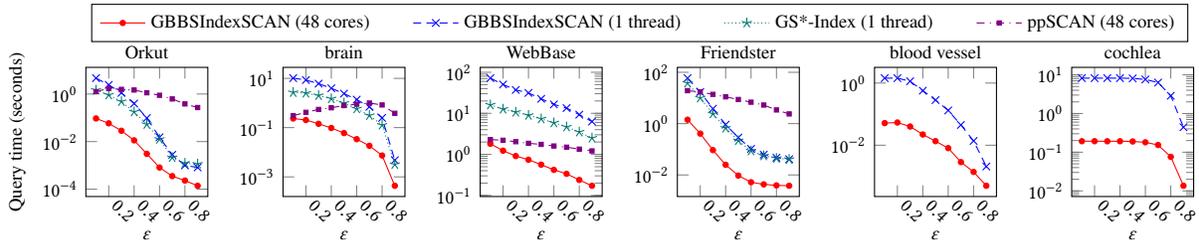

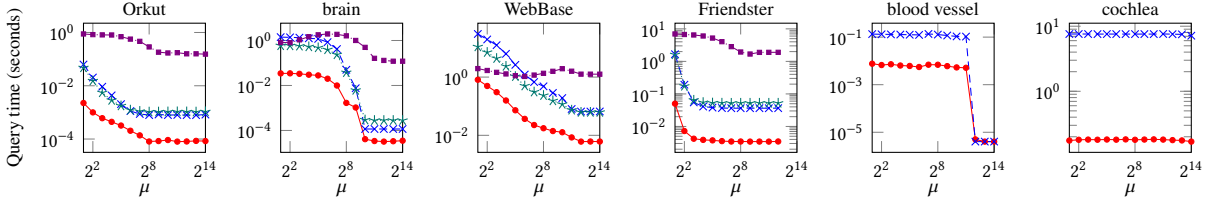
\begin{figure*} \centering
  \begin{tikzpicture}
  \begin{groupplot}[
    group style =
      {group size=6 by 1,
       x descriptions at=edge bottom,
       ylabels at=edge left,
       horizontal sep=1cm,
     },
    height=0.15*\textheight,
    xlabel={$\mu$},
    x label style={at={(axis description cs:0.5,-0.18)}}, 
    ylabel={Query time (seconds)},
    xmode=log,
    ymode=log,
    log basis x={2},
    xmin=2,
    xmax=16384,
    width=0.18*\textwidth,
    label style={font=\figfont},
    tick label style={font=\figfont},
    title style={font=\figfont, yshift=-6pt},
  ]
  \nextgroupplot[title={Orkut}]
      \addplot[mark options={solid,scale=0.5},mark=*,color=red] coordinates {
          (2, 0.0022552)
          (4, 0.00100017)
          (8, 0.000605822)
          (16, 0.000442028)
          (32, 0.000320911)
          (64, 0.000204802)
          (128, 0.000137091)
          (256, 8.08239e-05)
          (512, 8.39233e-05)
          (1024, 9.17912e-05)
          (2048, 8.01086e-05)
          (4096, 8.10623e-05)
          (8192, 8.51154e-05)
          (16384, 8.39233e-05)
      };
      \addplot[mark options={solid,scale=1},mark=x,color=blue, densely dashed] coordinates {
          (2, 0.0627329)
          (4, 0.0203891)
          (8, 0.0091989)
          (16, 0.00434899)
          (32, 0.00204206)
          (64, 0.00112486)
          (128, 0.000872135)
          (256, 0.00079298)
          (512, 0.000833988)
          (1024, 0.000794888)
          (2048, 0.000804901)
          (4096, 0.000791788)
          (8192, 0.000790119)
          (16384, 0.000826836)
      };
      \addplot[mark options={solid,scale=1},mark=star,color=PineGreen,  densely dotted] coordinates {
          (2, 0.047576)
          (4, 0.015254)
          (8, 0.005498)
          (16, 0.00284)
          (32, 0.00176)
          (64, 0.001223)
          (128, 0.001095)
          (256, 0.001033)
          (512, 0.001036)
          (1024, 0.001046)
          (2048, 0.001053)
          (4096, 0.001047)
          (8192, 0.001034)
          (16384, 0.001058)
      };
      \addplot[mark options={solid,scale=0.4},mark=square*,color=violet,dashdotted] coordinates {
          (2, 0.88)
          (4, 0.831)
          (8, 0.812)
          (16, 0.805)
          (32, 0.694)
          (64, 0.564)
          (128, 0.465)
          (256, 0.287)
          (512, 0.182)
          (1024, 0.17)
          (2048, 0.174)
          (4096, 0.158)
          (8192, 0.158)
          (16384, 0.154)
      };

  \nextgroupplot[title={brain}]
      \addplot[mark options={solid,scale=0.5},mark=*,color=red] coordinates {
          (2, 0.034415)
          (4, 0.0345809)
          (8, 0.0330091)
          (16, 0.030129)
          (32, 0.0279591)
          (64, 0.019978)
          (128, 0.00995016)
          (256, 0.00167894)
          (512, 0.00104618)
          (1024, 4.1008e-05)
          (2048, 3.38554e-05)
          (4096, 3.19481e-05)
          (8192, 3.31402e-05)
          (16384, 3.50475e-05)
      };
      \addplot[mark options={solid,scale=1},mark=x,color=blue, densely dashed] coordinates {
          (2, 1.41357)
          (4, 1.381)
          (8, 1.33615)
          (16, 1.25116)
          (32, 1.18129)
          (64, 0.84403)
          (128, 0.416078)
          (256, 0.049818)
          (512, 0.00754905)
          (1024, 0.00011611)
          (2048, 0.000115871)
          (4096, 0.00011611)
          (8192, 0.000115871)
          (16384, 0.00011611)
      };
      \addplot[mark options={solid,scale=1},mark=star,color=PineGreen,  densely dotted] coordinates {
          (2, 0.590159)
          (4, 0.582925)
          (8, 0.563906)
          (16, 0.526638)
          (32, 0.476063)
          (64, 0.38972)
          (128, 0.234019)
          (256, 0.03756)
          (512, 0.006122)
          (1024, 0.000293)
          (2048, 0.000278)
          (4096, 0.000278)
          (8192, 0.000279)
          (16384, 0.000279)
      };
      \addplot[mark options={solid,scale=0.4},mark=square*,color=violet,dashdotted] coordinates {
          (2, 0.818)
          (4, 0.803)
          (8, 1.069)
          (16, 1.478)
          (32, 1.786)
          (64, 1.953)
          (128, 1.875)
          (256, 1.61)
          (512, 1.003)
          (1024, 0.495)
          (2048, 0.161)
          (4096, 0.13)
          (8192, 0.12)
          (16384, 0.12)
      };

  \nextgroupplot[title={WebBase}]
    \addplot[mark options={solid,scale=0.5},mark=*,color=red] coordinates {
        (2, 0.810357)
        (4, 0.495299)
        (8, 0.303517)
        (16, 0.157027)
        (32, 0.072403)
        (64, 0.036696)
        (128, 0.0225549)
        (256, 0.0175118)
        (512, 0.0140271)
        (1024, 0.012743)
        (2048, 0.00846505)
        (4096, 0.00608706)
        (8192, 0.00612402)
        (16384, 0.00611711)
    };
    \addplot[mark options={solid,scale=1},mark=x,color=blue, densely dashed] coordinates {
        (2, 30.279)
        (4, 19.5896)
        (8, 12.4906)
        (16, 6.29713)
        (32, 2.63183)
        (64, 1.22212)
        (128, 0.677306)
        (256, 0.4812)
        (512, 0.286208)
        (1024, 0.190375)
        (2048, 0.0801871)
        (4096, 0.065706)
        (8192, 0.0656209)
        (16384, 0.0659809)
    };
    \addplot[mark options={solid,scale=1},mark=star,color=PineGreen,  densely dotted] coordinates {
        (2, 10.9597)
        (4, 7.03563)
        (8, 4.36268)
        (16, 2.34549)
        (32, 1.04628)
        (64, 0.496756)
        (128, 0.307526)
        (256, 0.233531)
        (512, 0.153503)
        (1024, 0.116425)
        (2048, 0.073369)
        (4096, 0.061627)
        (8192, 0.0617)
        (16384, 0.061691)
    };
    \addplot[mark options={solid,scale=0.4},mark=square*,color=violet,dashdotted] coordinates {
        (2, 1.957)
        (4, 1.685)
        (8, 1.455)
        (16, 1.282)
        (32, 1.117)
        (64, 1.051)
        (128, 1.172)
        (256, 1.348)
        (512, 1.58)
        (1024, 1.939)
        (2048, 1.592)
        (4096, 1.262)
        (8192, 1.26)
        (16384, 1.258)
    };
  \nextgroupplot[title={Friendster}]
    \addplot[mark options={solid,scale=0.5},mark=*,color=red] coordinates {
        (2, 0.0502429)
        (4, 0.00724387)
        (8, 0.00420213)
        (16, 0.00387192)
        (32, 0.00373602)
        (64, 0.0035491)
        (128, 0.0034492)
        (256, 0.00345111)
        (512, 0.00342798)
        (1024, 0.00343299)
        (2048, 0.00342703)
        (4096, 0.00345492)
    };
    \addplot[mark options={solid,scale=1},mark=x,color=blue, densely dashed] coordinates {
        (2, 1.6619)
        (4, 0.191906)
        (8, 0.0548241)
        (16, 0.0409329)
        (32, 0.0380552)
        (64, 0.0367911)
        (128, 0.036411)
        (256, 0.0364749)
        (512, 0.036478)
        (1024, 0.0365179)
        (2048, 0.0363429)
        (4096, 0.0365331)
    };
    \addplot[mark options={solid,scale=1},mark=star,color=PineGreen,  densely dotted] coordinates {
        (2, 1.6031)
        (4, 0.172846)
        (8, 0.064127)
        (16, 0.054367)
        (32, 0.052655)
        (64, 0.052061)
        (128, 0.05175)
        (256, 0.051832)
        (512, 0.051846)
        (1024, 0.051863)
        (2048, 0.051825)
        (4096, 0.0519)
    };
    \addplot[mark options={solid,scale=0.4},mark=square*,color=violet,dashdotted] coordinates {
        (2, 6.97)
        (4, 6.71)
        (8, 6.487)
        (16, 6.036)
        (32, 5.214)
        (64, 4.036)
        (128, 2.839)
        (256, 1.943)
        (512, 1.684)
        (1024, 1.892)
        (2048, 1.899)
        (4096, 1.901)
    };
  \nextgroupplot[title={blood vessel}]
    \addplot[mark options={solid,scale=0.5},mark=*,color=red] coordinates {
        (2, 0.00760794)
        (4, 0.00669789)
        (8, 0.00721598)
        (16, 0.00639582)
        (32, 0.00613189)
        (64, 0.00553989)
        (128, 0.00691795)
        (256, 0.00689292)
        (512, 0.00602603)
        (1024, 0.00532198)
        (2048, 0.00513697)
        (4096, 5.00679e-06)
        (8192, 4.05312e-06)
        (16384, 4.05312e-06)
    };
    \addplot[mark options={solid,scale=1},mark=x,color=blue, densely dashed] coordinates {
        (2, 0.134032)
        (4, 0.131393)
        (8, 0.130024)
        (16, 0.128199)
        (32, 0.126141)
        (64, 0.123702)
        (128, 0.135714)
        (256, 0.127727)
        (512, 0.116931)
        (1024, 0.106885)
        (2048, 0.105283)
        (4096, 4.05312e-06)
        (8192, 4.05312e-06)
        (16384, 4.05312e-06)
    };
  \nextgroupplot[title={cochlea}]
    \addplot[mark options={solid,scale=0.5},mark=*,color=red] coordinates {
        (2, 0.174993)
        (4, 0.179801)
        (8, 0.181049)
        (16, 0.180141)
        (32, 0.18148)
        (64, 0.180423)
        (128, 0.182211)
        (256, 0.179838)
        (512, 0.181261)
        (1024, 0.182456)
        (2048, 0.180912)
        (4096, 0.177656)
        (8192, 0.175623)
        (16384, 0.16859)
    };
    \addplot[mark options={solid,scale=1},mark=x,color=blue, densely dashed] coordinates {
        (2, 7.58723)
        (4, 7.59598)
        (8, 7.58198)
        (16, 7.56601)
        (32, 7.57966)
        (64, 7.57226)
        (128, 7.57069)
        (256, 7.56987)
        (512, 7.5735)
        (1024, 7.56765)
        (2048, 7.60462)
        (4096, 7.54639)
        (8192, 7.49629)
        (16384, 7.14768)
    };
  \end{groupplot}
  \end{tikzpicture}
  \caption[Clustering query time with $\ep = 0.6$]{Clustering time with $\ep=0.6$
  and varying $\mu$ using exact cosine similarity as the similarity measure.}
  \Description{
    Plot displaying the time to cluster the graph with varying $\mu$ and with
    $\mu$ set to 0.6 using {\ourscan} with 48 cores, {\ourscan} with one
    thread, GS*-Index with one thread, and ppSCAN with 48 cores.
  }
  \label{fig:vary-mu-query-times}
\end{figure*}

{\ourscan} is faster than ppSCAN and GS*-Index on all tested parameter
settings, though this ignores the time that
{\ourscan} takes to precompute its index.
This precomputation cost that {\ourscan} incurs is preferable over ppSCAN only when
the user makes many queries. Notably, though, it might not
take many queries for {\ourscan} to become preferable over ppSCAN. For
example, on the Orkut and Friendster graphs, the sum of the time
measurements for ppSCAN on the nine parameter settings in \cref{fig:vary-eps-query-times}
exceeds the sum of the corresponding measurements for {\ourscan} plus
the index construction time for {\ourscan}.

Sequentially, {\ourscan} can be slower at querying for clusters than
GS*-Index due to adjustments made
in {\ourscan} to make it more friendly to parallelism, namely using
union-find instead of sequential breadth-first search, as well as
iterating over all edges an additional time to assign non-core
vertices (\cref{alg:assign-non-cores}). It is up to 4.5$\times$ slower
than GS*-Index on the tested parameter settings and graphs. {\ourscan}
running on 48 cores, however, is faster than the other
implementations on all tested parameter settings; it is faster than
GS*-Index by 5--32$\times$ and faster than ppSCAN by
1.26--12,070$\times$.

\subsubsection{Approximate index construction time}

The third experiment measures the running time of constructing {\ourscan}
with 48 cores using the approximate cosine and approximate Jaccard similarity
measures with varying numbers of samples. For the weighted graphs, we only test
the approximate cosine similarity measure since the $k$-partition MinHash
variant that we implement for Jaccard similarity does not handle weighted graphs. Each trial uses a different pseudorandom seed for the
randomness in the approximation schemes.
\Cref{fig:approx-index-construction-times} displays the results.
The approximate Jaccard similarity implementation is consistently faster
than the approximate cosine similarity implementation because of the better
efficiency of constructing sketches for $k$-partition MinHash compared to
SimHash. The times plateau or even decrease at large sample sizes for some
graphs due to the heuristic discussed in \cref{sec:approx-sims-impl} for
avoiding sketching for low-degree vertices.

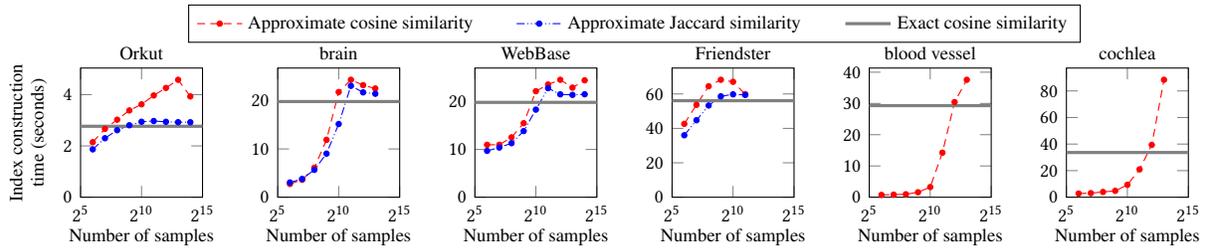
\begin{figure*} \centering
\begin{tikzpicture}
  \begin{groupplot}[
    group style =
      {group size=6 by 1,
       x descriptions at=edge bottom,
       ylabels at=edge left,
       horizontal sep=1cm,
     },
    height=0.15*\textheight,
    xlabel={Number of samples},
    x label style={at={(axis description cs:0.5,-0.18)}}, 
    ylabel style={align=center}, 
    ylabel={Index construction\\ time (seconds)},
    xmode=log,
    log basis x={2},
    xmin=32,
    xmax=32768,
    ymin=0,
    width=0.18*\textwidth,
    legend columns=-1, 
    legend style={font=\figfont,/tikz/every even column/.append style={column
      sep=0.5cm}}, 
    label style={font=\figfont},
    tick label style={font=\figfont},
    title style={font=\figfont, yshift=-6pt},
  ]
  \nextgroupplot[title={Orkut},
    legend to name=legend-approx-index-times]

      \addplot[mark options={solid,scale=0.5},mark=*,color=red, densely dashed] coordinates {
          (64, 2.14565)
          (128, 2.6652)
          (256, 3.02249)
          (512, 3.38223)
          (1024, 3.62116)
          (2048, 3.96687)
          (4096, 4.26101)
          (8192, 4.5783)
          (16384, 3.92902)
      };
      \addlegendentry[color=black]{Approximate cosine similarity}
      \addplot[mark options={solid,scale=0.5},mark=*,color=blue, densely dash dot dot] coordinates {
          (64, 1.86581)
          (128, 2.29971)
          (256, 2.61267)
          (512, 2.80486)
          (1024, 2.94262)
          (2048, 2.97076)
          (4096, 2.94041)
          (8192, 2.92462)
          (16384, 2.92243)
      };
      \addlegendentry[color=black]{Approximate Jaccard similarity}
      \addplot[color=gray, style={very thick}] coordinates {
          (32, 2.76594)
          (32768, 2.76594)
      };
      \addlegendentry[color=black]{Exact cosine similarity}

  \nextgroupplot[title={brain}]

      \addplot[mark options={solid,scale=0.5},mark=*,color=red, densely dashed] coordinates {
          (64, 2.74868)
          (128, 3.60887)
          (256, 6.10136)
          (512, 11.899)
          (1024, 21.8003)
          (2048, 24.3374)
          (4096, 23.2177)
          (8192, 22.5325)
      };
      \addplot[mark options={solid,scale=0.5},mark=*,color=blue, densely dash dot dot] coordinates {
          (64, 3.05867)
          (128, 3.79266)
          (256, 5.67183)
          (512, 9.03514)
          (1024, 15.1697)
          (2048, 23.0833)
          (4096, 21.729)
          (8192, 21.4429)
      };
      \addplot[color=gray, style={very thick}] coordinates {
          (32, 19.8275)
          (32768, 19.8275)
      };

  \nextgroupplot[title={WebBase}]
      \addplot[mark options={solid,scale=0.5},mark=*,color=red, densely dashed] coordinates {
          (64, 10.9803)
          (128, 10.9893)
          (256, 12.5401)
          (512, 15.5037)
          (1024, 22.2099)
          (2048, 23.6108)
          (4096, 24.5998)
          (8192, 22.9578)
          (16384, 24.4941)
      };
      \addplot[mark options={solid,scale=0.5},mark=*,color=blue, densely dash dot dot] coordinates {
          (64, 9.70405)
          (128, 10.408)
          (256, 11.3147)
          (512, 13.8559)
          (1024, 18.3496)
          (2048, 22.8421)
          (4096, 21.5511)
          (8192, 21.4409)
          (16384, 21.5635)
      };
      \addplot[color=gray, style={very thick}] coordinates {
          (32, 19.8692)
          (32768, 19.8692)
      };

  \nextgroupplot[title={Friendster}]
      \addplot[mark options={solid,scale=0.5},mark=*,color=red, densely dashed] coordinates {
          (64, 42.6235)
          (128, 53.6999)
          (256, 64.4858)
          (512, 68.2374)
          (1024, 67.0643)
          (2048, 59.8266)
      };
      \addplot[mark options={solid,scale=0.5},mark=*,color=blue, densely dash dot dot] coordinates {
          (64, 36.0003)
          (128, 44.7288)
          (256, 53.2922)
          (512, 58.6392)
          (1024, 59.7824)
          (2048, 59.4441)
      };
      \addplot[color=gray, style={very thick}] coordinates {
          (32, 56.0829)
          (32768, 56.0829)
      };

  \nextgroupplot[title={blood vessel}]
      \addplot[mark options={solid,scale=0.5},mark=*,color=red, densely dashed] coordinates {
          (64, 0.739103)
          (128, 0.821112)
          (256, 0.950152)
          (512, 1.58013)
          (1024, 3.24364)
          (2048, 14.2545)
          (4096, 30.4395)
          (8192, 37.5675)
      };
      \addplot[color=gray, style={very thick}] coordinates {
          (32, 29.2982)
          (32768, 29.2982)
      };

  \nextgroupplot[title={cochlea}]
      \addplot[mark options={solid,scale=0.5},mark=*,color=red, densely dashed] coordinates {
          (64, 2.78413)
          (128, 3.06388)
          (256, 3.94344)
          (512, 4.78834)
          (1024, 9.31701)
          (2048, 20.967)
          (4096, 39.3943)
          (8192, 88.3876)
      };
      \addplot[color=gray, style={very thick}] coordinates {
          (32, 33.6489)
          (32768, 33.6489)
      };

  \end{groupplot}
  \node at ($(group c3r1)!.5!(group c4r1) + (0, 1.45cm)$) {\ref*{legend-approx-index-times}};
\end{tikzpicture}
\caption[Approximate index construction times]{Index construction times for
  {\ourscan} (48 cores with hyper-threading) using approximate similarity measures with varying
sample sizes.}
\Description{Plot of the times to construct the index with {\ourscan} using
  48 cores with hyper-threading with approximate similarity measures.
}
  \label{fig:approx-index-construction-times}
\end{figure*}

\subsubsection{Quality of approximate clusterings}\label{sec:experiment-quality}

The fourth experiment measures the quality of the clusterings achieved with the
approximate similarity measures compared to the clusterings achieved with the
exact similarity measures. Although the $\textsc{AssignNonCores}$
(\cref{alg:assign-non-cores}) portion of the clustering algorithm assigns each border
non-core vertex to the same cluster as an arbitrary $\ep$-similar core vertex,
to get consistent measurements for this experiment, we remove this
source of non-determinism by assigning each border vertex to the same
cluster as the most similar neighboring core vertex, breaking ties in favor of
lower vertex IDs.

We use the modularity as a heuristic measurement for clustering quality,
treating unclustered vertices as each being in its own cluster.
We select the parameters $(\mu, \ep)$ maximizing modularity from the
following set $\Sigma$:
\begin{equation}\label{eq:param-space}
  \Sigma = \set{2, 4, 8, 16, \ldots, 2^{18}} \times \set{.01, .02, .03, \ldots, .99}
.\end{equation}
\Cref{fig:time-vs-modularity} plots the best modularity scores found when
using the approximate similarity measures with varying numbers of samples. To better illustrate the trade-off between computation time and clustering
quality, we use the index construction times from
\cref{fig:approx-index-construction-times} on the horizontal axis instead of the
number of samples. Each plotted modularity score for a fixed number of samples
is the mean of five trials with different pseudorandom seeds on each trial.

Similarly, \cref{fig:time-vs-ari} plots the ARI of the clustering
found by the approximate similarity measures with varying numbers of samples
versus the ``ground-truth'' clustering under the corresponding exact similarity
measures. Again, each plotted ARI is the mean of five trials with different
pseudorandom seeds. The SCAN parameters used in this plot are the best
parameters in $\Sigma$ relative to the exact similarity measures.
Hence, this plot shows how well the clusterings using approximate
similarity measures match the clusterings using exact similarity
measures at a particular parameter setting.

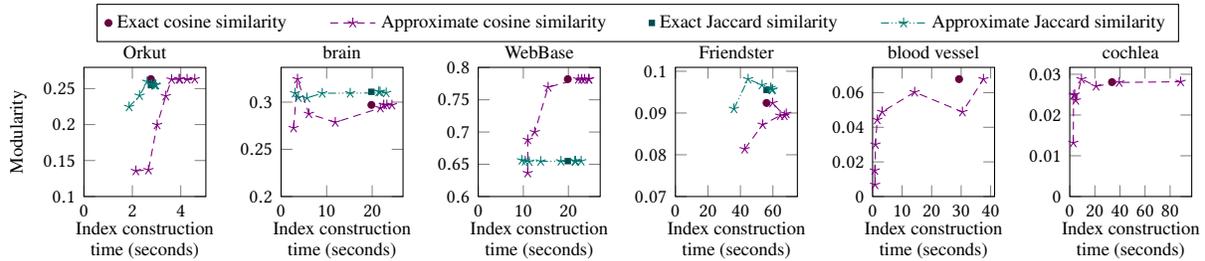
\begin{figure*} \centering
  \begin{tikzpicture}
  \begin{groupplot}[
    group style =
      {group size=6 by 1,
       xlabels at=edge bottom,
       ylabels at=edge left,
       horizontal sep=1cm,
     },
    height=0.15*\textheight,
    xlabel style={
      align=center, 
      at={(axis description cs:0.5,-0.13)} 
    },
    xlabel={Index construction\\ time (seconds)},
    ylabel={Modularity},
    xmin=0,
    scaled y ticks = false,
    yticklabel style={/pgf/number format/fixed},
    width=0.18*\textwidth,
    legend columns=-1, 
    legend style={font=\figfont, /tikz/every even column/.append style={column
      sep=0.5cm}}, 
    label style={font=\figfont},
    tick label style={font=\figfont},
    title style={font=\figfont, yshift=-6pt},
  ]
  \nextgroupplot[title={Orkut},
    ymin=0.1,
    legend to name=legend-time-vs-scores]

  \addplot[only marks,mark options={scale=0.65},mark=*,color=TyrianPurple] coordinates {
      (2.76594, 0.26318)
  };
  \addlegendentry[color=black]{Exact cosine similarity}
  \addplot[mark options={solid,scale=1},mark=star,color=violet, densely dashed] coordinates {
      (2.14565, 0.135675)
      (2.6652,  0.136876)
      (3.02249, 0.199628)
      (3.38223, 0.239714)
      (3.62116, 0.262724)
      (3.96687, 0.262935)
      (4.26101, 0.262992)
      (4.5783,  0.26318)
      (3.92902, 0.26318)
  };
  \addlegendentry[color=black]{Approximate cosine similarity}
  \addplot[only marks,mark options={scale=0.5},mark=square*,color=MidnightGreen] coordinates {
      (2.76594, 0.255042)
  };
  \addlegendentry[color=black]{Exact Jaccard similarity}
  \addplot[mark options={solid,scale=1},mark=star,color=PineGreen, densely dash dot dot] coordinates {
      (1.86581, 0.224617)
      (2.29971, 0.240287)
      (2.61267, 0.259563)
      (2.80486, 0.256001)
      (2.94262, 0.255098)
      (2.97076, 0.255042)
      (2.94041, 0.255044)
      (2.92462, 0.255043)
      (2.92243, 0.255042)
  };
  \addlegendentry[color=black]{Approximate Jaccard similarity}

  \nextgroupplot[title={brain},
    ymin=0.2,
  ]
  \addplot[only marks,mark options={scale=0.65},mark=*,color=TyrianPurple] coordinates {
      (19.8275, 0.297052)
  };
  \addplot[mark options={solid,scale=1},mark=star,color=violet, densely dashed] coordinates {
      (2.74868, 0.27265)
      (3.60887, 0.324388)
      (6.10136, 0.287662)
      (11.899,  0.278608)
      (21.8003, 0.293787)
      (24.3374, 0.296942)
      (23.2177, 0.297052)
      (22.5325, 0.297052)
  };
  \addplot[only marks,mark options={scale=0.5},mark=square*,color=MidnightGreen] coordinates {
      (19.8275, 0.310942)
  };
  \addplot[mark options={solid,scale=1},mark=star,color=PineGreen, densely dash dot dot] coordinates {
      (3.05867, 0.309546)
      (3.79266, 0.304903)
      (5.67183, 0.304448)
      (9.03514, 0.309428)
      (15.1697, 0.309509)
      (23.0833, 0.309859)
      (21.729,  0.310957)
      (21.4429, 0.310942)
  };

  \nextgroupplot[title={WebBase},
    ymin=0.6,
  ]
  \addplot[only marks,mark options={scale=0.65},mark=*,color=TyrianPurple] coordinates {
      (19.8692, 0.782014)
  };

  \addplot[mark options={solid,scale=1},mark=star,color=violet, densely dashed] coordinates {
    (10.9803, 0.636527)
    (10.9893, 0.687991)
    (12.5401, 0.699998)
    (15.5037, 0.769551)
    (22.2099, 0.781813)
    (23.6108, 0.781835)
    (24.5998, 0.781723)
    (22.9578, 0.781853)
    (24.4941, 0.782072)
  };
  \addplot[only marks,mark options={scale=0.5},mark=square*,color=MidnightGreen] coordinates {
      (19.8692, 0.654896)
  };
  \addplot[mark options={solid,scale=1},mark=star,color=PineGreen, densely dash dot dot] coordinates {
    (9.70405,0.656228)
    (10.408, 0.655009)
    (11.3147,0.654175)
    (13.8559,0.654537)
    (18.3496,0.654786)
    (22.8421,0.654915)
    (21.5511,0.654897)
    (21.4409,0.654894)
    (21.5635,0.654896)
  };

  \nextgroupplot[title={Friendster},
    ymin=0.07,
  ]
  \addplot[only marks,mark options={scale=0.65},mark=*,color=TyrianPurple] coordinates {
      (56.0829, 0.0924098)
  };
  \addplot[mark options={solid,scale=1},mark=star,color=violet, densely dashed] coordinates {
      (42.6235, 0.0813536)
      (53.6999, 0.0872217)
      (64.4858, 0.0893867)
      (68.2374, 0.0898194)
      (67.0643, 0.0893517)
      (59.8266, 0.0924098)
  };
  \addplot[only marks,mark options={scale=0.5},mark=square*,color=MidnightGreen] coordinates {
      (56.0829, 0.0955208)
  };
  \addplot[mark options={solid,scale=1},mark=star,color=PineGreen, densely dash dot dot] coordinates {
      (36.0003,0.0910133)
      (44.7288,0.0981195)
      (53.2922,0.0967084)
      (58.6392,0.0960252)
      (59.7824,0.0956515)
      (59.4441,0.0955319)
  };

  \nextgroupplot[title={blood vessel},
    ymin=0,
  ]
  \addplot[only marks,mark options={scale=0.65},mark=*,color=TyrianPurple] coordinates {
      (29.2982, 0.0678182)
  };
  \addplot[mark options={solid,scale=1},mark=star,color=violet, densely dashed] coordinates {
      (0.739103,0.0150673)
      (0.821112,0.00682346)
      (0.950152,0.0300468)
      (1.58013, 0.0443178)
      (3.24364, 0.0489213)
      (14.2545, 0.0603051)
      (30.4395, 0.0488652)
      (37.5675, 0.0678182)
  };

  \nextgroupplot[title={cochlea},
    ymin=0,
  ]
  \addplot[only marks,mark options={scale=0.65},mark=*,color=TyrianPurple] coordinates {
      (33.6489, 0.0280733)
  };
  \addplot[mark options={solid,scale=1},mark=star,color=violet, densely dashed] coordinates {
      (2.78413, 0.0131848)
      (3.06388, 0.0248524)
      (3.94344, 0.0250288)
      (4.78834, 0.0236121)
      (9.31701, 0.0288123)
      (20.967,  0.0270835)
      (39.3943, 0.0279982)
      (88.3876, 0.0281563)
  };

  \end{groupplot}
  \node at ($(group c3r1)!.5!(group c4r1) + (0, 1.45cm)$) {\ref*{legend-time-vs-scores}};
  \end{tikzpicture}
  \caption[Approximate index construction time versus modularity score]{
    Trade-off curve of approximate similarity index construction time
    with varying numbers of samples versus the best modularity score found among
    parameters from $\Sigma$ (\cref{eq:param-space}). The index construction
    times on the horizontal axis come from
    \cref{fig:approx-index-construction-times}.
  }
  \Description{
    Plot comparing the time to construct the SCAN index with approximate
    similarity measures of varying numbers of samples against the best
    modularity score found using that index.
  }
  \label{fig:time-vs-modularity}
\end{figure*}

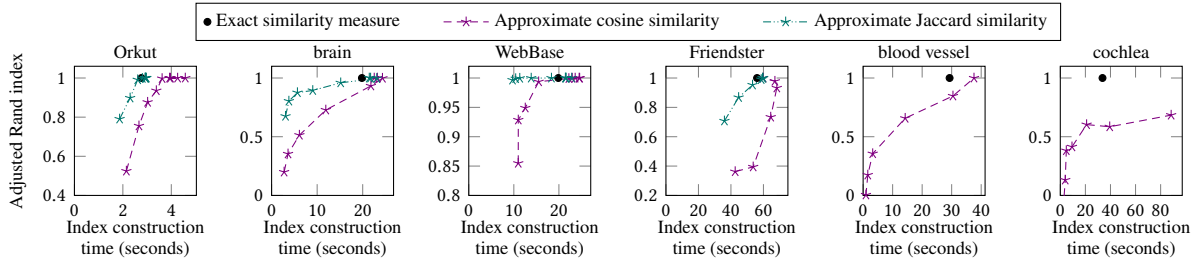
\begin{figure*} \centering
  \begin{tikzpicture}
  \begin{groupplot}[
    group style =
      {group size=6 by 1,
       xlabels at=edge bottom,
       ylabels at=edge left,
       horizontal sep=1cm,
     },
    height=0.15*\textheight,
    xlabel style={
      align=center, 
      at={(axis description cs:0.5,-0.13)} 
    },
    xlabel={Index construction\\ time (seconds)},
    ylabel={Adjusted Rand index},
    xmin=0,
    width=0.18*\textwidth,
    legend columns=-1, 
    legend style={font=\figfont, /tikz/every even column/.append style={column sep=0.5cm}}, 
    label style={font=\figfont},
    tick label style={font=\figfont},
    title style={font=\figfont, yshift=-6pt},
  ]
  \nextgroupplot[title={Orkut},
    ymin=0.4,
    legend to name=legend-time-vs-ari]

  \addplot[only marks,mark options={scale=0.65},mark=*,color=black] coordinates {
      (2.76594, 1)
  };
  \addlegendentry[color=black]{Exact similarity measure}
  \addplot[mark options={solid,scale=1},mark=star,color=violet, densely dashed] coordinates {
      (2.14565, 0.524468)
      (2.6652,  0.755397)
      (3.02249, 0.876263)
      (3.38223, 0.935462)
      (3.62116, 0.997363)
      (3.96687, 0.999456)
      (4.26101, 0.999776)
      (4.5783,  1)
      (3.92902, 1)
  };
  \addlegendentry[color=black]{Approximate cosine similarity}
  \addplot[mark options={solid,scale=1},mark=star,color=PineGreen, densely dash dot dot] coordinates {
      (1.86581, 0.790528)
      (2.29971, 0.897974)
      (2.61267, 0.991192)
      (2.80486, 0.998294)
      (2.94262, 0.999907)
      (2.97076, 0.999989)
      (2.94041, 0.999998)
      (2.92462, 1)
      (2.92243, 1)
  };
  \addlegendentry[color=black]{Approximate Jaccard similarity}

  \nextgroupplot[title={brain},
    ymin=0,
  ]
  \addplot[only marks,mark options={scale=0.65},mark=*,color=black] coordinates {
      (19.8275, 1)
  };
  \addplot[mark options={solid,scale=1},mark=star,color=violet, densely dashed] coordinates {
      (2.74868, 0.199201)
      (3.60887, 0.353546)
      (6.10136, 0.513414)
      (11.899,  0.727621)
      (21.8003, 0.932531)
      (24.3374, 0.998401)
      (23.2177, 1)
      (22.5325, 1)
  };
  \addplot[mark options={solid,scale=1},mark=star,color=PineGreen, densely dash dot dot] coordinates {
      (3.05867, 0.6746)
      (3.79266, 0.803467)
      (5.67183, 0.875522)
      (9.03514, 0.894073)
      (15.1697, 0.959281)
      (23.0833, 0.996383)
      (21.729,  0.99999)
      (21.4429, 1)
  };

\nextgroupplot[title={WebBase},
    ymin=0.8,
]
  \addplot[only marks,mark options={scale=0.65},mark=*,color=black] coordinates {
      (19.8692, 1)
  };

  \addplot[mark options={solid,scale=1},mark=star,color=violet, densely dashed] coordinates {
      (10.9803,0.854845)
      (10.9893,0.928523)
      (12.5401,0.948991)
      (15.5037,0.993648)
      (22.2099,0.999155)
      (23.6108,0.999904)
      (24.5998,0.999759)
      (22.9578,0.999802)
      (24.4941,0.999965)
  };
  \addplot[mark options={solid,scale=1},mark=star,color=PineGreen, densely dash dot dot] coordinates {
      (9.70405, 0.996501)
      (10.408,  0.99895)
      (11.3147, .999641)
      (13.8559, .999891)
      (18.3496, .999948)
      (22.8421, .999973)
      (21.5511, 1)
      (21.4409, 1)
      (21.5635, 1)
  };

  \nextgroupplot[title={Friendster},
    ymin=0.2,
  ]
  \addplot[only marks,mark options={scale=0.65},mark=*,color=black] coordinates {
      (56.0829, 1)
  };

  \addplot[mark options={solid,scale=1},mark=star,color=violet, densely dashed] coordinates {
      (42.6235, .362255)
      (53.6999, .394418)
      (64.4858, .733761)
      (68.2374, .932485)
      (67.0643, .97816)
      (59.8266, 1)
  };
  \addplot[mark options={solid,scale=1},mark=star,color=PineGreen, densely dash dot dot] coordinates {
      (36.0003,.707586)
      (44.7288,.867583)
      (53.2922,.952617)
      (58.6392,.989848)
      (59.7824,.999119)
      (59.4441,.999948)
  };

  \nextgroupplot[title={blood vessel},
    ymin=0,
  ]
  \addplot[only marks,mark options={scale=0.65},mark=*,color=black] coordinates {
      (29.2982, 1)
  };
  \addplot[mark options={solid,scale=1},mark=star,color=violet, densely dashed] coordinates {
      (0.739103, -0.00102727)
      (0.821112,-0.000408073)
      (0.950152,0.00413669)
      (1.58013, 0.173818)
      (3.24364, 0.35588)
      (14.2545, 0.657302)
      (30.4395, 0.846927)
      (37.5675, 1)
  };

  \nextgroupplot[title={cochlea},
    ymin=0,
  ]
  \addplot[only marks,mark options={scale=0.65},mark=*,color=black] coordinates {
      (33.6489, 1)
  };
  \addplot[mark options={solid,scale=1},mark=star,color=violet, densely dashed] coordinates {
      (2.78413,-0.00546551)
      (3.06388,-0.000734335)
      (3.94344,0.13127)
      (4.78834,0.381686)
      (9.31701,0.41377)
      (20.967, 0.603283)
      (39.3943,0.58483)
      (88.3876,0.683936)
  };

  \end{groupplot}
  \node at ($(group c3r1)!.5!(group c4r1) + (0, 1.45cm)$) {\ref*{legend-time-vs-ari}};
  \end{tikzpicture}
  \caption[Approximate index construction time versus clustering accuracy]{
    Trade-off curve of approximate similarity index construction time
    with varying numbers of samples versus the accuracy (measured via adjusted
    Rand index) of the resulting approximate clustering relative to a
    ``ground-truth'' clustering from the corresponding exact similarity index.
    For each graph, the tested parameters $(\mu, \ep)$ are the
    modularity-maximizing parameters from $\Sigma$ (\cref{eq:param-space})
    relative to the exact similarity measure.  The index construction times on
    the horizontal axis come from \cref{fig:approx-index-construction-times}.
  }
  \Description{
    Plot comparing the time to construct the SCAN index with approximate
    similarity measures of varying numbers of samples against the
    adjusted Rand index achieved using that index.
  }
  \label{fig:time-vs-ari}
\end{figure*}

Points to the top and the left represent sample sizes that give good quality
as well as low index construction times. The figures also include the times
to construct indices with exact similarity measures from
\cref{fig:index-construction-times}, with the assumption that the times
for exact Jaccard similarity are the same as those measured for
cosine similarity.

The improved approximation accuracy in these plots as the sample size increases is
not only attributable to better LSH accuracy with
more samples, but also to the heuristic described in
\cref{sec:approx-sims-impl} that reverts to computing exact similarity for
vertices that have low degree relative to the number of samples.

The approximate Jaccard clusterings approach the quality of the
corresponding exact similarity clusterings at lower sample sizes than
approximate cosine clusterings do, which is perhaps expected due to
the better sampling efficiency that MinHash variants tend to have over
SimHash, as suggested by Shrivastava and Li~\cite{shrivastava2014defense} and by
our bounds in
\cref{thm:simhash-accuracy-bound,thm:minhash-accuracy-bound}.

The modularity and ARI scores indicate that approximating similarities
can significantly speed up index construction while still achieving good quality
clusterings. The modularity plots in \cref{fig:time-vs-modularity} look
more favorable than the ARI plots in \cref{fig:time-vs-ari}, suggesting that
though at low sample sizes the approximate clusters at a particular parameter
setting may noticeably differ from the corresponding exact clusters, we are
still able to find a good quality clustering by searching over a range of
parameter values.

\section{Related Work}

Xu et al.\ introduced the original SCAN algorithm~\cite{xu2007scan} and borrowed ideas
from the popular spatial clustering algorithm DBSCAN~\cite{ester1996density}.
One major inconvenience of SCAN is the difficulty of choosing its two user-selected
parameters, $\mu$ and $\ep$.
GS*-Index alleviates this issue by creating an index upon which future SCAN
queries with arbitrary parameters are efficient~\cite{wen2017efficient}.
SCOT~\cite{bortner2010progressive} and gSkeletonClu~\cite{huang2012revealing}
also essentially compute indices for SCAN, but only for a fixed $\mu$ value. SCOT outputs
an ordering of vertices, similar to what the OPTICS
algorithm~\cite{ankerst1999optics} outputs for DBSCAN, such that vertices that
tend to be in the same cluster are nearby in the ordering. gSkeletonClu
computes a spanning tree on potential core vertices.

SHRINK~\cite{huang2010shrink}, DHSCAN~\cite{yuruk2007divisive}, and
AHSCAN~\cite{yuruk2009ahscan} are all based on SCAN, but avoid the parameter
selection issue by being parameter-free algorithms that use a quality function
like the modularity to guide the clustering process.
DPSCAN~\cite{wu2019dpscan} is another parameter-free SCAN-based algorithm that
uses a density metric to select clusters. These algorithms are easier to use
due to their lack of parameters, although having tunable
parameters can be helpful in allowing the user to explore alternative
clusterings.

Other work building on SCAN focuses on making SCAN scale to large graphs.
LinkSCAN*~\cite{lim2014linkscan} reduces computation time at the cost of
accuracy by operating on a sampled subgraph of the original graph. It may be
worthwhile in the future to compare the efficiency and clustering quality of the
LinkSCAN* sampling approach versus the LSH approach of
our paper.
Zhao et al.~\cite{zhao2017anyscan} and Mai et al.~\cite{mai2018scalable}
describe anytime algorithms for SCAN, with Mai et al.'s algorithm being
parallel. Users may pause queries and examine intermediate clustering results,
making it useful for large graphs on which finishing a query may take a long
time. Our work, on the other hand, strives to make finishing a query as fast as
possible so that this anytime functionality is unnecessary.

SCAN++~\cite{shiokawa2015scan++}, pSCAN~\cite{chang2017pscan}, and
ppSCAN~\cite{che2018parallelizing}, for a fixed
setting of SCAN parameters, speed up SCAN by pruning many unnecessary similarity score computations
between pairs of vertices.
Che et al.'s ppSCAN is parallel and uses vectorized instructions as well for additional
performance.
SCAN-XP~\cite{takahashi2017scan} is another parallel SCAN algorithm but does not
perform pruning.

For distributed systems, Chen et al.~\cite{chen2013pscan} and
Zhao et al.~\cite{zhao2013pscan} present MapReduce
parallelizations of SCAN, and SparkSCAN~\cite{zhou2015sparkscan} is a Spark
parallelization of SCAN.
GPUSCAN~\cite{stovall2014gpuscan}  uses GPUs to
speed up SCAN.

There are many other graph clustering algorithms besides SCAN and its variants.
Interested readers may refer to surveys written by other researchers, such as
Schaeffer~\cite{schaeffer2007graph} and Fortunato~\cite{fortunato2010community}.

\section{Conclusion}

This paper presents index-based SCAN algorithms that achieve significant
parallel speedup over the state of the art. They allow users to query efficiently for SCAN clusterings for
arbitrary parameter settings. The algorithms achieve improved work bounds over GS*-Index and have
logarithmic span \whp.
We also present an optimized
multicore implementation of the algorithm that runs well in practice. Moreover,
we demonstrate that LSH is a viable approximation scheme
to speed up the computationally expensive component of index construction.

For future work, first, we are interested in extending our work
to dynamic graphs by devising parallel algorithms for
processing batches of
edge updates. Second, we are
interested in quickly extracting hierarchical clusterings from the SCAN index.
Third, we would like to investigate the speed and clustering quality of SCAN when using other similarity measures.
Last, we wish to compare SCAN to other parallel
clustering algorithms in quality and speed.


\begin{acks}
We thank Rezaul Chowdhury for suggesting using matrix multiplication on dense graphs.
  This research was supported by DOE Early Career Award \#DE-SC0018947, NSF
CAREER Award \#CCF-1845763, Google Faculty Research Award, DARPA SDH Award \#HR0011-18-3-0007, and
Applications Driving Architectures (ADA) Research Center, a JUMP
Center co-sponsored by SRC and DARPA.
\end{acks}

\bibliographystyle{ACM-Reference-Format}
\bibliography{references}


\begin{thebibliography}{77}


\ifx \showCODEN    \undefined \def \showCODEN     #1{\unskip}     \fi
\ifx \showDOI      \undefined \def \showDOI       #1{#1}\fi
\ifx \showISBNx    \undefined \def \showISBNx     #1{\unskip}     \fi
\ifx \showISBNxiii \undefined \def \showISBNxiii  #1{\unskip}     \fi
\ifx \showISSN     \undefined \def \showISSN      #1{\unskip}     \fi
\ifx \showLCCN     \undefined \def \showLCCN      #1{\unskip}     \fi
\ifx \shownote     \undefined \def \shownote      #1{#1}          \fi
\ifx \showarticletitle \undefined \def \showarticletitle #1{#1}   \fi
\ifx \showURL      \undefined \def \showURL       {\relax}        \fi
\providecommand\bibfield[2]{#2}
\providecommand\bibinfo[2]{#2}
\providecommand\natexlab[1]{#1}
\providecommand\showeprint[2][]{arXiv:#2}

\bibitem[\protect\citeauthoryear{Ankerst, Breunig, Kriegel, and Sander}{Ankerst
  et~al\mbox{.}}{1999}]%
        {ankerst1999optics}
\bibfield{author}{\bibinfo{person}{Mihael Ankerst}, \bibinfo{person}{Markus~M.
  Breunig}, \bibinfo{person}{Hans-Peter Kriegel}, {and}
  \bibinfo{person}{J{\"o}rg Sander}.} \bibinfo{year}{1999}\natexlab{}.
\newblock \showarticletitle{{OPTICS}: Ordering Points to Identify the
  Clustering Structure}.
\newblock \bibinfo{journal}{\emph{SIGMOD Record}} \bibinfo{volume}{28},
  \bibinfo{number}{2} (\bibinfo{year}{1999}), \bibinfo{pages}{49–60}.
\newblock
\showISSN{0163-5808}


\bibitem[\protect\citeauthoryear{Aydin, Bateni, and Mirrokni}{Aydin
  et~al\mbox{.}}{2016}]%
        {aydin2019distributed}
\bibfield{author}{\bibinfo{person}{Kevin Aydin},
  \bibinfo{person}{MohammadHossein Bateni}, {and} \bibinfo{person}{Vahab
  Mirrokni}.} \bibinfo{year}{2016}\natexlab{}.
\newblock \showarticletitle{Distributed Balanced Partitioning via Linear
  Embedding}. In \bibinfo{booktitle}{\emph{Proceedings of the Ninth ACM
  International Conference on Web Search and Data Mining}}.
  \bibinfo{pages}{387–396}.
\newblock
\showISBNx{9781450337168}


\bibitem[\protect\citeauthoryear{Azad, Pavlopoulos, Ouzounis, Kyrpides, and
  Bulu{\c{c}}}{Azad et~al\mbox{.}}{2018}]%
        {azad2018hipmcl}
\bibfield{author}{\bibinfo{person}{Ariful Azad}, \bibinfo{person}{Georgios~A.
  Pavlopoulos}, \bibinfo{person}{Christos~A. Ouzounis},
  \bibinfo{person}{Nikos~C. Kyrpides}, {and} \bibinfo{person}{Aydin
  Bulu{\c{c}}}.} \bibinfo{year}{2018}\natexlab{}.
\newblock \showarticletitle{{HipMCL}: a high-performance parallel
  implementation of the Markov clustering algorithm for large-scale networks}.
\newblock \bibinfo{journal}{\emph{Nucleic Acids Research}}
  \bibinfo{volume}{46}, \bibinfo{number}{6} (\bibinfo{year}{2018}).
\newblock


\bibitem[\protect\citeauthoryear{Bansal, Blum, and Chawla}{Bansal
  et~al\mbox{.}}{2004}]%
        {bansal2004correlation}
\bibfield{author}{\bibinfo{person}{Nikhil Bansal}, \bibinfo{person}{Avrim
  Blum}, {and} \bibinfo{person}{Shuchi Chawla}.}
  \bibinfo{year}{2004}\natexlab{}.
\newblock \showarticletitle{Correlation Clustering}.
\newblock \bibinfo{journal}{\emph{Machine Learning}} \bibinfo{volume}{56},
  \bibinfo{number}{1-3} (\bibinfo{year}{2004}), \bibinfo{pages}{89--113}.
\newblock


\bibitem[\protect\citeauthoryear{Bateni, Behnezhad, Derakhshan, Hajiaghayi,
  Kiveris, Lattanzi, and Mirrokni}{Bateni et~al\mbox{.}}{2017}]%
        {bateni2017affinity}
\bibfield{author}{\bibinfo{person}{MohammadHossein Bateni},
  \bibinfo{person}{Soheil Behnezhad}, \bibinfo{person}{Mahsa Derakhshan},
  \bibinfo{person}{MohammadTaghi Hajiaghayi}, \bibinfo{person}{Raimondas
  Kiveris}, \bibinfo{person}{Silvio Lattanzi}, {and} \bibinfo{person}{Vahab
  Mirrokni}.} \bibinfo{year}{2017}\natexlab{}.
\newblock \showarticletitle{Affinity Clustering: Hierarchical Clustering at
  Scale}. In \bibinfo{booktitle}{\emph{Advances in Neural Information
  Processing Systems}}. \bibinfo{pages}{6864--6874}.
\newblock


\bibitem[\protect\citeauthoryear{Bellog\'{i}n and Parapar}{Bellog\'{i}n and
  Parapar}{2012}]%
        {bellogin2012using}
\bibfield{author}{\bibinfo{person}{Alejandro Bellog\'{i}n} {and}
  \bibinfo{person}{Javier Parapar}.} \bibinfo{year}{2012}\natexlab{}.
\newblock \showarticletitle{Using Graph Partitioning Techniques for Neighbour
  Selection in User-Based Collaborative Filtering}. In
  \bibinfo{booktitle}{\emph{Proceedings of the Sixth ACM Conference on
  Recommender Systems}}. \bibinfo{pages}{213–216}.
\newblock
\showISBNx{9781450312707}


\bibitem[\protect\citeauthoryear{Biemann}{Biemann}{2006}]%
        {biemann2006chinese}
\bibfield{author}{\bibinfo{person}{Chris Biemann}.}
  \bibinfo{year}{2006}\natexlab{}.
\newblock \showarticletitle{Chinese Whispers - an Efficient Graph Clustering
  Algorithm and its Application to Natural Language Processing Problems}. In
  \bibinfo{booktitle}{\emph{Proceedings of the First Workshop on Graph-based
  Methods for Natural Language Processing}}. \bibinfo{pages}{73–80}.
\newblock


\bibitem[\protect\citeauthoryear{Blelloch, Anderson, and Dhulipala}{Blelloch
  et~al\mbox{.}}{2020}]%
        {blelloch2020parlaylib}
\bibfield{author}{\bibinfo{person}{Guy~E. Blelloch}, \bibinfo{person}{Daniel
  Anderson}, {and} \bibinfo{person}{Laxman Dhulipala}.}
  \bibinfo{year}{2020}\natexlab{}.
\newblock \showarticletitle{Brief Announcement: ParlayLib - A Toolkit for
  Parallel Algorithms on Shared-Memory Multicore Machines}. In
  \bibinfo{booktitle}{\emph{Proceedings of the 32nd ACM Symposium on
  Parallelism in Algorithms and Architectures}}. \bibinfo{pages}{507--509}.
\newblock


\bibitem[\protect\citeauthoryear{Blelloch and Maggs}{Blelloch and
  Maggs}{2010}]%
        {blelloch2010parallel}
\bibfield{author}{\bibinfo{person}{Guy~E. Blelloch} {and}
  \bibinfo{person}{Bruce~M. Maggs}.} \bibinfo{year}{2010}\natexlab{}.
\newblock \showarticletitle{Parallel Algorithms}.
\newblock In \bibinfo{booktitle}{\emph{Algorithms and Theory of Computation
  Handbook: Special Topics and Techniques} (\bibinfo{edition}{2nd} ed.)},
  \bibfield{editor}{\bibinfo{person}{Mikhail~J. Atallah} {and}
  \bibinfo{person}{Marina Blanton}} (Eds.). Vol.~\bibinfo{volume}{2}.
  Chapter~25.
\newblock


\bibitem[\protect\citeauthoryear{Blumofe and Leiserson}{Blumofe and
  Leiserson}{1999}]%
        {blumofe1999scheduling}
\bibfield{author}{\bibinfo{person}{Robert~D. Blumofe} {and}
  \bibinfo{person}{Charles~E. Leiserson}.} \bibinfo{year}{1999}\natexlab{}.
\newblock \showarticletitle{Scheduling Multithreaded Computations by Work
  Stealing}.
\newblock \bibinfo{journal}{\emph{J. ACM}} \bibinfo{volume}{46},
  \bibinfo{number}{5} (\bibinfo{year}{1999}), \bibinfo{pages}{720--748}.
\newblock


\bibitem[\protect\citeauthoryear{Boldi, Rosa, Santini, and Vigna}{Boldi
  et~al\mbox{.}}{2011}]%
        {lawgraphs2}
\bibfield{author}{\bibinfo{person}{Paolo Boldi}, \bibinfo{person}{Marco Rosa},
  \bibinfo{person}{Massimo Santini}, {and} \bibinfo{person}{Sebastiano Vigna}.}
  \bibinfo{year}{2011}\natexlab{}.
\newblock \showarticletitle{Layered Label Propagation: A MultiResolution
  Coordinate-Free Ordering for Compressing Social Networks}. In
  \bibinfo{booktitle}{\emph{Proceedings of the 20th International Conference on
  World Wide Web}}. \bibinfo{pages}{587–596}.
\newblock
\showISBNx{9781450306324}


\bibitem[\protect\citeauthoryear{Boldi and Vigna}{Boldi and Vigna}{2004}]%
        {lawgraphs1}
\bibfield{author}{\bibinfo{person}{Paolo Boldi} {and}
  \bibinfo{person}{Sebastiano Vigna}.} \bibinfo{year}{2004}\natexlab{}.
\newblock \showarticletitle{The {W}eb{G}raph Framework {I}: Compression
  Techniques}. In \bibinfo{booktitle}{\emph{Proceedings of the 13th
  International Conference on World Wide Web}}. \bibinfo{pages}{595–602}.
\newblock
\showISBNx{158113844X}


\bibitem[\protect\citeauthoryear{Bortner and Han}{Bortner and Han}{2010}]%
        {bortner2010progressive}
\bibfield{author}{\bibinfo{person}{Dustin Bortner} {and}
  \bibinfo{person}{Jiawei Han}.} \bibinfo{year}{2010}\natexlab{}.
\newblock \showarticletitle{Progressive Clustering of Networks Using
  Structure-Connected Order of Traversal}. In \bibinfo{booktitle}{\emph{IEEE
  26th International Conference on Data Engineering}}.
  \bibinfo{pages}{653--656}.
\newblock


\bibitem[\protect\citeauthoryear{Box and Muller}{Box and Muller}{1958}]%
        {box1958note}
\bibfield{author}{\bibinfo{person}{George E.~P. Box} {and}
  \bibinfo{person}{Mervin~E. Muller}.} \bibinfo{year}{1958}\natexlab{}.
\newblock \showarticletitle{A note on the generation of random normal
  deviates}.
\newblock \bibinfo{journal}{\emph{The Annals of Mathematical Statistics}}
  \bibinfo{volume}{29}, \bibinfo{number}{2} (\bibinfo{year}{1958}),
  \bibinfo{pages}{610--611}.
\newblock


\bibitem[\protect\citeauthoryear{Broder}{Broder}{1997}]%
        {broder1997resemblance}
\bibfield{author}{\bibinfo{person}{Andrei~Z. Broder}.}
  \bibinfo{year}{1997}\natexlab{}.
\newblock \showarticletitle{On the resemblance and containment of documents}.
  In \bibinfo{booktitle}{\emph{Proceedings of the Compression and Complexity of
  SEQUENCES}}. \bibinfo{pages}{21--29}.
\newblock


\bibitem[\protect\citeauthoryear{Chang, Li, Qin, Zhang, and Yang}{Chang
  et~al\mbox{.}}{2017}]%
        {chang2017pscan}
\bibfield{author}{\bibinfo{person}{Lijun Chang}, \bibinfo{person}{Wei Li},
  \bibinfo{person}{Lu Qin}, \bibinfo{person}{Wenjie Zhang}, {and}
  \bibinfo{person}{Shiyu Yang}.} \bibinfo{year}{2017}\natexlab{}.
\newblock \showarticletitle{{pSCAN}: Fast and Exact Structural Graph
  Clustering}.
\newblock \bibinfo{journal}{\emph{IEEE Transactions on Knowledge and Data
  Engineering}} \bibinfo{volume}{29}, \bibinfo{number}{2}
  (\bibinfo{year}{2017}), \bibinfo{pages}{387–401}.
\newblock
\showISSN{1041-4347}


\bibitem[\protect\citeauthoryear{Charikar}{Charikar}{2002}]%
        {charikar2002similarity}
\bibfield{author}{\bibinfo{person}{Moses~S. Charikar}.}
  \bibinfo{year}{2002}\natexlab{}.
\newblock \showarticletitle{Similarity Estimation Techniques from Rounding
  Algorithms}. In \bibinfo{booktitle}{\emph{Proceedings of the Thirty-Fourth
  Annual ACM Symposium on Theory of Computing}}. \bibinfo{pages}{380--388}.
\newblock
\showISBNx{1581134959}


\bibitem[\protect\citeauthoryear{Che, Sun, and Luo}{Che et~al\mbox{.}}{2018}]%
        {che2018parallelizing}
\bibfield{author}{\bibinfo{person}{Yulin Che}, \bibinfo{person}{Shixuan Sun},
  {and} \bibinfo{person}{Qiong Luo}.} \bibinfo{year}{2018}\natexlab{}.
\newblock \showarticletitle{Parallelizing Pruning-based Graph Structural
  Clustering}. In \bibinfo{booktitle}{\emph{Proceedings of the 47th
  International Conference on Parallel Processing}}. Article
  \bibinfo{articleno}{77}.
\newblock
\showISBNx{9781450365109}


\bibitem[\protect\citeauthoryear{Chen, Chen, Liu, and Huang}{Chen
  et~al\mbox{.}}{2013}]%
        {chen2013pscan}
\bibfield{author}{\bibinfo{person}{Jia-Jun Chen}, \bibinfo{person}{Ji-Meng
  Chen}, \bibinfo{person}{Jie Liu}, {and} \bibinfo{person}{Va-Lou Huang}.}
  \bibinfo{year}{2013}\natexlab{}.
\newblock \showarticletitle{{PSCAN}: A parallel structural clustering algorithm
  for networks}. In \bibinfo{booktitle}{\emph{International Conference on
  Machine Learning and Cybernetics}}, Vol.~\bibinfo{volume}{2}.
  \bibinfo{pages}{839--844}.
\newblock


\bibitem[\protect\citeauthoryear{Chiba and Nishizeki}{Chiba and
  Nishizeki}{1985}]%
        {chiba1985arboricity}
\bibfield{author}{\bibinfo{person}{Norishige Chiba} {and}
  \bibinfo{person}{Takao Nishizeki}.} \bibinfo{year}{1985}\natexlab{}.
\newblock \showarticletitle{Arboricity and subgraph listing algorithms}.
\newblock \bibinfo{journal}{\emph{SIAM Journal on computing}}
  \bibinfo{volume}{14}, \bibinfo{number}{1} (\bibinfo{year}{1985}),
  \bibinfo{pages}{210--223}.
\newblock


\bibitem[\protect\citeauthoryear{Cole}{Cole}{1988}]%
        {cole1988parallel}
\bibfield{author}{\bibinfo{person}{Richard Cole}.}
  \bibinfo{year}{1988}\natexlab{}.
\newblock \showarticletitle{Parallel merge sort}.
\newblock \bibinfo{journal}{\emph{SIAM J. Comput.}} \bibinfo{volume}{17},
  \bibinfo{number}{4} (\bibinfo{year}{1988}), \bibinfo{pages}{770--785}.
\newblock


\bibitem[\protect\citeauthoryear{Cormen, Leiserson, Rivest, and Stein}{Cormen
  et~al\mbox{.}}{2009}]%
        {cormen2009introduction}
\bibfield{author}{\bibinfo{person}{Thomas~H. Cormen},
  \bibinfo{person}{Charles~E. Leiserson}, \bibinfo{person}{Ronald~L. Rivest},
  {and} \bibinfo{person}{Clifford Stein}.} \bibinfo{year}{2009}\natexlab{}.
\newblock \bibinfo{booktitle}{\emph{Introduction to Algorithms}}.
\newblock \bibinfo{publisher}{MIT Press}.
\newblock


\bibitem[\protect\citeauthoryear{Demaine, Emanuel, Fiat, and Immorlica}{Demaine
  et~al\mbox{.}}{2006}]%
        {demaine2006correlation}
\bibfield{author}{\bibinfo{person}{Erik~D. Demaine}, \bibinfo{person}{Dotan
  Emanuel}, \bibinfo{person}{Amos Fiat}, {and} \bibinfo{person}{Nicole
  Immorlica}.} \bibinfo{year}{2006}\natexlab{}.
\newblock \showarticletitle{Correlation clustering in general weighted graphs}.
\newblock \bibinfo{journal}{\emph{Theoretical Computer Science}}
  \bibinfo{volume}{361}, \bibinfo{number}{2--3} (\bibinfo{year}{2006}),
  \bibinfo{pages}{172--187}.
\newblock


\bibitem[\protect\citeauthoryear{Dhulipala, Blelloch, and Shun}{Dhulipala
  et~al\mbox{.}}{2018}]%
        {dhulipala2018theoretically}
\bibfield{author}{\bibinfo{person}{Laxman Dhulipala}, \bibinfo{person}{Guy~E.
  Blelloch}, {and} \bibinfo{person}{Julian Shun}.}
  \bibinfo{year}{2018}\natexlab{}.
\newblock \showarticletitle{Theoretically Efficient Parallel Graph Algorithms
  Can Be Fast and Scalable}. In \bibinfo{booktitle}{\emph{Proceedings of the
  30th ACM Symposium on Parallelism in Algorithms and Architectures}}.
  \bibinfo{pages}{393–404}.
\newblock
\showISBNx{9781450357999}


\bibitem[\protect\citeauthoryear{Dhulipala, Hong, and Shun}{Dhulipala
  et~al\mbox{.}}{2020a}]%
        {dhulipala2021connectit}
\bibfield{author}{\bibinfo{person}{Laxman Dhulipala}, \bibinfo{person}{Changwan
  Hong}, {and} \bibinfo{person}{Julian Shun}.}
  \bibinfo{year}{2020}\natexlab{a}.
\newblock \showarticletitle{{ConnectIt}: A Framework for Static and Incremental
  Parallel Graph Connectivity Algorithms}.
\newblock \bibinfo{journal}{\emph{Proceedings of the VLDB Endowment}}
  \bibinfo{volume}{14}, \bibinfo{number}{4} (\bibinfo{year}{2020}),
  \bibinfo{pages}{653–667}.
\newblock


\bibitem[\protect\citeauthoryear{Dhulipala, Liu, Shun, and Yu}{Dhulipala
  et~al\mbox{.}}{2021}]%
        {dhulipala2021parallel}
\bibfield{author}{\bibinfo{person}{Laxman Dhulipala},
  \bibinfo{person}{Quanquan~C. Liu}, \bibinfo{person}{Julian Shun}, {and}
  \bibinfo{person}{Shangdi Yu}.} \bibinfo{year}{2021}\natexlab{}.
\newblock \showarticletitle{Parallel Batch-Dynamic $k$-Clique Counting}. In
  \bibinfo{booktitle}{\emph{Symposium on Algorithmic Principles of Computer
  Systems}}. SIAM, \bibinfo{pages}{129--143}.
\newblock


\bibitem[\protect\citeauthoryear{Dhulipala, Shi, Tseng, Blelloch, and
  Shun}{Dhulipala et~al\mbox{.}}{2020b}]%
        {dhulipala2020graph}
\bibfield{author}{\bibinfo{person}{Laxman Dhulipala}, \bibinfo{person}{Jessica
  Shi}, \bibinfo{person}{Tom Tseng}, \bibinfo{person}{Guy~E. Blelloch}, {and}
  \bibinfo{person}{Julian Shun}.} \bibinfo{year}{2020}\natexlab{b}.
\newblock \showarticletitle{The {G}raph {B}ased {B}enchmark {S}uite ({GBBS})}.
  In \bibinfo{booktitle}{\emph{Proceedings of the 3rd Joint International
  Workshop on Graph Data Management Experiences \& Systems and Network Data
  Analytics}}. Article \bibinfo{articleno}{11}, \bibinfo{numpages}{8}~pages.
\newblock
\showISBNx{9781450380218}


\bibitem[\protect\citeauthoryear{Ding, Chen, Liu, Ding, Ye, Zhang, Kelly, Guo,
  Su, Harris, Qian, Ge, Fang, Xu, and Tong}{Ding et~al\mbox{.}}{2012}]%
        {ding2012atbionet}
\bibfield{author}{\bibinfo{person}{Yijun Ding}, \bibinfo{person}{Minjun Chen},
  \bibinfo{person}{Zhichao Liu}, \bibinfo{person}{Don Ding},
  \bibinfo{person}{Yanbin Ye}, \bibinfo{person}{Min Zhang},
  \bibinfo{person}{Reagan Kelly}, \bibinfo{person}{Li Guo},
  \bibinfo{person}{Zhenqiang Su}, \bibinfo{person}{Stephen~C. Harris},
  \bibinfo{person}{Feng Qian}, \bibinfo{person}{Weigong Ge},
  \bibinfo{person}{Hong Fang}, \bibinfo{person}{Xiaowei Xu}, {and}
  \bibinfo{person}{Weida Tong}.} \bibinfo{year}{2012}\natexlab{}.
\newblock \showarticletitle{{atBioNet}--an integrated network analysis tool for
  genomics and biomarker discovery}.
\newblock \bibinfo{journal}{\emph{BMC Genomics}}  \bibinfo{volume}{13}, Article
  \bibinfo{articleno}{325} (\bibinfo{year}{2012}).
\newblock


\bibitem[\protect\citeauthoryear{Ester, Kriegel, Sander, and Xu}{Ester
  et~al\mbox{.}}{1996}]%
        {ester1996density}
\bibfield{author}{\bibinfo{person}{Martin Ester}, \bibinfo{person}{Hans-Peter
  Kriegel}, \bibinfo{person}{J\"{o}rg Sander}, {and} \bibinfo{person}{Xiaowei
  Xu}.} \bibinfo{year}{1996}\natexlab{}.
\newblock \showarticletitle{A Density-Based Algorithm for Discovering Clusters
  in Large Spatial Databases with Noise}. In
  \bibinfo{booktitle}{\emph{Proceedings of the Second International Conference
  on Knowledge Discovery and Data Mining}}. \bibinfo{pages}{226–231}.
\newblock


\bibitem[\protect\citeauthoryear{Fortunato}{Fortunato}{2010}]%
        {fortunato2010community}
\bibfield{author}{\bibinfo{person}{Santo Fortunato}.}
  \bibinfo{year}{2010}\natexlab{}.
\newblock \showarticletitle{Community detection in graphs}.
\newblock \bibinfo{journal}{\emph{Physics Reports}} \bibinfo{volume}{486},
  \bibinfo{number}{3--5} (\bibinfo{year}{2010}), \bibinfo{pages}{75--174}.
\newblock


\bibitem[\protect\citeauthoryear{Gazit}{Gazit}{1991}]%
        {gazit1991optimal}
\bibfield{author}{\bibinfo{person}{Hillel Gazit}.}
  \bibinfo{year}{1991}\natexlab{}.
\newblock \showarticletitle{An optimal randomized parallel algorithm for
  finding connected components in a graph}.
\newblock \bibinfo{journal}{\emph{SIAM J. Comput.}} \bibinfo{volume}{20},
  \bibinfo{number}{6} (\bibinfo{year}{1991}), \bibinfo{pages}{1046--1067}.
\newblock


\bibitem[\protect\citeauthoryear{Gil, Matias, and Vishkin}{Gil
  et~al\mbox{.}}{1991}]%
        {gil1991towards}
\bibfield{author}{\bibinfo{person}{Joseph Gil}, \bibinfo{person}{Yossi Matias},
  {and} \bibinfo{person}{Uzi Vishkin}.} \bibinfo{year}{1991}\natexlab{}.
\newblock \showarticletitle{Towards a Theory of Nearly Constant Time Parallel
  Algorithms}. In \bibinfo{booktitle}{\emph{Proceedings of the 32nd Annual
  Symposium on Foundations of Computer Science}}. \bibinfo{pages}{698–710}.
\newblock
\showISBNx{0818624450}


\bibitem[\protect\citeauthoryear{Girvan and Newman}{Girvan and Newman}{2002}]%
        {girvan2002community}
\bibfield{author}{\bibinfo{person}{Michelle Girvan} {and} \bibinfo{person}{Mark
  E.~J. Newman}.} \bibinfo{year}{2002}\natexlab{}.
\newblock \showarticletitle{Community structure in social and biological
  networks}.
\newblock \bibinfo{journal}{\emph{Proceedings of the National Academy of
  Sciences}} \bibinfo{volume}{99}, \bibinfo{number}{12} (\bibinfo{year}{2002}),
  \bibinfo{pages}{7821--7826}.
\newblock
\showISSN{0027-8424}
\showeprint{https://www.pnas.org/content/99/12/7821.full.pdf}


\bibitem[\protect\citeauthoryear{Greene, Krishnan, Wong, Ricciotti, Zelaya,
  Himmelstein, Zhang, Hartmann, Zaslavsky, Sealfon, Chasman, FitzGerald,
  Dolinski, Grosser, and G.}{Greene et~al\mbox{.}}{2015}]%
        {greene2015understanding}
\bibfield{author}{\bibinfo{person}{Casey~S. Greene}, \bibinfo{person}{Arjun
  Krishnan}, \bibinfo{person}{Aaron~K. Wong}, \bibinfo{person}{Emanuela
  Ricciotti}, \bibinfo{person}{Rene~A. Zelaya}, \bibinfo{person}{Daniel~S.
  Himmelstein}, \bibinfo{person}{Ran Zhang}, \bibinfo{person}{Boris~M.
  Hartmann}, \bibinfo{person}{Elena Zaslavsky}, \bibinfo{person}{Stuart~C.
  Sealfon}, \bibinfo{person}{Daniel~I. Chasman}, \bibinfo{person}{Garret~A.
  FitzGerald}, \bibinfo{person}{Kara Dolinski}, \bibinfo{person}{Tilo Grosser},
  {and} \bibinfo{person}{Troyanskaya~Olga G.}} \bibinfo{year}{2015}\natexlab{}.
\newblock \showarticletitle{Understanding multicellular function and disease
  with human tissue-specific networks}.
\newblock \bibinfo{journal}{\emph{Nature Genetics}}  \bibinfo{volume}{47}
  (\bibinfo{year}{2015}), \bibinfo{pages}{569--576}.
\newblock


\bibitem[\protect\citeauthoryear{Hoeffding}{Hoeffding}{1963}]%
        {hoeffding1963probability}
\bibfield{author}{\bibinfo{person}{Wassily Hoeffding}.}
  \bibinfo{year}{1963}\natexlab{}.
\newblock \showarticletitle{Probability inequalities for sums of bounded random
  variables}.
\newblock \bibinfo{journal}{\emph{J. Amer. Statist. Assoc.}}
  \bibinfo{volume}{58}, \bibinfo{number}{301} (\bibinfo{year}{1963}),
  \bibinfo{pages}{13--30}.
\newblock


\bibitem[\protect\citeauthoryear{Huang, Sun, Han, Deng, Sun, and Liu}{Huang
  et~al\mbox{.}}{2010}]%
        {huang2010shrink}
\bibfield{author}{\bibinfo{person}{Jianbin Huang}, \bibinfo{person}{Heli Sun},
  \bibinfo{person}{Jiawei Han}, \bibinfo{person}{Hongbo Deng},
  \bibinfo{person}{Yizhou Sun}, {and} \bibinfo{person}{Yaguang Liu}.}
  \bibinfo{year}{2010}\natexlab{}.
\newblock \showarticletitle{{SHRINK}: A Structural Clustering Algorithm for
  Detecting Hierarchical Communities in Networks}. In
  \bibinfo{booktitle}{\emph{Proceedings of the 19th ACM International
  Conference on Information and Knowledge Management}}.
  \bibinfo{pages}{219--228}.
\newblock
\showISBNx{9781450300995}


\bibitem[\protect\citeauthoryear{Huang, Sun, Song, Deng, and Han}{Huang
  et~al\mbox{.}}{2013}]%
        {huang2012revealing}
\bibfield{author}{\bibinfo{person}{Jianbin Huang}, \bibinfo{person}{Heli Sun},
  \bibinfo{person}{Qinbao Song}, \bibinfo{person}{Hongbo Deng}, {and}
  \bibinfo{person}{Jiawei Han}.} \bibinfo{year}{2013}\natexlab{}.
\newblock \showarticletitle{Revealing Density-Based Clustering Structure from
  the Core-Connected Tree of a Network}.
\newblock \bibinfo{journal}{\emph{IEEE Transactions on Knowledge and Data
  Engineering}} \bibinfo{volume}{25}, \bibinfo{number}{8}
  (\bibinfo{year}{2013}), \bibinfo{pages}{1876–1889}.
\newblock
\showISSN{1041-4347}


\bibitem[\protect\citeauthoryear{Hubert and Arabie}{Hubert and Arabie}{1985}]%
        {hubert1985comparing}
\bibfield{author}{\bibinfo{person}{Lawrence Hubert} {and}
  \bibinfo{person}{Phipps Arabie}.} \bibinfo{year}{1985}\natexlab{}.
\newblock \showarticletitle{Comparing Partitions}.
\newblock \bibinfo{journal}{\emph{Journal of Classification}}
  \bibinfo{volume}{2} (\bibinfo{year}{1985}), \bibinfo{pages}{193--218}.
\newblock


\bibitem[\protect\citeauthoryear{{J{\'a}J{\'a}}}{{J{\'a}J{\'a}}}{1992}]%
        {jaja1992introduction}
\bibfield{author}{\bibinfo{person}{Joseph {J{\'a}J{\'a}}}.}
  \bibinfo{year}{1992}\natexlab{}.
\newblock \bibinfo{booktitle}{\emph{An Introduction to Parallel Algorithms}}.
\newblock \bibinfo{publisher}{Addison-Wesley}.
\newblock


\bibitem[\protect\citeauthoryear{Leskovec and Krevl}{Leskovec and
  Krevl}{2014}]%
        {snapnets}
\bibfield{author}{\bibinfo{person}{Jure Leskovec} {and} \bibinfo{person}{Andrej
  Krevl}.} \bibinfo{year}{2014}\natexlab{}.
\newblock \bibinfo{title}{{SNAP Datasets}: {Stanford} Large Network Dataset
  Collection}.
\newblock \bibinfo{howpublished}{\url{http://snap.stanford.edu/data}}.
\newblock


\bibitem[\protect\citeauthoryear{Li, Owen, and Zhang}{Li et~al\mbox{.}}{2012}]%
        {li2012one}
\bibfield{author}{\bibinfo{person}{Ping Li}, \bibinfo{person}{Art Owen}, {and}
  \bibinfo{person}{Cun-Hui Zhang}.} \bibinfo{year}{2012}\natexlab{}.
\newblock \showarticletitle{One Permutation Hashing}. In
  \bibinfo{booktitle}{\emph{Advances in Neural Information Processing Systems
  25}}. \bibinfo{pages}{3113--3121}.
\newblock


\bibitem[\protect\citeauthoryear{Lim, Ryu, Kwon, Jung, and Lee}{Lim
  et~al\mbox{.}}{2014}]%
        {lim2014linkscan}
\bibfield{author}{\bibinfo{person}{Sungsu Lim}, \bibinfo{person}{Seungwoo Ryu},
  \bibinfo{person}{Sejeong Kwon}, \bibinfo{person}{Kyomin Jung}, {and}
  \bibinfo{person}{Jae-Gil Lee}.} \bibinfo{year}{2014}\natexlab{}.
\newblock \showarticletitle{{LinkSCAN}*: Overlapping Community Detection Using
  the Link-Space Transformation}. In \bibinfo{booktitle}{\emph{IEEE 30th
  International Conference on Data Engineering}}. \bibinfo{pages}{292--303}.
\newblock


\bibitem[\protect\citeauthoryear{Lin, Yu, Han, and Liu}{Lin
  et~al\mbox{.}}{2010}]%
        {lin2010hierarchical}
\bibfield{author}{\bibinfo{person}{Cindy~Xide Lin}, \bibinfo{person}{Yintao
  Yu}, \bibinfo{person}{Jiawei Han}, {and} \bibinfo{person}{Bing Liu}.}
  \bibinfo{year}{2010}\natexlab{}.
\newblock \showarticletitle{Hierarchical Web-page Clustering via In-page and
  Cross-page Link Structures}. In \bibinfo{booktitle}{\emph{Proceedings of the
  14th Pacific-Asia Conference on Advances in Knowledge Discovery and Data
  Mining}}. \bibinfo{pages}{222–229}.
\newblock
\showISBNx{3642136710}


\bibitem[\protect\citeauthoryear{Liu, Shi, Ding, Kelly, Fang, and Tong}{Liu
  et~al\mbox{.}}{2011}]%
        {liu2011translating}
\bibfield{author}{\bibinfo{person}{Zhichao Liu}, \bibinfo{person}{Qiang Shi},
  \bibinfo{person}{Don Ding}, \bibinfo{person}{Reagan Kelly},
  \bibinfo{person}{Hong Fang}, {and} \bibinfo{person}{Weida Tong}.}
  \bibinfo{year}{2011}\natexlab{}.
\newblock \showarticletitle{Translating Clinical Findings into Knowledge in
  Drug Safety Evaluation - Drug Induced Liver Injury Prediction System
  ({DILIps})}.
\newblock \bibinfo{journal}{\emph{PLOS Computational Biology}}
  \bibinfo{volume}{7}, \bibinfo{number}{12} (\bibinfo{year}{2011}).
\newblock


\bibitem[\protect\citeauthoryear{Mai, Amer-Yahia, Assent, Birk, Dieu, Jacobsen,
  and Kristensen}{Mai et~al\mbox{.}}{2019}]%
        {mai2018scalable}
\bibfield{author}{\bibinfo{person}{Son~T. Mai}, \bibinfo{person}{Sihem
  Amer-Yahia}, \bibinfo{person}{Ira Assent}, \bibinfo{person}{Mathias~Skovgaard
  Birk}, \bibinfo{person}{Martin~Storgaard Dieu}, \bibinfo{person}{Jon
  Jacobsen}, {and} \bibinfo{person}{Jesper~M. Kristensen}.}
  \bibinfo{year}{2019}\natexlab{}.
\newblock \showarticletitle{Scalable Interactive Dynamic Graph Clustering on
  Multicore {CPU}s}.
\newblock \bibinfo{journal}{\emph{IEEE Transactions on Knowledge and Data
  Engineering}} \bibinfo{volume}{31}, \bibinfo{number}{7}
  (\bibinfo{year}{2019}), \bibinfo{pages}{1239--1252}.
\newblock


\bibitem[\protect\citeauthoryear{Martha, Liu, Guo, Su, Ye, Fang, Ding, Tong,
  and Xu}{Martha et~al\mbox{.}}{2011}]%
        {martha2011constructing}
\bibfield{author}{\bibinfo{person}{Venkata-Swamy Martha},
  \bibinfo{person}{Zhichao Liu}, \bibinfo{person}{Li Guo},
  \bibinfo{person}{Zhenqiang Su}, \bibinfo{person}{Yanbin Ye},
  \bibinfo{person}{Hong Fang}, \bibinfo{person}{Don Ding},
  \bibinfo{person}{Weida Tong}, {and} \bibinfo{person}{Xiaowei Xu}.}
  \bibinfo{year}{2011}\natexlab{}.
\newblock \showarticletitle{Constructing a robust protein-protein interaction
  network by integrating multiple public databases}. In
  \bibinfo{booktitle}{\emph{BMC Bioinformatics}}, Vol.~\bibinfo{volume}{12}.
  Article \bibinfo{articleno}{S7}.
\newblock


\bibitem[\protect\citeauthoryear{Mete, Tang, Xu, and Yuruk}{Mete
  et~al\mbox{.}}{2008}]%
        {mete2008structural}
\bibfield{author}{\bibinfo{person}{Mutlu Mete}, \bibinfo{person}{Fusheng Tang},
  \bibinfo{person}{Xiaowei Xu}, {and} \bibinfo{person}{Nurcan Yuruk}.}
  \bibinfo{year}{2008}\natexlab{}.
\newblock \showarticletitle{A structural approach for finding functional
  modules from large biological networks}. In \bibinfo{booktitle}{\emph{BMC
  Bioinformatics}}, Vol.~\bibinfo{volume}{9}. Article \bibinfo{articleno}{S19}.
\newblock


\bibitem[\protect\citeauthoryear{Newman}{Newman}{2004}]%
        {newman2004analysis}
\bibfield{author}{\bibinfo{person}{Mark E.~J. Newman}.}
  \bibinfo{year}{2004}\natexlab{}.
\newblock \showarticletitle{Analysis of weighted networks}.
\newblock \bibinfo{journal}{\emph{Physical Review E}}  \bibinfo{volume}{70}
  (\bibinfo{year}{2004}), \bibinfo{pages}{056131}.
\newblock
Issue 5.


\bibitem[\protect\citeauthoryear{Newman and Girvan}{Newman and Girvan}{2004}]%
        {newman2004finding}
\bibfield{author}{\bibinfo{person}{Mark E.~J. Newman} {and}
  \bibinfo{person}{Michelle Girvan}.} \bibinfo{year}{2004}\natexlab{}.
\newblock \showarticletitle{Finding and evaluating community structure in
  networks}.
\newblock \bibinfo{journal}{\emph{Physical Review E}}  \bibinfo{volume}{69}
  (\bibinfo{year}{2004}), \bibinfo{pages}{026113}.
\newblock
Issue 2.


\bibitem[\protect\citeauthoryear{Pan, Papailiopoulos, Oymak, Recht,
  Ramchandran, and Jordan}{Pan et~al\mbox{.}}{2015}]%
        {pan2015parallel}
\bibfield{author}{\bibinfo{person}{Xinghao Pan}, \bibinfo{person}{Dimitris
  Papailiopoulos}, \bibinfo{person}{Samet Oymak}, \bibinfo{person}{Benjamin
  Recht}, \bibinfo{person}{Kannan Ramchandran}, {and}
  \bibinfo{person}{Michael~I. Jordan}.} \bibinfo{year}{2015}\natexlab{}.
\newblock \showarticletitle{Parallel Correlation Clustering on Big Graphs}. In
  \bibinfo{booktitle}{\emph{Advances in Neural Information Processing
  Systems}}. \bibinfo{pages}{82--90}.
\newblock


\bibitem[\protect\citeauthoryear{Papadopoulos, Kompatsiaris, and
  Vakali}{Papadopoulos et~al\mbox{.}}{2009}]%
        {papadopoulos2009leveraging}
\bibfield{author}{\bibinfo{person}{Symeon Papadopoulos},
  \bibinfo{person}{Yiannis Kompatsiaris}, {and} \bibinfo{person}{Athena
  Vakali}.} \bibinfo{year}{2009}\natexlab{}.
\newblock \showarticletitle{Leveraging Collective Intelligence through
  Community Detection in Tag Networks}. In
  \bibinfo{booktitle}{\emph{Proceedings of Workshop on Collective Knowledge
  Capturing and Representation}}.
\newblock


\bibitem[\protect\citeauthoryear{Papadopoulos, Kompatsiaris, and
  Vakali}{Papadopoulos et~al\mbox{.}}{2010a}]%
        {papadopoulos2010graph}
\bibfield{author}{\bibinfo{person}{Symeon Papadopoulos},
  \bibinfo{person}{Yiannis Kompatsiaris}, {and} \bibinfo{person}{Athena
  Vakali}.} \bibinfo{year}{2010}\natexlab{a}.
\newblock \showarticletitle{A Graph-Based Clustering Scheme for Identifying
  Related Tags in Folksonomies}. In \bibinfo{booktitle}{\emph{Data Warehousing
  and Knowledge Discovery}}. \bibinfo{pages}{65--76}.
\newblock


\bibitem[\protect\citeauthoryear{Papadopoulos, Zigkolis, Tolias, Kalantidis,
  Mylonas, Kompatsiaris, and Vakali}{Papadopoulos et~al\mbox{.}}{2010b}]%
        {papadopoulos2010image}
\bibfield{author}{\bibinfo{person}{Symeon Papadopoulos},
  \bibinfo{person}{Christos Zigkolis}, \bibinfo{person}{Giorgos Tolias},
  \bibinfo{person}{Yannis Kalantidis}, \bibinfo{person}{Phivos Mylonas},
  \bibinfo{person}{Yiannis Kompatsiaris}, {and} \bibinfo{person}{Athena
  Vakali}.} \bibinfo{year}{2010}\natexlab{b}.
\newblock \showarticletitle{Image clustering through community detection on
  hybrid image similarity graphs}. In \bibinfo{booktitle}{\emph{IEEE
  International Conference on Image Processing}}. \bibinfo{pages}{2353--2356}.
\newblock


\bibitem[\protect\citeauthoryear{Raman}{Raman}{1990}]%
        {raman1990power}
\bibfield{author}{\bibinfo{person}{Rajeev Raman}.}
  \bibinfo{year}{1990}\natexlab{}.
\newblock \showarticletitle{The Power of Collision: Randomized Parallel
  Algorithms for Chaining and Integer Sorting}. In
  \bibinfo{booktitle}{\emph{Foundations of Software Technology and Theoretical
  Computer Science}}. \bibinfo{pages}{161--175}.
\newblock


\bibitem[\protect\citeauthoryear{Rossi and Ahmed}{Rossi and Ahmed}{2015}]%
        {networkrepository}
\bibfield{author}{\bibinfo{person}{Ryan~A. Rossi} {and}
  \bibinfo{person}{Nesreen~K. Ahmed}.} \bibinfo{year}{2015}\natexlab{}.
\newblock \showarticletitle{The Network Data Repository with Interactive Graph
  Analytics and Visualization}. In \bibinfo{booktitle}{\emph{Proceedings of the
  Twenty-Ninth AAAI Conference on Artificial Intelligence}}.
  \bibinfo{pages}{4292–4293}.
\newblock
\showISBNx{0262511290}


\bibitem[\protect\citeauthoryear{Schaeffer}{Schaeffer}{2007}]%
        {schaeffer2007graph}
\bibfield{author}{\bibinfo{person}{Satu~Elisa Schaeffer}.}
  \bibinfo{year}{2007}\natexlab{}.
\newblock \showarticletitle{Graph clustering}.
\newblock \bibinfo{journal}{\emph{Computer Science Review}}
  \bibinfo{volume}{1}, \bibinfo{number}{1} (\bibinfo{year}{2007}),
  \bibinfo{pages}{27--64}.
\newblock


\bibitem[\protect\citeauthoryear{Schinas, Papadopoulos, Kompatsiaris, and
  Mitkas}{Schinas et~al\mbox{.}}{2015a}]%
        {schinas2015visual}
\bibfield{author}{\bibinfo{person}{Manos Schinas}, \bibinfo{person}{Symeon
  Papadopoulos}, \bibinfo{person}{Yiannis Kompatsiaris}, {and}
  \bibinfo{person}{Pericles~A. Mitkas}.} \bibinfo{year}{2015}\natexlab{a}.
\newblock \showarticletitle{Visual Event Summarization on Social Media Using
  Topic Modelling and Graph-based Ranking Algorithms}. In
  \bibinfo{booktitle}{\emph{Proceedings of the 5th ACM International Conference
  on Multimedia Retrieval}}. \bibinfo{pages}{203--210}.
\newblock
\showISBNx{9781450332743}


\bibitem[\protect\citeauthoryear{Schinas, Papadopoulos, Petkos, Kompatsiaris,
  and Mitkas}{Schinas et~al\mbox{.}}{2015b}]%
        {schinas2015multimodal}
\bibfield{author}{\bibinfo{person}{Manos Schinas}, \bibinfo{person}{Symeon
  Papadopoulos}, \bibinfo{person}{Georgios Petkos}, \bibinfo{person}{Yiannis
  Kompatsiaris}, {and} \bibinfo{person}{Pericles~A. Mitkas}.}
  \bibinfo{year}{2015}\natexlab{b}.
\newblock \showarticletitle{Multimodal Graph-based Event Detection and
  Summarization in Social Media Streams}. In
  \bibinfo{booktitle}{\emph{Proceedings of the 23rd ACM International
  Conference on Multimedia}}. \bibinfo{pages}{189–192}.
\newblock
\showISBNx{9781450334594}


\bibitem[\protect\citeauthoryear{Shiokawa, Fujiwara, and Onizuka}{Shiokawa
  et~al\mbox{.}}{2015}]%
        {shiokawa2015scan++}
\bibfield{author}{\bibinfo{person}{Hiroaki Shiokawa}, \bibinfo{person}{Yasuhiro
  Fujiwara}, {and} \bibinfo{person}{Makoto Onizuka}.}
  \bibinfo{year}{2015}\natexlab{}.
\newblock \showarticletitle{{SCAN}++: Efficient Algorithm for Finding Clusters,
  Hubs and Outliers on Large-scale Graphs}.
\newblock \bibinfo{journal}{\emph{Proceedings of the VLDB Endowment}}
  \bibinfo{volume}{8}, \bibinfo{number}{11} (\bibinfo{year}{2015}),
  \bibinfo{pages}{1178–1189}.
\newblock
\showISSN{2150-8097}


\bibitem[\protect\citeauthoryear{Shrivastava and Li}{Shrivastava and
  Li}{2014}]%
        {shrivastava2014defense}
\bibfield{author}{\bibinfo{person}{Anshumali Shrivastava} {and}
  \bibinfo{person}{Ping Li}.} \bibinfo{year}{2014}\natexlab{}.
\newblock \showarticletitle{In Defense of {MinHash} Over {SimHash}}. In
  \bibinfo{booktitle}{\emph{Proceedings of the Seventeenth International
  Conference on Artificial Intelligence and Statistics}}.
  \bibinfo{pages}{886--894}.
\newblock


\bibitem[\protect\citeauthoryear{Shun and Blelloch}{Shun and Blelloch}{2014}]%
        {shun2014phase}
\bibfield{author}{\bibinfo{person}{Julian Shun} {and}
  \bibinfo{person}{Guy~Edward Blelloch}.} \bibinfo{year}{2014}\natexlab{}.
\newblock \showarticletitle{Phase-Concurrent Hash Tables for Determinism}. In
  \bibinfo{booktitle}{\emph{Proceedings of the 26th ACM Symposium on
  Parallelism in Algorithms and Architectures}}. \bibinfo{pages}{96–107}.
\newblock
\showISBNx{9781450328210}


\bibitem[\protect\citeauthoryear{Shun, Roosta-Khorasani, Fountoulakis, and
  Mahoney}{Shun et~al\mbox{.}}{2016}]%
        {shun2016parallel}
\bibfield{author}{\bibinfo{person}{Julian Shun}, \bibinfo{person}{Farbod
  Roosta-Khorasani}, \bibinfo{person}{Kimon Fountoulakis}, {and}
  \bibinfo{person}{Michael~W. Mahoney}.} \bibinfo{year}{2016}\natexlab{}.
\newblock \showarticletitle{Parallel Local Graph Clustering}.
\newblock \bibinfo{journal}{\emph{Proceedings of the VLDB Endowment}}
  \bibinfo{volume}{9}, \bibinfo{number}{12} (\bibinfo{year}{2016}),
  \bibinfo{pages}{1041--1052}.
\newblock


\bibitem[\protect\citeauthoryear{Shun and Tangwongsan}{Shun and
  Tangwongsan}{2015}]%
        {shun2015multicore}
\bibfield{author}{\bibinfo{person}{Julian Shun} {and} \bibinfo{person}{Kanat
  Tangwongsan}.} \bibinfo{year}{2015}\natexlab{}.
\newblock \showarticletitle{Multicore Triangle Computations Without Tuning}. In
  \bibinfo{booktitle}{\emph{IEEE 31st International Conference on Data
  Engineering}}. \bibinfo{pages}{149--160}.
\newblock


\bibitem[\protect\citeauthoryear{Stovall, Kockara, and Avci}{Stovall
  et~al\mbox{.}}{2015}]%
        {stovall2014gpuscan}
\bibfield{author}{\bibinfo{person}{Thomas~Ryan Stovall}, \bibinfo{person}{Sinan
  Kockara}, {and} \bibinfo{person}{Recep Avci}.}
  \bibinfo{year}{2015}\natexlab{}.
\newblock \showarticletitle{{GPUSCAN}: {GPU}-based Parallel Structural
  Clustering Algorithm for Networks}.
\newblock \bibinfo{journal}{\emph{IEEE Transactions on Parallel and Distributed
  Systems}} \bibinfo{volume}{26}, \bibinfo{number}{12} (\bibinfo{year}{2015}),
  \bibinfo{pages}{3381–3393}.
\newblock
\showISSN{1045-9219}


\bibitem[\protect\citeauthoryear{Takahashi, Shiokawa, and Kitagawa}{Takahashi
  et~al\mbox{.}}{2017}]%
        {takahashi2017scan}
\bibfield{author}{\bibinfo{person}{Tomokatsu Takahashi},
  \bibinfo{person}{Hiroaki Shiokawa}, {and} \bibinfo{person}{Hiroyuki
  Kitagawa}.} \bibinfo{year}{2017}\natexlab{}.
\newblock \showarticletitle{{SCAN-XP}: Parallel Structural Graph Clustering
  Algorithm on {I}ntel {X}eon {P}hi Coprocessors}. In
  \bibinfo{booktitle}{\emph{Proceedings of the 2nd International Workshop on
  Network Data Analytics}}. Article \bibinfo{articleno}{6}.
\newblock
\showISBNx{9781450349901}


\bibitem[\protect\citeauthoryear{Tolliver and Miller}{Tolliver and
  Miller}{2006}]%
        {tolliver2006graph}
\bibfield{author}{\bibinfo{person}{David~A. Tolliver} {and}
  \bibinfo{person}{Gary~L. Miller}.} \bibinfo{year}{2006}\natexlab{}.
\newblock \showarticletitle{Graph Partitioning by Spectral Rounding:
  Applications in Image Segmentation and Clustering}. In
  \bibinfo{booktitle}{\emph{Proceedings of the IEEE Computer Society Conference
  on Computer Vision and Pattern Recognition}}, Vol.~\bibinfo{volume}{1}.
  \bibinfo{pages}{1053–1060}.
\newblock
\showISBNx{0769525970}


\bibitem[\protect\citeauthoryear{Vishkin}{Vishkin}{2010}]%
        {vishkin2010thinking}
\bibfield{author}{\bibinfo{person}{Uzi Vishkin}.}
  \bibinfo{year}{2010}\natexlab{}.
\newblock \bibinfo{title}{Thinking in Parallel: Some Basic Data-Parallel
  Algorithms and Techniques}.
\newblock
\newblock


\bibitem[\protect\citeauthoryear{Wen, Qin, Zhang, Chang, and Lin}{Wen
  et~al\mbox{.}}{2017}]%
        {wen2017efficient}
\bibfield{author}{\bibinfo{person}{Dong Wen}, \bibinfo{person}{Lu Qin},
  \bibinfo{person}{Ying Zhang}, \bibinfo{person}{Lijun Chang}, {and}
  \bibinfo{person}{Xuemin Lin}.} \bibinfo{year}{2017}\natexlab{}.
\newblock \showarticletitle{Efficient structural graph clustering: an
  index-based approach}.
\newblock \bibinfo{journal}{\emph{Proceedings of the VLDB Endowment}}
  \bibinfo{volume}{11}, \bibinfo{number}{3} (\bibinfo{year}{2017}),
  \bibinfo{pages}{243–255}.
\newblock
\showISSN{2150-8097}


\bibitem[\protect\citeauthoryear{Wu, Gu, and Yu}{Wu et~al\mbox{.}}{2019}]%
        {wu2019dpscan}
\bibfield{author}{\bibinfo{person}{Changfa Wu}, \bibinfo{person}{Yu Gu}, {and}
  \bibinfo{person}{Ge Yu}.} \bibinfo{year}{2019}\natexlab{}.
\newblock \showarticletitle{{DPSCAN}: Structural Graph Clustering Based on
  Density Peaks}. In \bibinfo{booktitle}{\emph{Database Systems for Advanced
  Applications}}. \bibinfo{pages}{626--641}.
\newblock


\bibitem[\protect\citeauthoryear{Wu, Li, Chen, Gao, and Zhang}{Wu
  et~al\mbox{.}}{2020}]%
        {wu2020review}
\bibfield{author}{\bibinfo{person}{Wei Wu}, \bibinfo{person}{Bin Li},
  \bibinfo{person}{Ling Chen}, \bibinfo{person}{Junbin Gao}, {and}
  \bibinfo{person}{Chengqi Zhang}.} \bibinfo{year}{2020}\natexlab{}.
\newblock \showarticletitle{A Review for Weighted MinHash Algorithms}.
\newblock \bibinfo{journal}{\emph{IEEE Transactions on Knowledge and Data
  Engineering}} (\bibinfo{year}{2020}).
\newblock


\bibitem[\protect\citeauthoryear{Xu, Yuruk, Feng, and Schweiger}{Xu
  et~al\mbox{.}}{2007}]%
        {xu2007scan}
\bibfield{author}{\bibinfo{person}{Xiaowei Xu}, \bibinfo{person}{Nurcan Yuruk},
  \bibinfo{person}{Zhidan Feng}, {and} \bibinfo{person}{Thomas A.~J.
  Schweiger}.} \bibinfo{year}{2007}\natexlab{}.
\newblock \showarticletitle{{SCAN}: A Structural Clustering Algorithm for
  Networks}. In \bibinfo{booktitle}{\emph{Proceedings of the 13th ACM SIGKDD
  International Conference on Knowledge Discovery and Data Mining}}.
  \bibinfo{pages}{824--833}.
\newblock
\showISBNx{9781595936097}


\bibitem[\protect\citeauthoryear{Yin, Benson, Leskovec, and Gleich}{Yin
  et~al\mbox{.}}{2017}]%
        {yin2017local}
\bibfield{author}{\bibinfo{person}{Hao Yin}, \bibinfo{person}{Austin~R.
  Benson}, \bibinfo{person}{Jure Leskovec}, {and} \bibinfo{person}{David~F.
  Gleich}.} \bibinfo{year}{2017}\natexlab{}.
\newblock \showarticletitle{Local Higher-Order Graph Clustering}. In
  \bibinfo{booktitle}{\emph{Proceedings of the 23rd ACM SIGKDD International
  Conference on Knowledge Discovery and Data Mining}}.
  \bibinfo{pages}{555--564}.
\newblock


\bibitem[\protect\citeauthoryear{Yuruk, Mete, Xu, and Schweiger}{Yuruk
  et~al\mbox{.}}{2007}]%
        {yuruk2007divisive}
\bibfield{author}{\bibinfo{person}{Nurcan Yuruk}, \bibinfo{person}{Mutlu Mete},
  \bibinfo{person}{Xiaowei Xu}, {and} \bibinfo{person}{Thomas A.~J.
  Schweiger}.} \bibinfo{year}{2007}\natexlab{}.
\newblock \showarticletitle{A Divisive Hierarchical Structural Clustering
  Algorithm for Networks}. In \bibinfo{booktitle}{\emph{Proceedings of the
  Seventh IEEE International Conference on Data Mining Workshops}}.
  \bibinfo{pages}{441--448}.
\newblock
\showISBNx{0769530338}


\bibitem[\protect\citeauthoryear{Yuruk, Mete, Xu, and Schweiger}{Yuruk
  et~al\mbox{.}}{2009}]%
        {yuruk2009ahscan}
\bibfield{author}{\bibinfo{person}{Nurcan Yuruk}, \bibinfo{person}{Mutlu Mete},
  \bibinfo{person}{Xiaowei Xu}, {and} \bibinfo{person}{Thomas A.~J.
  Schweiger}.} \bibinfo{year}{2009}\natexlab{}.
\newblock \showarticletitle{{AHSCAN}: Agglomerative Hierarchical Structural
  Clustering Algorithm for Networks}. In \bibinfo{booktitle}{\emph{Proceedings
  of the International Conference on Advances in Social Network Analysis and
  Mining}}. \bibinfo{pages}{72–77}.
\newblock
\showISBNx{9780769536897}


\bibitem[\protect\citeauthoryear{Zhao, Chen, and Xu}{Zhao
  et~al\mbox{.}}{2017}]%
        {zhao2017anyscan}
\bibfield{author}{\bibinfo{person}{Weizhong Zhao}, \bibinfo{person}{Gang Chen},
  {and} \bibinfo{person}{Xiaowei Xu}.} \bibinfo{year}{2017}\natexlab{}.
\newblock \showarticletitle{{AnySCAN}: An Efficient Anytime Framework with
  Active Learning for Large-scale Network Clustering}. In
  \bibinfo{booktitle}{\emph{IEEE International Conference on Data Mining}}.
  \bibinfo{pages}{665--674}.
\newblock
\showISSN{2374-8486}


\bibitem[\protect\citeauthoryear{Zhao, Martha, and Xu}{Zhao
  et~al\mbox{.}}{2013}]%
        {zhao2013pscan}
\bibfield{author}{\bibinfo{person}{Weizhong Zhao},
  \bibinfo{person}{Venkataswamy Martha}, {and} \bibinfo{person}{Xiaowei Xu}.}
  \bibinfo{year}{2013}\natexlab{}.
\newblock \showarticletitle{{PSCAN}: A parallel structural clustering algorithm
  for big networks in {MapReduce}}. In \bibinfo{booktitle}{\emph{Proceedings of
  the IEEE 27th International Conference on Advanced Information Networking and
  Applications}}. \bibinfo{pages}{862--869}.
\newblock
\showISBNx{9780769549538}


\bibitem[\protect\citeauthoryear{Zhou and Wang}{Zhou and Wang}{2016}]%
        {zhou2015sparkscan}
\bibfield{author}{\bibinfo{person}{Qijun Zhou} {and} \bibinfo{person}{Jingbin
  Wang}.} \bibinfo{year}{2016}\natexlab{}.
\newblock \showarticletitle{{SparkSCAN}: A Structure Similarity Clustering
  Algorithm on {S}park}. In \bibinfo{booktitle}{\emph{Big Data Technology and
  Applications}}. \bibinfo{pages}{163--177}.
\newblock
\showISBNx{978-981-10-0457-5}


\end{thebibliography}

\end{document}